\title{
Symbolic Solving of\newline
Extended Regular Expression Inequalities\newline
\textnormal{{Technical Report}}
}
\titlerunning{Regular Expression Inequalities}
\author{Matthias Keil}
\author{Peter Thiemann}
\affil{Institute for Computer Science\\University of Freiburg\\Freiburg, Germany\\
  \texttt{\{keilr,thiemann\}@informatik.uni-freiburg.de}
}
\authorrunning{M. Keil and P. Thiemann} 
\subjclass{F.4.3 Formal Languages}
\keywords{Extended Regular Expressions, Containment, Infinite Alphabtes, Infinite Character Sets}
\begin{document}

\maketitle

\newcommand{\todoInline}[1]{\todo[inline,color=orange!40]{#1}}

\newcommand{\todoError}[1]{\todo[inline,color=red!40]{#1}}
\newcommand{\todoNotice}[1]{\todo[inline,color=green!40]{#1}}

\newcommand{\todoApprove}[1]{\todo[inline,color=violet!40]{approve: #1}}

\newcommand{\todoInsert}[2][]{\todo[inline,color=green!40, #1]{#2}}
\newcommand{\todoExplain}[1]{\todo[inline,color=blue!40]{#1}}

\newcommand{\todoAddref}{\todo[inline,color=red!40]{Add reference.}}
\newcommand{\todoRewrite}[1]{\todo[inline,color=green!40]{#1}}

\newcommand{\todoDefine}[2][] {\todo[inline,color=red, #1]{#2}}

\newcommand\todoTask[2][]{\todo[color=purple, #1]{#2}}

% __  __                          
%|  \/  |                         
%| \  / | __ _  ___ _ __ ___  ___ 
%| |\/| |/ _` |/ __| '__/ _ \/ __|
%| |  | | (_| | (__| | | (_) \__ \
%|_|  |_|\__,_|\___|_|  \___/|___/
%
% Efficient Solving of Regular Expression Inequalities
%
% Copyright (c) 2014, Proglang, University of Freiburg.
%  http://proglang.informatik.uni-freiburg.de/
% All rights reserved.
%
% Author Matthias Keil
%  http://www.informatik.uni-freiburg.de/~keilr/

%  ___                          
% / __|___ _ __  _ __  ___ _ _  
%| (__/ _ \ '  \| '  \/ _ \ ' \ 
% \___\___/_|_|_|_|_|_\___/_||_|

% (E)xtended (R)egular (E)xpressions
\newcommand{\ERE}{ERE}%{\emph{ERE}}
% (R)egular (E)xpressions
\newcommand{\RE}{RE}%{\emph{RE}}

% Rule
\newcommand{\Rule}[1]{\textsc{#1}}

% Entails
\newcommand{\entails}{\vdash}

% Powerset
\newcommand{\powerset}[1]{\wp(#1)}

% Dom
\newcommand{\dom}{\textit{dom}}

% Decalaration
\newcommand{\Label}[1]{\textit{#1}}

\newcommand\nat{\mathbb{N}}

\newcommand{\df}{:=}

% IH subseteq
\newcommand{\subseteqIH}{\overset{\textsf{IH}}{\subseteq}}
% IH supseteq
\newcommand{\supseteqIH}{\overset{\textsf{IH}}{\supseteq}}
% IH eq
\newcommand{\eqIH}{\overset{\textsf{IH}}{=}}

% ___               _            ___                        _             
%| _ \___ __ _ _  _| |__ _ _ _  | __|_ ___ __ _ _ ___ _____(_)___ _ _  ___
%|   / -_) _` | || | / _` | '_| | _|\ \ / '_ \ '_/ -_|_-<_-< / _ \ ' \(_-<
%|_|_\___\__, |\_,_|_\__,_|_|   |___/_\_\ .__/_| \___/__/__/_\___/_||_/__/
%        |___/                          |_|                               

% Symbols a,b
\newcommand{\symbola}{a}
\newcommand{\symbolb}{b}
\newcommand{\symbolc}{c}

% Literal l
%\newcommand{\literall}{L}%{\ell} % TODO
\newcommand{\literall}{\seta}%{\ell}
\newcommand\literalset{\mathfrak{L}}

% Symbol-sets A,B
\newcommand{\seta}{A}
\newcommand{\setb}{B}
\newcommand{\setc}{C}

% Inverted Symbol-sets A,B
\newcommand{\inva}{\inv{A}}
\newcommand{\invb}{\inv{B}}
\newcommand{\invc}{\inv{C}}

% Word-sets A,B
\newcommand{\setu}{U}
\newcommand{\setv}{V}
\newcommand{\setw}{W}

% Literal-sets L
\newcommand{\setl}{L}

% Words u,v,w
\newcommand{\wordu}{u}
\newcommand{\wordv}{v}
\newcommand{\wordw}{w}

% Regex r,s
\newcommand{\regexr}{r}
\newcommand{\regexs}{s}
\newcommand{\regext}{t}

% Alphabet
\newcommand{\regexAlphabet}{\Sigma}
\newcommand{\regexWords}{\regexAlphabet^*}
\newcommand{\regexSets}{\mathcal{U}}
\newcommand{\regexLang}{\mathcal{L}}

% Character Sets ?,[digit]
\newcommand{\regexWildcard}{?}
\newcommand{\regexDigit}{[0-9]}
\newcommand{\regexLChar}{[a-z]}
\newcommand{\regexUChar}{[A-Z]}
\newcommand{\regexChar}{[a-zA-Z]}
\newcommand{\regexAlpha}{[a-zA-Z0-9]}

% Syntax (Kleene Regular Expressions)
\newcommand{\regexNull}{\emptyset}
\newcommand{\regexEmpty}{\epsilon}

\newcommand{\regexSymbol}{\symbola}
\newcommand{\regexLiteral}{\literall}
\newcommand{\regexSet}{\seta}
\newcommand{\regexInv}{\inva}

\newcommand\regexStarOp{*}
\newcommand\regexOrOp{+}
\newcommand\regexConcatOp{\cdot}
\newcommand\regexAndOp{\&}
\newcommand\regexNegOp{!}

\newcommand{\regexStar}[1]{#1^\regexStarOp}
\newcommand{\regexOr}[2]{{#1{\regexOrOp}#2}}
\newcommand{\regexConcat}[2]{{#1{\regexConcatOp}#2}}

% Syntax (Extended Regular Expressions)
\newcommand{\regexAnd}[2]{{#1{\regexAndOp}#2}}
\newcommand{\regexNeg}[1]{{\regexNegOp}#1}

% First
\newcommand{\nextSymbols}{\textsf{first}}
\newcommand{\firstLiterals}{\textsf{next}}
\newcommand{\getNext}[1]{\nextSymbols(#1)}
\newcommand{\getFirst}[1]{\firstLiterals(#1)}
\newcommand{\getFirstPlus}[1]{\firstLiterals^{*}(#1)}

% Inversion
\newcommand{\inv}[1]{\overline{#1}}

% _    _ _                _    ___                     _   _             
%| |  (_) |_ ___ _ _ __ _| |  / _ \ _ __  ___ _ _ __ _| |_(_)___ _ _  ___
%| |__| |  _/ -_) '_/ _` | | | (_) | '_ \/ -_) '_/ _` |  _| / _ \ ' \(_-<
%|____|_|\__\___|_| \__,_|_|  \___/| .__/\___|_| \__,_|\__|_\___/_||_/__/
%                                  |_|                                   

\newcommand{\literalsubsetof}{\subseteq}
\newcommand{\literalcap}{\sqcap}
\newcommand{\literalbigcap}{\bigsqcap}
\newcommand{\literalin}{\in}
\newcommand{\literalcup}{\sqcup}
\newcommand{\literalbigcup}{\bigsqcup}

\newcommand{\literaljoin}{\Join}
\newcommand\literalleftjoin\ltimes

%   _       _                  _            
%  /_\ _  _| |_ ___ _ __  __ _| |_ ___ _ _  
% / _ \ || |  _/ _ \ '  \/ _` |  _/ _ \ ' \ 
%/_/ \_\_,_|\__\___/_|_|_\__,_|\__\___/_||_|

\newcommand{\automaton}{\mathcal{A}}

\newcommand{\automatonOf}[1]{\automaton\llbracket#1\rrbracket}
\newcommand{\dautomatonOf}[2]{\automaton^{\partial}_{#1}\llbracket#2\rrbracket}
\newcommand{\pautomatonOf}[2]{\automaton^{\Delta}_{#1}\llbracket#2\rrbracket}
\newcommand{\nautomatonOf}[2]{\automaton^{\nabla}_{#1}\llbracket#2\rrbracket}

\newcommand{\States}{\mathcal{Q}}
\newcommand{\Initial}{\mathcal{I}}
\newcommand{\Final}{\mathcal{F}}

\newcommand{\Transitions}{\delta}
\newcommand{\PTransitions}{\delta^{\Delta}}
\newcommand{\NTransitions}{\delta^{\nabla}}

\newcommand{\state}{q}

\newcommand{\states}[2]{\Transitions(#1,#2)}
\newcommand{\pstates}[2]{\PTransitions(#1,#2)}
\newcommand{\nstates}[2]{\NTransitions(#1,#2)}

%  ___                     _   _                            _ 
% / _ \ _ __  ___ _ _ __ _| |_(_)___ _ _  ___  __ _ _ _  __| |
%| (_) | '_ \/ -_) '_/ _` |  _| / _ \ ' \(_-< / _` | ' \/ _` |
% \___/| .__/\___|_| \__,_|\__|_\___/_||_/__/ \__,_|_||_\__,_|
%      |_|                                                    
% ___     _      _   _                        
%| _ \___| |__ _| |_(_)___ _ _  ___  ___ _ _  
%|   / -_) / _` |  _| / _ \ ' \(_-< / _ \ ' \ 
%|_|_\___|_\__,_|\__|_\___/_||_/__/ \___/_||_|                                             
%
% ___     _               _        _   ___               _          
%| __|_ _| |_ ___ _ _  __| |___ __| | | _ \___ __ _ _  _| |__ _ _ _ 
%| _|\ \ /  _/ -_) ' \/ _` / -_) _` | |   / -_) _` | || | / _` | '_|
%|___/_\_\\__\___|_||_\__,_\___\__,_| |_|_\___\__, |\_,_|_\__,_|_|  
%                                             |___/                 
% ___                        _             
%| __|_ ___ __ _ _ ___ _____(_)___ _ _  ___
%| _|\ \ / '_ \ '_/ -_|_-<_-< / _ \ ' \(_-<
%|___/_\_\ .__/_| \___/__/__/_\___/_||_/__/
%        |_|                               

% Concat
\newcommand{\concat}{\cdot}

% Language
\newcommand{\lang}[1]{\llbracket#1\rrbracket}
\newcommand{\leftquotient}[2]{#1^{-1}#2}

\newcommand{\langl}{\mathcal{L}}

% Nullabel
\newcommand{\nullable}{\nu}
\newcommand{\isNullable}[1]{\nullable(#1)}

% Derivation
\newcommand{\derivSymbol}{\partial}
\newcommand{\deriv}[2]{\partial_{#1}(#2)}

% Lower-Derivation
\newcommand{\nderiv}[2]{\nabla_{#1}(#2)}
% Upper-Derivation
\newcommand{\pderiv}[2]{\Delta_{#1}(#2)}

% Empty
\newcommand{\emptyReducible}{\textsf{empty}}
\newcommand{\isEmpty}[1]{\emptyReducible(#1)}

% Partial Complete
\newcommand{\indifferentReducible}{\textsf{ind}}
\newcommand{\isIndifferent}[1]{\indifferentReducible(#1)}

% Universal
\newcommand{\universalReducible}{\textsf{univ}}
\newcommand{\isUniversal}[1]{\universalReducsdible(#1)}

% CONTAINMENT CALCULUS
\newcommand{\ccCtx}{\Gamma}
\newcommand{\inCcCtx}[1]{#1\in\ccCtx}
\newcommand{\notInCcCtx}[1]{#1\not\in\ccCtx}

% SubSetOf/ SuperSetOf
\newcommand{\isSuperSetOf}[2]{#1 \sqsupseteq #2}
\newcommand{\isSubSetOf}[2]{#1 \sqsubseteq #2}

% Antimirov's containment calculus
\newcommand{\checkSuperSetOf}[2]{#1 \mathbin{\dot{\sqsupseteq}} #2}
\newcommand{\checkSubSetOf}[2]{#1 \mathbin{\dot{\sqsubseteq}} #2}
\newcommand\CC{\mathcal{CC}}
\newcommand\impliesCC{\vdash_\CC}

\newcommand{\isNotSuperSetOf}[2]{#1\not\sqsupseteq #2}
\newcommand{\isNotSubSetOf}[2]{#1\not\sqsubseteq #2}

\newcommand{\supersetof}{\sqsupseteq}
\newcommand{\subsetof}{\sqsubseteq}

% true / false % TODO
\newcommand{\true}{\textit{true}}
\newcommand{\false}{\textit{false}}

% Similar
\newcommand{\similar}{\approx}
\newcommand{\isSimilar}[2]{#1 \similar #2}

% Normalization
\newcommand{\norm}[2]{#1 \leadsto #2}

% __  __     _            _    _ _                _    
%|  \/  |___| |_ __ _ ___| |  (_) |_ ___ _ _ __ _| |___
%| |\/| / -_)  _/ _` |___| |__| |  _/ -_) '_/ _` | (_-<
%|_|  |_\___|\__\__,_|   |____|_|\__\___|_| \__,_|_/__/

\newcommand{\Terms}{\Pi}
\newcommand{\term}{\pi}

% ___          _           _   _             
%| __|_ ____ _| |_  _ __ _| |_(_)___ _ _  ___
%| _|\ V / _` | | || / _` |  _| / _ \ ' \(_-<
%|___|\_/\__,_|_|\_,_\__,_|\__|_\___/_||_/__/

% Estimation
\newcommand{\size}{\mathcal{S}} % using O may be confusing with big-O notation
\newcommand{\sizeOf}[1]{\size(#1)}

% Width
\newcommand{\widthOf}[1]{\|#1\|}

% cardinality
\newcommand{\cardinalityOf}[1]{|#1|}

%  ___         _        _                    _   
% / __|___ _ _| |_ __ _(_)_ _  _ __  ___ _ _| |_ 
%| (__/ _ \ ' \  _/ _` | | ' \| '  \/ -_) ' \  _|
% \___\___/_||_\__\__,_|_|_||_|_|_|_\___|_||_\__|
%                                                
%  ___      _         _         
% / __|__ _| |__ _  _| |_  _ ___
%| (__/ _` | / _| || | | || (_-<
% \___\__,_|_\__|\_,_|_|\_,_/__/

\newcommand{\RuleCCDelete}{(Cycle)}
\newcommand{\RuleCCUnfoldTrue}{(Unfold-True)} % -True
\newcommand{\RuleCCUnfoldFalse}{(Unfold-False)}
\newcommand{\RuleCCDisprove}{(Disprove)}

\newcommand{\RuleCCProveIdentity}{(Prove-Identity)}
\newcommand{\RuleCCProveNullable}{(Prove-Nullable)}
\newcommand{\RuleCCProveEmpty}{(Prove-Empty)}

\newcommand{\RuleCCDisproveEmpty}{(Disprove-Empty)}

\begin{abstract}
  This paper presents a new solution to the containment problem for \emph{extended regular expressions} that extends \emph{basic regular expressions} with intersection and complement operators and consider regular expressions on \emph{infinite alphabets} based on potentially \emph{infinite character sets}. Standard approaches deciding the containment do not take extended operators or character sets into account. The algorithm avoids the translation to an expression-equivalent automaton and provides a purely symbolic term rewriting systems for solving regular expressions inequalities.

  We give a new symbolic decision procedure for the containment problem based on Brzozowski's regular expression derivatives and Antimirov's rewriting approach to check containment. We generalize Brzozowski's syntactic derivative operator to two derivative operators that work with respect to (potentially infinite) representable character sets. 
\end{abstract}

% _____       _                 _            _   _             
%|_   _|     | |               | |          | | (_)            
%  | |  _ __ | |_ _ __ ___   __| |_   _  ___| |_ _  ___  _ __  
%  | | | '_ \| __| '__/ _ \ / _` | | | |/ __| __| |/ _ \| '_ \ 
% _| |_| | | | |_| | | (_) | (_| | |_| | (__| |_| | (_) | | | |
%|_____|_| |_|\__|_|  \___/ \__,_|\__,_|\___|\__|_|\___/|_| |_|
\section{Introduction}
\label{sec:Introduction}

Regular expressions have many applications in the context of software
development and information technology: text processing, program analysis, compiler
construction, query processing, and so on. Modern programming languages
either come with standard libraries for regular expression processing or they
provide built-in facilities (e.g., Perl, Ruby, and JavaScript). Many
of these implementations augment the basic regular operations $\regexOrOp$,
$\regexConcatOp$, and ${}^\regexStarOp$ (union, concatenation, and
Kleene star) with enhancements like character classes and wildcard
literals, cardinalities, sub-matching, intersection, or
complement.

Regular expressions (\RE) are advantageous in these domains because they provide a concise means to
encode many interesting problems. \RE{}s are well suited for verification applications, because
there are decision procedures for many problems involving them: the word problem
($w\in\lang{\regexr}$), emptiness ($\lang{\regexr}=\emptyset$), finiteness, containment
($\lang{\regexr}\subseteq\lang{\regexs}$), and equivalence ($\lang{\regexr}=\lang{\regexs}$). Here
we let $\regexr$ and $\regexs$ range over \RE{} and write $\lang{\cdot}$ for the function that maps
a regular expression to the regular language that it denotes. There are also effective constructions
for operations like union, intersection, complement, prefixes, suffixes, etc on regular languages.

Recent applications impose new demands on operations involving
regular expressions. The Unicode character set with its more than
1.1 million code points requires the ability to deal effectively with very large
character sets and hence character classes. Similarly, formalizing
access contracts for objects in scripting languages even requires
regular expressions over an infinite alphabet: in this application,
the alphabet itself is an infinite formal language (the language of field
names) and a ``character class'' (i.e., a set of field names) is described
by a regular expression \cite{KeilThiemann2013-Proxy,HeideggerBieniusaThiemann2012-popl}. Hence, a
``character class'' may also have infinitely many elements.

We study the containment problem for regular expressions with two enhancements.
First, we consider \emph{extended regular expressions} (\ERE) that
contain intersection and complement operators beyond the standard regular operators of union, concatenation,
and Kleene star. An \ERE{} also denotes a regular language but it can be much
more concise than a standard \RE.
Second, we consider \ERE{}s on any alphabet that is presented as an effective boolean algebra. This
extension encompasses some infinite alphabets like the set of all field names in a scripting
language. 

The first enhancement is known to be decidable, but we give a new
symbolic decision procedure based on Brzozowski's regular expression
derivatives~\cite{Brzozowski1964} and Antimirov's rewriting approach
to check containment~\cite{Antimirov1995}. The
second enhancement has been studied previously \cite{Watson1996,vanNoordGerdemann2001,Veanes2013}, but 
in the context of automata and finite state transducers. It has not been investigated on the level
of regular expressions and in particular not in the context of Brzozowski's and Antimirov's work. We
give sufficient conditions to ensure applicability of our modification of Brzozowski's and
Antimirov's approach to the containment problem while retaining decidability.

% ___     _      _          _  __      __       _   
%| _ \___| |__ _| |_ ___ __| | \ \    / /__ _ _| |__
%|   / -_) / _` |  _/ -_) _` |  \ \/\/ / _ \ '_| / /
%|_|_\___|_\__,_|\__\___\__,_|   \_/\_/\___/_| |_\_\

\label{sec:related_work}
\subsection{Related Work}

The practical motivation for considering this extension is drawn from
the authors' work on checking access contracts for objects in a scripting language at run 
time~\cite{KeilThiemann2013-Proxy}. In that work, an access contract specifies a set of access paths
that start from a specific anchor object. An access path is a word over the field names of the
objects traversed by the path and we specify a set of such paths by a regular expression on the
field names. We claim that such a regular expression draws from an infinite alphabet because a field
name in a scripting language is an arbitrary string (of characters). For succinctness, we specify
sets of field names using a second level of regular expressions on characters.

In our implementation, checking containment is required to reduce memory consumption.
If the same object is restricted by more than one contract, 
then we apply containment checking to remove
redundant contracts. In that previous work, contracts were limited to
basic regular expressions and the field-level expressions were limited
to disjunctions of literals. Applying the results of the present paper
enables us to lift both restrictions.

The standard approach to checking regular expression containment is
via translation to finite automata, which may involve an exponential
blowup, and then construction of a simulation (or a bisimulation for
equivalence)~\cite{HopcroftKarp1971}. A related approach based on
non-deterministic automata is given
by Bonchi and Pous~\cite{BonchiPous2013}.

The exponential blowup is due to the construction of a deterministic
automaton from the regular expression.
Thompson's construction~\cite{Thompson1968}, creates a
non-deterministic finite automaton with $\regexEmpty$-transitions
where the number of states and transitions is linear to the length of
the (standard) regular expression.
Glushkov's~\cite{Glushkov1961} and McNaughton and
Yamada's~\cite{McNaughtonYamada1960} position automaton computes a
$n+1$-state non-deterministic automaton with up to $n^2$ transitions from a $n$-symbol expression. They are the first to
use the notion of a first symbol.
Brzozowski's regular expression derivatives~\cite{Brzozowski1964} directly calculate a deterministic
automaton from an \ERE. Antimirov's partial derivative approach~\cite{Antimirov1996} computes a $n+1$-state 
non-deterministic automation, but again without intersection and complement. 
We are not aware of an extension of Glushkov's algorithm to extended regular expressions. 

Owens and other have implemented an extension of Brzozowski's approach with character classes and
wildcards~\cite{OwensReppyTuron2009}. 

Antimirov~\cite{Antimirov1995} also proposes a symbolic method for solving regular expression inequalities, based on
partial derivatives, with exponential worst-case run time. His 
\emph{containment calculus} is closely related to the simulation technique used by Hopcroft and Karp~\cite{HopcroftKarp1971} for proving
equivalence of automata. In fact, a decision procedure for containment of regular expressions leads to one for
equivalence and vice versa.
Ginzburg~\cite{Ginzburg1967} gives an equivalence procedure based on
Brzozowski derivatives.
Antimirov's original work does not consider
intersection and complement. Caron and
coworkers~\cite{CaronChamparnaudMignot2011} extend Antimirov's work to
\ERE{} using antichains, but the resulting procedure is very complex compared to ours. 

A shortcoming of all existing approaches is their restriction to
finite alphabets. Supporting both
makes a significant difference in practice: an iteration over the
alphabet $\regexAlphabet$ is feasible for small alphabets, but it is
impractical for very large alphabets (e.g., Unicode) or infinite ones
(e.g., another level of regular languages as for our
contracts). Furthermore, most regular expressions used in practice
contain character sets. We apply techniques developed for symbolic
finite automata to address these issues~\cite{Veanes2013}.

%  ___                  _            
% / _ \__ _____ _ ___ _(_)_____ __ __
%| (_) \ V / -_) '_\ V / / -_) V  V /
% \___/ \_/\___|_|  \_/|_\___|\_/\_/ 
\subsection{Overview}
This paper is organized as follows. In Section~\ref{sec:preliminaries}, we recall notations and concepts used in this
paper. Section~\ref{sec:characterlevel} introduces the notion of an effective boolean algebra for representing sets of
symbols abstractly. Section~\ref{sec:antimirovs-algorithm-containment} explains Antimirov's algorithm for checking
containment, which is the starting point of our work. Next, Section~\ref{sec:derivatives} defines two notions of
derivatives on regular expressions with respect to symbol sets. It continues to introduce the key notion of \emph{next
literals}, which ensures finiteness of our extension to Antimirov's algorithm. Section~\ref{sec:inequalities} contains
the heart of our extended algorithm, a deduction system that determines containment of extended regular expressions
along with a soundness proof.

%Because of space limitations, most proofs are moved to the appendix.

This paper concludes with an appendix
with further technical details, examples, and proofs of theorems.

%A technical report \cite{KeilThiemann2014-techrep} extends this paper by an appendix
%with further technical details, examples, and proofs of theorems.

% _____          _ _           _                  _           
%|  __ \        | (_)         (_)                (_)          
%| |__) | __ ___| |_ _ __ ___  _ _ __   __ _ _ __ _  ___  ___ 
%|  ___/ '__/ _ \ | | '_ ` _ \| | '_ \ / _` | '__| |/ _ \/ __|
%| |   | | |  __/ | | | | | | | | | | | (_| | |  | |  __/\__ \
%|_|   |_|  \___|_|_|_| |_| |_|_|_| |_|\__,_|_|  |_|\___||___/
\section{Regular Expressions}
\label{sec:preliminaries}

An \emph{alphabet} $\regexAlphabet$ is a denumerable, potentially infinite set of
symbols. $\regexWords$ is the set of all finite words over symbols from
$\regexAlphabet$ with $\regexEmpty$ denoting the empty word.
Let $\symbola,\symbolb,\symbolc\in\regexAlphabet$
range over symbols; $\wordu,\wordv,\wordw\in\regexWords$ over words;
%$\seta,\setb,\setc\subseteq\regexAlphabet$ over sets of symbols;
and
$\seta,\setb,\setc\subseteq\regexAlphabet$ over sets of symbols. 
%and 
%$\setu,\setv,\setw\subseteq\regexWords$ over sets of words.

Let $\langl, \langl'\subseteq\regexWords$ be languages.
The \emph{left quotient} of $\langl$ by a word $\wordu$, written
$\leftquotient{\wordu}{\langl}$, is the language $\{
\wordv~|~\wordu\wordv\in\langl \}$. It is immediate from the
definition that $\leftquotient{(\symbola\wordu)}{\langl} =
\leftquotient{\wordu}{(\leftquotient{\symbola}{\langl})}$ and that 
$\wordu\in\langl$ iff
$\regexEmpty\in\leftquotient{\wordu}{\langl}$. Furthermore,
$\langl\subseteq\langl'$ iff
$\leftquotient{\wordu}{\langl}\subseteq\leftquotient{\wordu}{\langl'}$
for all words $\wordu\in\regexWords$. The left quotient of one
language by another is defined by  $\leftquotient\langl{\langl'} = \{
\wordv~|~\wordu\wordv\in\langl', \wordu\in\langl \}$.
We abbreviate the concatenation of languages $\{\wordu\wordv~|~\wordu\in\langl,\wordv\in\langl'\}$
to $\regexConcat{\langl}{\langl'}$ and we write $\langl^*$ for the iteration $\regexConcat{\langl}{\langl^*}$.
We sometimes write $\overline\langl$ for the complement $\regexWords\setminus\langl$ and  
$\overline\seta$ for $\regexAlphabet\setminus\seta$.

An \emph{extended regular expression} (\ERE) on an alphabet
$\regexAlphabet$ is a syntactic phrase derivable from non-terminals $\regexr,\regexs,\regext$. It
comprises the the empty word, literals, union, concatenation, Kleene star, as well as
negation and intersection operators.
\begin{mathpar}
  \regexr,\regexs,\regext~\df{}~\regexEmpty~|~\literall
  ~|~\regexOr{\regexr}{\regexs}
  ~|~\regexConcat{\regexr}{\regexs}
  ~|~\regexStar{\regexr}~|~\regexAnd{\regexr}{\regexs}~|~\regexNeg{\regexr}
\end{mathpar}
Compared to standard definitions, a \emph{literal} is a set $\seta$ of symbols, which stands for an
abstract, possibly empty, character class. We write $\symbola$ instead of $\{\symbola\}$ for the
frequent case of a single letter literal. We consider
regular expressions up to similarity \cite{Brzozowski1964}, that is,
up to associativity and commutativity of the union operator with the empty set as identity.

The language $\lang{\regexr}\subseteq\regexWords$ of a regular
expression $\regexr$ is defined inductively by:
\begin{mathpar}
  \begin{array}{l@{~}l@{~}l}
    \lang{\regexEmpty} &=& \{\regexEmpty\}\\
    \lang{\regexSet} &=& \{\symbola~|~\symbola\in\regexSet\}\\
    &&
  \end{array}

  \begin{array}{l@{~}l@{~}l}
    \lang{\regexOr{\regexr}{\regexs}} &=& \lang{\regexr} \cup \lang{\regexs}\\
    \lang{\regexConcat{\regexr}{\regexs}} &=& \regexConcat{\lang{\regexr}}{ \lang{\regexs}} \\
    \lang{\regexStar{\regexr}} &=&
    \lang{{\regexr}}^*  \\
  \end{array}

  \begin{array}{l@{~}l@{~}l}
    \lang{\regexAnd{\regexr}{\regexs}} &=& \lang{\regexr} \cap \lang{\regexs}\\
    \lang{\regexNeg{\regexr}} &=& \overline{\lang{\regexr}} \\
    &&
  \end{array}
\end{mathpar}
For finite alphabets, $\lang\regexr$ is a regular language. For arbitrary alphabets, we
\emph{define} a language to be regular, if it is equal to $\lang\regexr$, for some {\ERE} $\regexr$.

We write $\isSubSetOf{\regexr}{\regexs}$ ($\regexr$ is
\emph{contained} in  $\regexs$) to express that $\lang{\regexr}\subseteq\lang{\regexs}$. 

The \emph{nullable} predicate $\isNullable{\regexr}$ indicates whether
$\lang{\regexr}$ contains the empty word, that is,
$\isNullable\regexr$ iff $\regexEmpty\in\lang{\regexr}$.  It is
defined inductively by: 
\begin{mathpar}
  \begin{array}{lll}
    \isNullable{\regexEmpty} &=& \true\\
    \isNullable{\regexSet} &=& \false\\
    &&
  \end{array}

  \begin{array}{lll} 
    \isNullable{\regexOr{\regexr}{\regexs}} &=& \isNullable{\regexr} \vee \isNullable{\regexs}\\
    \isNullable{\regexConcat{\regexr}{\regexs}} &=& \isNullable{\regexr} \wedge \isNullable{\regexs}\\
    \isNullable{\regexStar{\regexr}} &=& \true \\
  \end{array}

  \begin{array}{lll}
    \isNullable{\regexAnd{\regexr}{\regexs}} &=& \isNullable{\regexr} \wedge \isNullable{\regexs}\\
    \isNullable{\regexNeg{\regexr}} &=& \neg\isNullable{\regexr}\\
    &&
  \end{array}
\end{mathpar}
The \emph{Brzozowski derivative}
$\deriv{\symbola}{\regexr}$ of an expression
$\regexr$ w.r.t.\ a symbol $\symbola$ computes a regular expression
for the left quotient $\leftquotient{\symbola}{\lang{\regexr}}$ (see~\cite{Brzozowski1964}).
It is defined inductively as follows:
\begin{mathpar}
  \begin{array}{l@{~}l@{~}l}
    \deriv{\symbola}{\regexEmpty} &=& \regexNull\\
    \deriv{\symbola}{\seta} &=&
    \begin{cases}
      \regexEmpty, & \symbola\in\seta \\
      \regexNull,  & \symbola\notin\seta
    \end{cases}
    \\
    \deriv{\symbola}{\regexOr{\regexr}{\regexs}} &=& \regexOr{\deriv{\symbola}{\regexr}}{\deriv{\symbola}{\regexs}}\\
    \\
  \end{array}

  \begin{array}{l@{~}l@{~}l}
    \deriv{\symbola}{\regexConcat{\regexr}{\regexs}} &=& \begin{cases}
      \regexOr{\regexConcat{\deriv{\symbola}{\regexr}}{\regexs}}{\deriv{\symbola}{\regexs}}, &\isNullable{\regexr}\\
      \regexConcat{\deriv{\symbola}{\regexr}}{\regexs}, &\neg\isNullable{\regexr}
    \end{cases}
    \\
    \deriv{\symbola}{\regexStar{\regexr}} &=&
    \regexConcat{\deriv{\symbola}{\regexr}}{\regexStar{\regexr}} \\
    \deriv{\symbola}{\regexAnd{\regexr}{\regexs}} &=& \regexAnd{\deriv{\symbola}{\regexr}}{\deriv{\symbola}{\regexs}}\\
    \deriv{\symbola}{\regexNeg{\regexr}} &=& \regexNeg{\deriv{\symbola}{\regexr}}\\
  \end{array}
\end{mathpar}
The case for the set literal $\seta$ generalizes Brzozowski's
definition. The definition is extended to words by
$\deriv{\symbola\wordu}{\regexr}=\deriv{\wordu}{\deriv{\symbola}{\regexr}}$
and $\deriv{\regexEmpty}{\regexr}=\regexr$. Hence,
$\wordu\in\lang\regexr$ iff $\regexEmpty\in\lang{\deriv\wordu\regexr}$.

%  _____ _                          _                   _                _ 
% / ____| |                        | |                 | |              | |
%| |    | |__   __ _ _ __ __ _  ___| |_ ___ _ __ ______| | _____   _____| |
%| |    | '_ \ / _` | '__/ _` |/ __| __/ _ \ '__|______| |/ _ \ \ / / _ \ |
%| |____| | | | (_| | | | (_| | (__| ||  __/ |         | |  __/\ V /  __/ |
% \_____|_| |_|\__,_|_|  \__,_|\___|\__\___|_|         |_|\___| \_/ \___|_|
\section{Representing Sets of Symbols}
\label{sec:characterlevel}

The definition of an \ERE{} in Section~\ref{sec:preliminaries} just
states that a literal is a set of symbols
\mbox{$\seta\subseteq\regexAlphabet$}. However, to define tractable
algorithms, we require that $\seta$ is an element of an
effective boolean algebra \cite{Veanes2013} $(U, \sqcup,
\sqcap, \inv{\cdot}, \bot, \top)$ where $U \subseteq
\powerset\regexAlphabet$ is closed under the boolean operations. Here
$\sqcup$ and $\sqcap$ denote 
union and intersection of symbol sets, $\inv{\cdot}$ the complement,
and $\bot$ and $\top$ the empty set and the full set $\regexAlphabet$,
respectively. In this algebra, we need to be able to decide
equality of sets (hence the term \emph{effective}) and to represent
singleton symbols. 
\begin{itemize}
  \item For finite (small) alphabets, we may just take $U =
    \powerset{\regexAlphabet}$. A set of symbols may be enumerated
    and ranges of symbols may be represented by character classes, as
    customarily supported in regular expression
    implementations. Alternatively, a bitvector representation may be used.
  \item 
    If the alphabet is infinite (or just too large), then the boolean
    algebra of finite and cofinite sets of symbols is the basis for a
    suitable representation. That is, the set $U = \{\seta\in\powerset{\regexAlphabet}
    \mid \seta \text{ finite} \vee \inv{\seta}
    \text{ finite} \}$ is effectively closed under the boolean operations.
  \item
    In our application to checking access contracts in scripting
    languages~\cite{KeilThiemann2013-Proxy}, the alphabet itself is a set of words (the field names of
    objects) composed from another set $\Gamma$ of symbols:
    $\regexAlphabet \subseteq \powerset{\Gamma^*}$. To obtain an
    effective boolean algebra, we choose the set $U = \{ \seta \subseteq \powerset{\Gamma^*} \mid \seta \text{ is regular}\}$,
    which is effectively closed under the boolean operations.
  \item
    Sets of symbols may also be represented by formulas drawn from a
    decidable first-order theory over a (finite or infinite) alphabet.
    For example, the character range
    \texttt{[a-z]} would be represented by the formula $x\ge\verb|'a'|
    \wedge x\le\verb|'z'|$. 
    In this case, the boolean operations get mapped to the disjunction,
    conjunction, or negation of predicates; bottom and top are false and true,
    respectively. An SMT solver can decide equality and subset constraints.
    This approach has been demonstrated to be
    effective for very large character sets in the work on symbolic finite
    automata~\cite{Veanes2013}.
\end{itemize}
The rest of this paper is generic with respect to the choice
of an effective boolean algebra. 

%  _____ _               _    _             
% / ____| |             | |  (_)            
%| |    | |__   ___  ___| | ___ _ __   __ _ 
%| |    | '_ \ / _ \/ __| |/ / | '_ \ / _` |
%| |____| | | |  __/ (__|   <| | | | | (_| |
% \_____|_| |_|\___|\___|_|\_\_|_| |_|\__, |
%                                      __/ |
%                                     |___/ 
%  _____            _        _                            _   
% / ____|          | |      (_)                          | |  
%| |     ___  _ __ | |_ __ _ _ _ __  _ __ ___   ___ _ __ | |_ 
%| |    / _ \| '_ \| __/ _` | | '_ \| '_ ` _ \ / _ \ '_ \| __|
%| |___| (_) | | | | || (_| | | | | | | | | | |  __/ | | | |_ 
% \_____\___/|_| |_|\__\__,_|_|_| |_|_| |_| |_|\___|_| |_|\__|

\section{Antimirov's algorithm for checking containment}
\label{sec:antimirovs-algorithm-containment}

Given two regular expressions $\regexr$, $\regexs$, the
\emph{containment problem} asks whether
$\isSubSetOf{\regexr}{\regexs}$.
This problem is decidable using standard techniques 
from automata theory: construct a deterministic finite
automaton for $\regexAnd\regexr{\regexNeg\regexs}$ and check it for
emptiness. The drawback of this approach is the expensive construction
of the automaton. In general, this expense cannot be avoided because
problem is PSPACE-complete 
\cite{HuntRosenkrantzSzymanski1976,JiangRavikumar1993,MeyerStockmeyer1972}.

Antimirov~\cite{Antimirov1995} proposed an algorithm for deciding
containment of standard regular expressions (without intersection and
negation) that is based on rewriting of inequalities. His algorithm
has the same asymptotic complexity as the automata construction, but it can fail early and is
therefore better behaved in practice. We phrase the algorithm in
terms of Brzozowski derivatives to avoid introducing Antimirov's
notion of partial derivatives.
\begin{theorem}[{Containment~\cite[Proposition 7(2)]{Antimirov1995}}]\label{thm:containment-antimirov}
  For regular expressions $\regexr$ and $\regexs$,
  \begin{displaymath}
    \isSubSetOf{\regexr}{\regexs}
    \Leftrightarrow
    (\forall\wordu\in\regexWords)~%
    \isSubSetOf{\deriv{\wordu}\regexr}{\deriv{\wordu}\regexs}.
  \end{displaymath}
\end{theorem}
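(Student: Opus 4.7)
The plan is to reduce the statement to two facts already available in the preliminaries: the characterization of language containment via left quotients, and the fact that the Brzozowski derivative computes a representation of the left quotient. Once both are in hand, the theorem follows almost immediately.

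First, I would prove the \emph{derivative correctness lemma}: for every \ERE{} $\regexr$ and every word $\wordu\in\regexWords$,
\[
  \lang{\deriv{\wordu}{\regexr}} = \leftquotient{\wordu}{\lang{\regexr}}.
\]
The single-symbol case $\lang{\deriv{\symbola}{\regexr}} = \leftquotient{\symbola}{\lang{\regexr}}$ is proved by structural induction on $\regexr$, following the clauses of the derivative definition. The cases for $\regexEmpty$, $\seta$, union, concatenation (split on $\isNullable\regexr$), and Kleene star are the classical Brzozowski calculations, adapted trivially since set-valued literals only generalize the single-symbol case. The cases for $\regexAnd\regexr\regexs$ and $\regexNeg\regexr$ use that left quotients distribute over intersection and complement: $\leftquotient{\symbola}{(\lang\regexr\cap\lang\regexs)} = \leftquotient{\symbola}{\lang\regexr}\cap\leftquotient{\symbola}{\lang\regexs}$ and $\leftquotient{\symbola}{(\regexWords\setminus\lang\regexr)} = \regexWords\setminus\leftquotient{\symbola}{\lang\regexr}$. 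The extension to words then proceeds by induction on $\wordu$, using the identity $\leftquotient{(\symbola\wordu)}{\langl} = \leftquotient{\wordu}{(\leftquotient{\symbola}{\langl})}$ already recorded in Section~\ref{sec:preliminaries} together with the recursive clause $\deriv{\symbola\wordu}{\regexr}=\deriv{\wordu}{\deriv{\symbola}{\regexr}}$.

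With this lemma, the theorem is immediate. For the ``$\Rightarrow$'' direction, assume $\lang\regexr\subseteq\lang\regexs$. By the left-quotient characterization noted in the preliminaries, $\leftquotient{\wordu}{\lang\regexr}\subseteq\leftquotient{\wordu}{\lang\regexs}$ for every $\wordu\in\regexWords$, and by the correctness lemma this rewrites to $\lang{\deriv\wordu\regexr}\subseteq\lang{\deriv\wordu\regexs}$, i.e.\ $\isSubSetOf{\deriv\wordu\regexr}{\deriv\wordu\regexs}$. For the ``$\Leftarrow$'' direction, specialize the hypothesis to $\wordu=\regexEmpty$; since $\deriv{\regexEmpty}{\regexr}=\regexr$ and $\deriv{\regexEmpty}{\regexs}=\regexs$, the conclusion $\isSubSetOf\regexr\regexs$ falls out directly.

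The only step requiring real work is the derivative correctness lemma, and within it the boolean cases $\regexAnd\regexr\regexs$ and $\regexNeg\regexr$ are the main obstacle, since these are the clauses that go beyond Brzozowski's original result and depend on left quotients commuting with intersection and complement. The commutation is elementary set theory, but it must be spelled out to make the induction go through for \ERE{}s. Everything else is routine rewriting along the definitions.
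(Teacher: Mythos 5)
Your proof is correct, but note that the paper itself never proves Theorem~\ref{thm:containment-antimirov}: it is imported verbatim from Antimirov~\cite{Antimirov1995} (Proposition~7(2)), and the two ingredients you use are exactly the ones the paper records without proof in Section~\ref{sec:preliminaries} --- that the Brzozowski derivative computes the left quotient (attributed to~\cite{Brzozowski1964}) and that $\langl\subseteq\langl'$ iff $\leftquotient{\wordu}{\langl}\subseteq\leftquotient{\wordu}{\langl'}$ for all $\wordu$. So your proposal is best read as discharging the obligations the paper delegates to the literature: the structural induction for $\lang{\deriv{\symbola}{\regexr}}=\leftquotient{\symbola}{\lang{\regexr}}$ (where, incidentally, the intersection and complement cases are already covered by Brzozowski's 1964 paper, which treats boolean combinations; only the set-valued literals are this paper's generalization), the lift to words via $\deriv{\symbola\wordu}{\regexr}=\deriv{\wordu}{\deriv{\symbola}{\regexr}}$, and the two-line conclusion, whose $\Leftarrow$ direction via $\wordu=\regexEmpty$ is the right move. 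The nearest proof the paper does contain is for its strengthened variant, Theorem~\ref{thm:symbol-containment} (Appendix~\ref{sec:proof-semantic-containment}), and it runs differently: rather than proving correctness of iterated derivatives as word quotients, it uses the nullability characterization of membership, $\wordw\in\lang{\regexr}$ iff $\isNullable{\deriv{\wordw}{\regexr}}$ (Lemma~\ref{thm:word-inclusion}), and splits the words of $\lang{\regexr}$ into $\regexEmpty$ and $\symbola\wordu$, peeling off a single symbol; that decomposition is what produces the extra conjunct $\isNullable{\regexr}\Rightarrow\isNullable{\regexs}$ and the restriction of the quantifier to $\getNext{\regexr}$, which the algorithm needs. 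Your route buys a self-contained proof of the classical, whole-alphabet statement; the paper's route buys the sharper, algorithmically useful one. Both ultimately rest on the same core fact, namely single-symbol derivative correctness for the extended operators.
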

Antimirov's algorithm applies this theorem exhaustively to an
inequality $\checkSubSetOf\regexr\regexs$ (i.e., a proposed
containment) to generate
all pairs $\checkSubSetOf{\deriv\wordu\regexr}{\deriv\wordu\regexs}$ of
iterated derivatives until it finds a
contradiction or saturation. More precisely, Antimirov defines a
\emph{containment calculus} $\CC$ which works on sets $S$ of atoms, where an
atom is either an inequality $\checkSubSetOf\regexr\regexs$ or a boolean
constant $\true$ or $\false$. It consists of the rule
\RefTirName{CC-Disprove} which infers $\false$ from a trivially
inconsistent inequality and the rule \RefTirName{CC-Unfold} that applies
Theorem~\ref{thm:containment-antimirov} to generate new inequalities.
\begin{mathpar}
  \infer[CC-Disprove]
  {\isNullable\regexr \wedge \neg\isNullable\regexs}
  {\checkSubSetOf\regexr\regexs \impliesCC \false}

  \infer[CC-Unfold]
  {\isNullable\regexr \Rightarrow \isNullable\regexs}
  {\checkSubSetOf\regexr\regexs \impliesCC
    \{
      \checkSubSetOf{\deriv{\symbola}\regexr}{\deriv{\symbola}\regexs}
      \mid
      \symbola\in\regexAlphabet
    \}
  }
\end{mathpar}
An inference in the calculus for checking whether
$\isSubSetOf{\regexr_0}{\regexs_0}$ is a sequence $S_0 \impliesCC S_1 
\impliesCC S_2 \impliesCC \dots$ where $S_0 =
\{\checkSubSetOf{\regexr_0}{\regexs_0}\}$ and $S_{i+1}$ is an extension of $S_i$
by selecting an inequality in $S_i$ and adding the consequences of
applying one of the $\CC$ rules to it. That is, if
$\checkSubSetOf\regexr\regexs \in S_i$ and
$\checkSubSetOf\regexr\regexs \impliesCC S$, then $S_{i+1} = S_i \cup S$.

Antimirov argues \cite[Theorem~8]{Antimirov1995} that this algorithm is sound and complete by proving
(using Theorem~\ref{thm:containment-antimirov})
that $\isSubSetOf\regexr\regexs$ does not hold if and only if a set of
atoms containing $\false$ is derivable from
$\checkSubSetOf\regexr\regexs$.
The algorithm terminates because there are only finitely many
different inequalities derivable from $\checkSubSetOf\regexr\regexs$
using rule \RefTirName{CC-Unfold}. 

The containment calculus $\CC$ has two drawbacks. First, the
choice of an inequality for the next inference step is
nondeterministic. Second, an adaptation to a setting with an infinite alphabet seems
doomed because rule \RefTirName{CC-Unfold} requires us to compute the derivative for infinitely many
$\symbola\in\regexAlphabet$ at each application. We address the second
drawback next.

% _____            _            _   _                              
%|  __ \          (_)          | | (_)                             
%| |  | | ___ _ __ ___   ____ _| |_ ___   _____  ___    ___  _ __  
%| |  | |/ _ \ '__| \ \ / / _` | __| \ \ / / _ \/ __|  / _ \| '_ \ 
%| |__| |  __/ |  | |\ V / (_| | |_| |\ V /  __/\__ \ | (_) | | | |
%|_____/ \___|_|  |_| \_/ \__,_|\__|_| \_/ \___||___/  \___/|_| |_|
%                                                                  
%                                                                  
% _      _ _                 _     
%| |    (_) |               | |    
%| |     _| |_ ___ _ __ __ _| |___ 
%| |    | | __/ _ \ '__/ _` | / __|
%| |____| | ||  __/ | | (_| | \__ \
%|______|_|\__\___|_|  \__,_|_|___/

\section{Derivatives on Literals}
\label{sec:derivatives}

In this section, we develop a variant of
Theorem~\ref{thm:containment-antimirov} that enables us to define an
\RefTirName{CC-Unfold} rule that is guaranteed to add finitely many
atoms, even if the alphabet is infinite.
First, we observe that we may restrict the symbols considered
in rule \RefTirName{CC-Unfold} to initial symbols of the left hand side
of an inequality.
\begin{definition}[First]
  Let $\getNext{\regexr} \df
  \{\symbola~|~\symbola\wordw\in\lang{\regexr}\}$
  be the set of initial symbols derivable from regular expression $\regexr$.
\end{definition}
Clearly,
$(\forall\symbola\in\regexAlphabet)~\isSubSetOf{\deriv\symbola\regexr}{\deriv\symbola\regexs}$
iff
$(\forall\symbolb\in\getNext\regexr)~\isSubSetOf{\deriv\symbolb\regexr}{\deriv\symbolb\regexs}$
because $\deriv{\symbolb}{\regexr} = \regexNull$ for all
$\symbolb\notin\getNext\regexr$. Thus, \RefTirName{CC-Unfold} does not
have to consider the entire alphabet, but unfortunately
$\getNext\regexr$ may still be an infinite set of 
symbols. For that reason, we propose to compute derivatives with
respect to \emph{literals} (i.e., non-empty sets of symbols) instead
of single symbols. However, generalizing derivatives to literals has
some subtle problems.

To illustrate these problems, let us recall the specification of the Brzozowski derivative:
\begin{displaymath}
  \lang{\deriv{\symbola}{\regexr}} = \leftquotient{\symbola}{\lang{\regexr}}
\end{displaymath}
Now we might be tempted to consider 
the following naive extension of the derivative to a set of symbols
$\seta$.
\begin{align*}
  \lang{\deriv{\seta}{\regexr}}~=~%
  \leftquotient{\seta}{\lang{\regexr}}~=~%
  \bigcup_{\symbola\in\seta} \leftquotient{\symbola}{\lang{\regexr}} =
  \bigcup_{\symbola\in\seta} \lang{\deriv{\symbola}{\regexr}}
  \tag{wrong}
\end{align*}
However, this attempt at a specification yields inconsistent results. To see
why, consider the case where $\regexr = \regexNeg\regexs$. Generalizing from 
$\deriv{\symbola}{\regexNeg{\regexs}} = \regexNeg{\deriv{\symbola}{\regexs}}$,
we might try to define
$\deriv{\seta}{\regexNeg{\regexs}} \df{} \regexNeg{\deriv{\seta}{\regexs}}$.
If this definition was sensible, then~(\ref{eq:1}) and~(\ref{eq:2}) should yield
the same results:
\begin{eqnarray}
  \label{eq:1}
  \lang{\deriv{\seta}{\regexNeg\regexs}}
  &\stackrel{\textrm{(wrong)}}{=}&  \bigcup_{\symbola\in\seta} \lang{\deriv{\symbola}{\regexNeg\regexs}}
  \stackrel{\textrm{def}~\derivSymbol_\symbola}=  \bigcup_{\symbola\in\seta} \overline{\lang{\deriv{\symbola}{\regexs}}}
  \\
  \label{eq:2}
  \lang{\regexNeg{\deriv{\seta}{\regexs}}}
  &\stackrel{\textrm{def}~\derivSymbol_\symbola}{=}& \overline{\lang{{\deriv{\seta}{\regexs}}}}
  \stackrel{\textrm{(wrong)}}{=} \overline{\bigcup_{\symbola\in\seta} \lang{{\deriv{\symbola}{\regexs}}}} 
  \stackrel{\textrm{de Morgan}}{=} {\bigcap_{\symbola\in\seta} \overline{\lang{{\deriv{\symbola}{\regexs}}}}} 
\end{eqnarray}
However, we obtain a contradiction: with $\seta = \{\symbola, \symbolb\}$ and $\regexs =
\regexOr{\regexConcat\symbola\symbola}{\regexConcat\symbolb\symbolb}$,
\eqref{eq:1} yields $\regexWords$ whereas~\eqref{eq:2} yields
$\overline{\{\symbola, \symbolb\}}$, which is clearly different.

\subsection{Positive and Negative Derivatives}

To address this problem, we introduce two types of derivative operators with respect to symbol
sets. The \emph{positive derivative} $\pderiv{\seta}{\regexr}$ computes an
expression that contains the union of all $\deriv{\symbola}{\regexr}$ with
$\symbola\in\seta$, whereas the \emph{negative derivative}
$\nderiv{\seta}{\regexr}$ computes an expression contained in the intersection of
all $\deriv{\symbola}{\regexr}$ with $\symbola\in\seta$. 

The positive and negative derivative operators are defined by mutual
induction and flip at the {complement operator}. Most cases of their
definition are identical to the Brzozowski derivative
(cf. Section~\ref{sec:preliminaries}), thus we only show the cases that are
different\footnote{See also Appendix~\ref{sec:derivatives_full}.}. For all literals $\literall$ with $\lang{\literall}\neq\emptyset$:

\vspace{-\baselineskip}
\begin{minipage}[t]{0.4\textwidth}
  \begin{align*}
    \begin{array}{lll}
      \pderiv{\setb}{\regexSet} &\df{}& \begin{cases}
        \regexEmpty, & \regexSet\sqcap\setb\ne\bot \\
        \regexNull, & \textit{otherwise}
      \end{cases}\\
      \pderiv{\setb}{\regexNeg{\regexr}} &\df{}&  \regexNeg{\nderiv{\setb}{\regexr}}\\
    \end{array}
  \end{align*}
\end{minipage}
\begin{minipage}[t]{0.4\textwidth} 
  \begin{align*}
    \begin{array}{lll}
      \nderiv{\setb}{\regexSet} &\df{}& \begin{cases}
        \regexEmpty, & \overline{\regexSet}\sqcap\setb=\bot\\
        \regexNull, & \textit{otherwise}
      \end{cases}\\
      \nderiv{\setb}{\regexNeg{\regexr}} &\df{}&  \regexNeg{\pderiv{\setb}{\regexr}}\\
    \end{array}
  \end{align*}
\end{minipage}
\vspace{\baselineskip}

\noindent
For single symbol literals of the form $\setb=\{\symbola\}$, it holds
that  $\pderiv{\symbola}{\regexr}=\nderiv{\symbola}{\regexr}=\deriv{\symbola}{\regexr}$.
Derivatives with respect to the empty set are defined as
$\pderiv{\emptyset}{\regexr}=\regexNull$ and $\nderiv{\emptyset}{\regexr}=\regexWords$.

The following lemma states the connection between the derivative by a literal and the derivative by a symbol.
\begin{lemma}[Positive and negative derivatives]\label{thm:derivatives}
  For any $\regexr$ and $\setb$, it holds that:
  \begin{mathpar}
    \lang{\pderiv{\setb}{\regexr}}\supseteq\bigcup_{\symbola\in\setb} \lang{\deriv{\symbola}{\regexr}}

    \lang{\nderiv{\setb}{\regexr}}\subseteq\bigcap_{\symbola\in\setb} \lang{\deriv{\symbola}{\regexr}}    
  \end{mathpar}
\end{lemma}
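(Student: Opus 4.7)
I would prove both containments simultaneously by structural induction on $\regexr$, mutually invoking the two induction hypotheses. The statement to prove by induction is the pair
\[
  \lang{\pderiv{\setb}{\regexr}}\supseteq\bigcup_{\symbola\in\setb} \lang{\deriv{\symbola}{\regexr}},
  \qquad
  \lang{\nderiv{\setb}{\regexr}}\subseteq\bigcap_{\symbola\in\setb} \lang{\deriv{\symbola}{\regexr}},
\]
with $\setb$ ranging over all literals (the special case $\setb=\bot$ is immediate from the definition, since $\lang{\regexNull}=\emptyset$ and $\lang{\regexWords}=\regexWords$ respectively, and the unions/intersections are empty).

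\textbf{Base cases.} For $\regexr=\regexEmpty$, every $\deriv{\symbola}{\regexEmpty}=\regexNull$, so the union is empty and both inequalities hold trivially. For a literal $\regexr=\regexSet$, note that $\lang{\deriv{\symbola}{\regexSet}}$ is $\{\regexEmpty\}$ when $\symbola\literalin\regexSet$ and $\emptyset$ otherwise. Thus the union equals $\{\regexEmpty\}$ iff $\regexSet\literalcap\setb\ne\bot$, matching the definition of $\pderiv{\setb}{\regexSet}$ exactly; and the intersection (for nonempty $\setb$) equals $\{\regexEmpty\}$ iff every $\symbola\literalin\setb$ lies in $\regexSet$, i.e.\ iff $\inv\regexSet\literalcap\setb=\bot$, again matching $\nderiv{\setb}{\regexSet}$.

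\textbf{Inductive cases for the ``shared'' operators.} For union, concatenation, Kleene star, and intersection, the syntactic definitions of $\pderiv{\setb}$ and $\nderiv{\setb}$ follow the Brzozowski derivative verbatim, so each case reduces to a set-theoretic distributivity argument. For the positive side I use that union distributes over union (and that $(\bigcup_i X_i)\cap(\bigcup_i Y_i)\supseteq\bigcup_i(X_i\cap Y_i)$ for the intersection case, and a similar calculation for concatenation using $(\bigcup_i X_i)\cdot L = \bigcup_i(X_i\cdot L)$). For the negative side I use that intersection distributes over intersection (and, dually, $(\bigcap_i X_i)\cup(\bigcap_i Y_i)\subseteq\bigcap_i(X_i\cup Y_i)$, which handles the union case and, by splitting on $\isNullable\regexr$, the concatenation case as well). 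The Kleene star case mirrors concatenation because $\pderiv{\setb}{\regexStar\regexr}$ keeps the tail $\regexStar\regexr$ intact.

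\textbf{The complement case --- the crux.} This is the only place where the two operators interact and the reason for defining them mutually. By definition $\pderiv{\setb}{\regexNeg\regexr}=\regexNeg{\nderiv{\setb}{\regexr}}$, so
\[
  \lang{\pderiv{\setb}{\regexNeg\regexr}} = \overline{\lang{\nderiv{\setb}{\regexr}}} \supseteq \overline{\bigcap_{\symbola\literalin\setb}\lang{\deriv{\symbola}{\regexr}}} = \bigcup_{\symbola\literalin\setb}\overline{\lang{\deriv{\symbola}{\regexr}}} = \bigcup_{\symbola\literalin\setb}\lang{\deriv{\symbola}{\regexNeg\regexr}},
\]
where the middle inclusion flips because we complement a subset, invoking the \emph{negative} induction hypothesis, and the last equalities are by De Morgan and the definition of $\deriv{\symbola}{\regexNeg\regexr}=\regexNeg{\deriv{\symbola}\regexr}$. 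The symmetric computation, using the \emph{positive} induction hypothesis, handles $\nderiv{\setb}{\regexNeg\regexr}$.

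\textbf{Main obstacle.} The only subtle step is ensuring that the induction is set up so that the complement case threads correctly: because complementation reverses inclusions, the positive claim for $\regexNeg\regexr$ depends on the negative claim for $\regexr$ and vice versa. Once the proof is cast as a simultaneous induction on $\regexr$ with both claims proved together, this case goes through by the short De Morgan calculation above, and every other case is a routine distributivity check.
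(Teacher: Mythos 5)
Your proposal is correct and follows essentially the same route as the paper's proof: a simultaneous structural induction on $\regexr$ proving both inclusions together, where the complement case flips between the positive and negative induction hypotheses via De Morgan, and the remaining cases are set-theoretic distributivity checks (with only inclusions, not equalities, surviving for intersection under $\Delta$ and union/concatenation under $\nabla$). The paper's appendix proof carries out in detail exactly the case analysis you sketch, including the split on $\isNullable\regexs$ for concatenation.
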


\begin{proof}[Proof of Lemma~\ref{thm:derivatives}]
  Both inclusions are proved simultaneously by induction on $\regexr$. See Appendix~\ref{sec:proof-derivative}.
\end{proof}
%
% The positive and negative derivative operators generalize the derivative to a
% set of symbols. Let $\regexr$ be an arbitrary \ERE{} and $\literall$ be an
% expression literal. The language denoted by the result of a positive derivate of
% $\regexr$ w.r.t. $\literall$ is greater or equals to the union of all languages
% denoted by derivatives w.r.t.\ symbols
% $\symbola\in\lang{\literall}$. Conversely, the language of a negative derived
% expression is smaller or equals to the intersection of the languages denoted by
% its symbol-derivatives.
%
The following examples illustrate the properties of the derivatives.
\begin{example}[Positive derivative]\label{exp:pderivatives}
  Let $\regexr$ be
  $\regexAnd{(\symbola\concat\symbolc)}{(\symbolb\concat\symbolc)}$
  and let the literal $\literall = \{\symbola,\symbolb\}$. 
  $$\pderiv{\literall}{\regexr} =
  \regexAnd{\pderiv{\literall}{\symbola\concat\symbolc}}{\pderiv{\literall}{\symbolb\concat\symbolc}}
  = \regexAnd{\symbolc}{\symbolc}
  \supersetof \regexOr{\deriv{\symbola}{\regexr}}{\deriv{\symbolb}{\regexr}} =
  \regexOr{\emptyset}{\emptyset}$$
\end{example}

\begin{example}[Negative derivative]\label{exp:nderivatives}
  Let $\regexr$ be
  $\regexOr{(\symbola\concat\symbolc)}{(\symbolb\concat\symbolc)}$ and
  let the literal $\literall = \{\symbola,\symbolb\}$. 
  $$\nderiv{\literall}{\regexr} =
  \regexOr{\nderiv{\literall}{\symbola\concat\symbolc}}{\nderiv{\literall}{\symbolb\concat\symbolc}}
  = \regexOr{\regexNull}{\regexNull}
  \subsetof \deriv{\symbola}{\regexr} \regexAndOp \deriv{\symbolb}{\regexr} =
  \symbolc \regexAndOp \symbolc$$
\end{example}
Positive (negative) derivatives yield an upper (lower)
approximation to the information expected from a derivative.
This approximation arises because we tried to define the derivative
with respect to an \emph{arbitrary} literal $\literall$. 
To obtain the precise information, we need to restrict these literals
suitably to \emph{next literals}.

% ___ _        _     _    _ _                _    
%| __(_)_ _ __| |_  | |  (_) |_ ___ _ _ __ _| |___
%| _|| | '_(_-<  _| | |__| |  _/ -_) '_/ _` | (_-<
%|_| |_|_| /__/\__| |____|_|\__\___|_| \__,_|_/__/
\subsection{Next Literals}
\label{sec:next}

An occurrence of a literal $\literall$ in a regular expression $\regexr$ is \emph{initial} if there is
some $\symbola\in\regexAlphabet$ such that $\deriv\symbola\regexr$ reduces this occurrence. That is,
the computation of $\deriv\symbola\regexr$ involves $\deriv\symbola\literall$. Intuitively,
$\literall$ helps determine the first symbol of an element of $\lang\regexr$.
\begin{example}[Initial Literals]\label{example:next-literals} ~\\[-\baselineskip]
  \begin{enumerate}
    \item 
      Let $\regexr_1 = \{a,b\}.a^*$. Then $ \{a,b\}$ is an initial literal.
    \item 
      Let $\regexr_2 = \{a,b\}.a^* + \{b,c\}.c^*$. Then $ \{a,b\}$ and $\{b, c\}$ are initial.
  \end{enumerate}
\end{example}

\noindent
Generalizing from the first example, we might be tempted to conjecture that if $\literall$ is initial
in $\regexr$,
then $(\forall \symbola, \symbolb\in\literall)$ $\deriv\symbola\regexr =
\deriv\symbolb\regexr$. However, the second example shows that this conjecture is wrong:
$\{\symbola, \symbolb\}$ is initial in $\regexr_2$, but $\deriv\symbola{\regexr_2} = a^*$ and
$\deriv\symbolb{\regexr_2} = a^* + c^*$.

The problem with the second example is that $\{a,b\} \cap \{b,c\} \ne \emptyset$. Hence, instead of
identifying initial literals of an {\ERE} $\regexr$, we define a set $\getFirst\regexr$ of next
literals which are mutually disjoint, whose union contains $\getNext\regexr$, and where the symbols in
each literal yield the same derivative. In the second example, it must be that $\getFirst{\regexr_2}
= \{ \{a\}, \{b\}, \{c\} \}$.

It turns out that this problem arises in a number of cases when defining $\getFirst\regexr$
inductively. Hence, we define an operation $\literaljoin$ that builds a set of mutually
disjoint literals that cover the union of two sets of mutually disjoint literals.
\begin{definition}[Join]\label{def:join}
  Let $\literalset_1$ and $\literalset_2$ be two sets of mutually
  disjoint literals.
  \begin{align*}
    \literalset_1 \literaljoin \literalset_2~\df&
    \{
      (\literall_1\literalcap\literall_2),
      (\literall_1\literalcap\inv{\literalbigcup\literalset_2}),
      (\inv{\literalbigcup\literalset_1}\literalcap\literall_2)
      \mid
    \literall_1\in \literalset_1, \literall_2\in \literalset_2\}
  \end{align*}
\end{definition}
The following lemma states the properties of the join operation.
\begin{lemma}[Properties of Join]\label{lemma:properties-of-join}
  Let $\literalset_1$ and $\literalset_2$ be non-empty sets of mutually disjoint literals.
  \begin{enumerate}
    \item $\bigcup (\literalset_1 \literaljoin \literalset_2) = \bigcup \literalset_1 \cup \bigcup \literalset_2$.
    \item $(\forall \literall\ne\literall' \in \literalset_1 \literaljoin \literalset_2)$ $\literall\sqcap\literall' = \emptyset$.
    \item $(\forall \literall \in \literalset_1 \literaljoin \literalset_2)$ $(\forall \literall_i \in \literalset_i)$
      $\literall \sqcap \literall_i \ne \emptyset \Rightarrow \literall \sqsubseteq \literall_i$.
  \end{enumerate}
\end{lemma}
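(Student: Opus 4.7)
The plan is to unfold the definition of $\literaljoin$ and prove each part by inspecting the three shapes an element can take: a pure intersection $\literall_1 \literalcap \literall_2$ (call this Type A), a leftover-on-the-left $\literall_1 \literalcap \inv{\literalbigcup\literalset_2}$ (Type B), or a leftover-on-the-right $\inv{\literalbigcup\literalset_1} \literalcap \literall_2$ (Type C). Each claim then reduces to routine boolean-algebra manipulations, using the assumption that $\literalset_1$ and $\literalset_2$ are mutually disjoint and the two characteristic identities $\literall_i \literalcap \inv{\literalbigcup\literalset_i} = \bot$ and $\literall_i \literalcup \inv{\literalbigcup\literalset_i} = \top$.

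For (1), I would compute the union by grouping. Fixing $\literall_1 \in \literalset_1$ and collecting its Type-A contributions with one copy of its Type-B contribution gives
\[
\literalbigcup_{\literall_2 \in \literalset_2}(\literall_1 \literalcap \literall_2) \;\literalcup\; (\literall_1 \literalcap \inv{\literalbigcup\literalset_2}) \;=\; \literall_1 \literalcap (\literalbigcup\literalset_2 \literalcup \inv{\literalbigcup\literalset_2}) \;=\; \literall_1,
\]
so ranging over $\literall_1$ yields $\literalbigcup\literalset_1$. The symmetric grouping on the $\literalset_2$ side (Type A plus Type C) yields $\literalbigcup\literalset_2$. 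Together they cover the whole join, establishing the two inclusions.

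For (2), I do a case analysis on the pair of element types. Two distinct Type-A elements $\literall_1\literalcap\literall_2$ and $\literall_1'\literalcap\literall_2'$ must differ in at least one coordinate, and then the mutual disjointness inside the relevant $\literalset_i$ forces the intersection to be $\bot$; the same argument handles two distinct Type-B's or two distinct Type-C's. A Type-A meeting a Type-B (resp.\ Type-C) contains the factor $\literall_2 \literalcap \inv{\literalbigcup\literalset_2}$ (resp.\ $\literall_1 \literalcap \inv{\literalbigcup\literalset_1}$), which is $\bot$. A Type-B and a Type-C meet in a product that contains $\literall_1 \literalcap \inv{\literalbigcup\literalset_1}$, again $\bot$.

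For (3), I again split on the type of $\literall$. If $\literall$ is Type A or Type B it is contained in some $\literall_1 \in \literalset_1$; if additionally $\literall \literalcap \literall_i \ne \emptyset$ for some $\literall_i \in \literalset_1$, then $\literall_1 \literalcap \literall_i \ne \emptyset$, so mutual disjointness of $\literalset_1$ gives $\literall_1 = \literall_i$, and hence $\literall \subseteq \literall_i$. If instead $\literall_i \in \literalset_2$ and $\literall$ is Type B, then $\literall \subseteq \inv{\literalbigcup\literalset_2}$ while $\literall_i \subseteq \literalbigcup\literalset_2$, so the premise $\literall \literalcap \literall_i \ne \emptyset$ is vacuous; Type A against $\literalset_2$ and Type C (against both $\literalset_1$ and $\literalset_2$) are handled symmetrically. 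The only real bookkeeping hurdle is keeping the six type/index combinations straight in (2) and (3); no deeper structural argument is needed.
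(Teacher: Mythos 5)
Your proposal is correct and follows essentially the same route as the paper's proof: unfold the three element shapes of the join, case-analyse pairs of shapes for disjointness using the key identity $\literall_i \literalcap \inv{\literalbigcup\literalset_i} = \bot$, and derive part 3 from containment of each join element in a parent literal (which the paper dismisses as ``immediate from the definition''). The only cosmetic difference is in part 1, where you argue equationally via distributivity and $\literalbigcup\literalset_2 \literalcup \inv{\literalbigcup\literalset_2} = \top$ instead of the paper's element-chasing double inclusion; the underlying content is identical.
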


\begin{proof}[Proof of Lemma~\ref{lemma:properties-of-join}]
  See Appendix~\ref{sec:properties-join}.
\end{proof}

\begin{figure}[tp]
  \begin{mathpar}
    \begin{array}[t]{lll}
      \getFirst{\regexEmpty} &=& \{ \emptyset \}\\
      \getFirst{\regexSet} &=& \{\regexSet\}\\
    \end{array}

    \begin{array}[t]{lll}
      \getFirst{\regexOr{\regexr}{\regexs}} &=& \getFirst{\regexr} \literaljoin \getFirst{\regexs}\\
      \getFirst{\regexConcat{\regexr}{\regexs}} &=& \begin{cases}
        \getFirst{\regexr}\literaljoin\getFirst{\regexs}, & \isNullable{\regexr}\\
        \getFirst{\regexr}, & \neg\isNullable{\regexr}
      \end{cases}\\
      \getFirst{\regexStar{\regexr}} &=& \getFirst{\regexr}\\
      \getFirst{\regexAnd{\regexr}{\regexs}} &=&
      \getFirst{\regexr}\literalcap\getFirst{\regexs}\\
      \getFirst{\regexNeg{\regexr}} &=&
      \getFirst{\regexr} \cup \{ \literalbigcap\{\inv{\literall}~|~\literall\in\getFirst{\regexr}\} \}
    \end{array}
  \end{mathpar}
  \caption{Computing next literals.}
  \label{fig:first-literals}
\end{figure}

\noindent
Figure~\ref{fig:first-literals} contains the definition of $\getFirst\regexr$.
For $\regexEmpty$ the set of next literals consists
of the empty set.
The next literal of a literal $\literall$ is $\literall$.
The next literals of a union $\regexOr{\regexr}{\regexs}$ are computed as the join of the next literals of
$\regexr$ and $\regexs$ as explained in Example~\ref{example:next-literals}. 
The next literals of a concatenation $\regexConcat{\regexr}{\regexs}$ are the next literals of
$\regexr$ if $\regexr$ is not nullable. Otherwise, they are the join of the next literals of both
operands. 
The next literals of a Kleene star expression $\regexStar{\regexr}$ are the next literals of
$\regexr$.
For an intersection $\regexAnd{\regexr}{\regexs}$, the set of next literals is the set of all
intersections $\literall\literalcap\literall'$ of the next literals of both operands.
In this case, the join operation $\literaljoin$ is not needed because symbols that only appear in literals from one
operand can be elided. To see this, consider
$\getFirst{\regexAnd\symbola\symbolb} = \{ \{\symbola\} \sqcap
\{\symbolb\} \} = \{ \emptyset \}$ whereas $\{ \{\symbola\} \} \literaljoin 
\{\{\symbolb\} \} = \{ \emptyset, \{\symbola\}, \{\symbolb\} \}$.

The set of next literals of $\regexNeg{\regexr}$ comprises the next
literals of $\regexr$ and a new literal, which is the intersection of
the complements of all literals in $\getFirst{\regexr}$.
We might contemplate to exclude literals that contain symbols
$\symbola$ such that $\deriv\symbola\regexr$ is equivalent to
$\regexWords$, but we refrain from doing so because this equivalence
cannot be decided with a finite set of rewrite
rules~\cite{Redko1964}. 

The function $\getFirst\regexr \setminus \{\emptyset\}$ computes the
equivalence classes of a partial equivalence relation $\sim$ on
$\regexAlphabet$ such that equivalent symbols yield the same
derivative on $\regexr$. The relation is defined by $\symbola \sim \symbolb$ if there exists
$\literall\in\getFirst\regexr$ such that $\symbola\in\literall$ and
$\symbolb\in\literall$. Furthermore, the derivative by a symbol that is not part of the 
relation yields the empty set.
\begin{lemma}[Partial Equivalence]\label{lemma:partial-equivalence}
  Let $\literalset = \getFirst\regexr$.
  \begin{enumerate}
    \item\label{item:1} $(\forall \literall\in\literalset)$ $(\forall \symbola,\symbolb\in \literall)$
      $\deriv\symbola\regexr = \deriv\symbolb\regexr$
    \item $(\forall \symbola\notin \bigcup \literalset)$
      $\deriv\symbola\regexr = \regexNull$
  \end{enumerate}
\end{lemma}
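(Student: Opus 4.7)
I proceed by simultaneous structural induction on $\regexr$, establishing both clauses at once. Write $\literalset = \getFirst\regexr$ throughout.

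The base cases are immediate. For $\regexEmpty$ we have $\literalset = \{\emptyset\}$, so clause~(\ref{item:1}) is vacuous and every symbol gives $\regexNull$ as derivative. For a literal $\regexSet$ we have $\literalset = \{\regexSet\}$, and $\deriv\symbola\regexSet$ is $\regexEmpty$ for $\symbola\in\regexSet$ and $\regexNull$ otherwise, matching both clauses. The Kleene star case $\regexStar\regexr$ is nearly immediate: the set of next literals is unchanged, and the syntactic form $\deriv\symbola{\regexStar\regexr} = \regexConcat{\deriv\symbola\regexr}{\regexStar\regexr}$ lets the induction hypothesis flow through verbatim (with $\regexConcat\regexNull{\regexStar\regexr}$ understood as $\regexNull$).

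For the binary operators, the key tool is Lemma~\ref{lemma:properties-of-join}. For union and (the nullable subcase of) concatenation, we have $\literalset = \getFirst\regexr \literaljoin \getFirst\regexs$. Given $\literall \in \literalset$ and $\symbola,\symbolb\in\literall$, property~(3) of the join forces a dichotomy on each side: either $\literall \sqsubseteq \literall_1$ for some $\literall_1\in\getFirst\regexr$, in which case the induction hypothesis~(1) gives $\deriv\symbola\regexr = \deriv\symbolb\regexr$; or $\literall$ is disjoint from $\bigcup\getFirst\regexr$, in which case $\symbola,\symbolb\notin\bigcup\getFirst\regexr$ and the induction hypothesis~(2) gives $\deriv\symbola\regexr = \regexNull = \deriv\symbolb\regexr$. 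The same dichotomy applies to $\regexs$, so the syntactic recipes for $\deriv\symbola{\regexOr\regexr\regexs}$ and $\deriv\symbola{\regexConcat\regexr\regexs}$ yield clause~(1). Clause~(2) for these cases uses property~(1) of the join: $\bigcup\literalset = \bigcup\getFirst\regexr \cup \bigcup\getFirst\regexs$, so a symbol outside $\bigcup\literalset$ is outside both, and the induction hypothesis forces both component derivatives to $\regexNull$. The intersection case $\regexAnd\regexr\regexs$ is easier, since every $\literall\in\getFirst\regexr\literalcap\getFirst\regexs$ satisfies $\literall\sqsubseteq\literall_1$ and $\literall\sqsubseteq\literall_2$ by construction; for clause~(2), a symbol outside $\bigcup(\getFirst\regexr\literalcap\getFirst\regexs) = \bigcup\getFirst\regexr \sqcap \bigcup\getFirst\regexs$ falls outside at least one factor, collapsing the derivative of the intersection via the induction hypothesis.

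The negation case $\regexNeg\regexr$ is the one requiring the most care and is the main obstacle. Here $\getFirst{\regexNeg\regexr}$ contains all of $\getFirst\regexr$ plus the fresh literal $\literall^\star = \literalbigcap\{\inv\literall \mid \literall\in\getFirst\regexr\} = \inv{\bigcup\getFirst\regexr}$, which is disjoint from every element of $\getFirst\regexr$. For clause~(1), if $\literall\in\getFirst\regexr$ then the induction hypothesis gives $\deriv\symbola\regexr = \deriv\symbolb\regexr$ and hence $\regexNeg{\deriv\symbola\regexr} = \regexNeg{\deriv\symbolb\regexr}$; if $\literall = \literall^\star$, then $\symbola,\symbolb\notin\bigcup\getFirst\regexr$, so induction hypothesis~(2) gives $\deriv\symbola\regexr = \regexNull = \deriv\symbolb\regexr$ and the derivatives of $\regexNeg\regexr$ agree. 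For clause~(2) there is nothing to prove, since $\bigcup\getFirst{\regexNeg\regexr} = \bigcup\getFirst\regexr \cup \inv{\bigcup\getFirst\regexr} = \regexAlphabet$. Throughout, the equalities in clause~(2) are understood up to the standard normalization in which $\regexNull$ is absorbing for concatenation and intersection; this is the only place in the argument where one must be explicit about the quotient on expressions.
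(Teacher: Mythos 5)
Your proof is correct and follows essentially the same route as the paper's: a simultaneous structural induction in which the properties of $\literaljoin$ (Lemma~\ref{lemma:properties-of-join}) supply the containment-or-disjointness dichotomy for union and nullable concatenation, the intersection and negation cases are handled via the construction of $\getFirst{\cdot}$ directly (with the fresh literal $\inv{\literalbigcup\getFirst\regexr}$ treated separately), and the equalities are read semantically rather than syntactically. The paper's own proof makes the same caveat about semantic equality and is even terser where you are terse (its entire concatenation case reads ``Similar''), so the two arguments match in both substance and level of detail.
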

\begin{proof}[Proof of Lemma~\ref{lemma:partial-equivalence}]
  See Appendix~\ref{sec:partial-equivalence}.
\end{proof}
It remains to show that $\getFirst\regexr$ covers all symbols in
$\getNext\regexr$.
\begin{lemma}[First]\label{thm:first}
  For all $\regexr$,
  $\bigcup \getFirst\regexr \supseteq \getNext\regexr$.
\end{lemma}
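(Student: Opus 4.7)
My plan is a structural induction on $\regexr$, following the clauses in Figure~\ref{fig:first-literals}. The induction hypothesis is $\bigcup \getFirst{\regexr'} \supseteq \getNext{\regexr'}$ for every proper subexpression $\regexr'$ of $\regexr$. Throughout, I use the basic fact that $\symbola \in \getNext{\regexr}$ iff there exists $\wordw$ with $\symbola\wordw \in \lang\regexr$, together with standard identities about $\getNext$ (e.g.\ $\getNext{\regexOr\regexr\regexs} = \getNext\regexr \cup \getNext\regexs$; $\getNext{\regexConcat\regexr\regexs} = \getNext\regexr$ if $\neg\isNullable\regexr$, else $\getNext\regexr \cup \getNext\regexs$; $\getNext{\regexStar\regexr} = \getNext\regexr$; $\getNext{\regexAnd\regexr\regexs} \subseteq \getNext\regexr \cap \getNext\regexs$).

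The base cases are immediate from the definitions: for $\regexEmpty$, both sides are empty; for a literal $\regexSet$, both sides equal $\regexSet$. For $\regexOr\regexr\regexs$, I appeal to property~(1) of Lemma~\ref{lemma:properties-of-join}, which gives $\bigcup(\getFirst\regexr \literaljoin \getFirst\regexs) = \bigcup\getFirst\regexr \cup \bigcup\getFirst\regexs$, and then combine the two induction hypotheses. The concatenation case splits on $\isNullable\regexr$: when $\regexr$ is not nullable, every word in $\lang{\regexConcat\regexr\regexs}$ begins with a symbol from some word of $\regexr$, so $\getNext{\regexConcat\regexr\regexs} = \getNext\regexr$ and the induction hypothesis for $\regexr$ suffices; when $\regexr$ is nullable, the join handles the union, just as in the $\regexOrOp$ case. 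The Kleene-star case reduces to the induction hypothesis on $\regexr$ because $\getFirst{\regexStar\regexr} = \getFirst\regexr$ and $\getNext{\regexStar\regexr} = \getNext\regexr$.

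The intersection case requires a small argument. Given $\symbola \in \getNext{\regexAnd\regexr\regexs}$, pick $\wordw$ with $\symbola\wordw \in \lang\regexr \cap \lang\regexs$; then $\symbola \in \getNext\regexr$ and $\symbola \in \getNext\regexs$. By the induction hypothesis applied to each operand, there exist $\literall \in \getFirst\regexr$ and $\literall' \in \getFirst\regexs$ with $\symbola \in \literall$ and $\symbola \in \literall'$, so $\symbola \in \literall \literalcap \literall'$, which is an element of $\getFirst{\regexAnd\regexr\regexs}$. The negation case is the most striking: using de~Morgan,
\[
\bigcup \getFirst{\regexNeg\regexr}
= \bigcup \getFirst\regexr \,\cup\, \literalbigcap\{\inv\literall \mid \literall \in \getFirst\regexr\}
= \bigcup \getFirst\regexr \,\cup\, \overline{\textstyle\bigcup \getFirst\regexr}
= \regexAlphabet,
\]
which trivially contains $\getNext{\regexNeg\regexr}$.

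I do not expect any significant obstacle: each clause of $\getFirst$ was designed precisely to cover $\getNext$ on the corresponding regular-expression constructor, and once the intersection and negation cases are handled as above the induction goes through mechanically. The only subtle point is that the second disjunct in $\getFirst{\regexNeg\regexr}$ is essential, since $\lang{\regexNeg\regexr}$ may contain words beginning with symbols outside $\bigcup\getFirst\regexr$; this is exactly why the construction forces the next-literal set of a negation to cover the whole alphabet.
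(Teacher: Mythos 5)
Your proposal is correct and follows essentially the same route as the paper's proof: structural induction over $\regexr$, invoking property~(1) of Lemma~\ref{lemma:properties-of-join} for the union and nullable-concatenation cases, reducing star and non-nullable concatenation to the induction hypothesis, covering intersection via the pairwise intersections $\literall \literalcap \literall'$, and observing that $\bigcup\getFirst{\regexNeg\regexr} = \regexAlphabet$. The only differences are presentational: you argue the intersection case element-wise where the paper uses the set identity $\bigcup(\getFirst\regexr \literalcap \getFirst\regexs) = \bigcup\getFirst\regexr \cap \bigcup\getFirst\regexs$, and you spell out the de~Morgan step that the paper leaves implicit.
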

\begin{proof}[Proof of Lemma~\ref{thm:first}]
  See Appendix~\ref{sec:first-next}.
\end{proof}
Moreover, there are only finitely many different next literals for each regular
expression. 
\begin{lemma}[Finiteness]\label{thm:cardinality}
  For all $\regexr$,
  $|\getFirst{\regexr}|$ is finite.
\end{lemma}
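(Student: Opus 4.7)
The plan is to prove the bound by straightforward structural induction on $\regexr$, exploiting the fact that every clause of the defining equations in Figure~\ref{fig:first-literals} is built from operations that preserve finiteness of the underlying set of literals.

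For the base cases, $\getFirst{\regexEmpty} = \{\emptyset\}$ and $\getFirst{\regexSet} = \{\regexSet\}$ are singletons, so the claim holds trivially with $|\getFirst{\regexr}| = 1$. For the inductive step, I would first verify that each of the three combinators used in the recursive clauses sends finite inputs to finite outputs:
\begin{itemize}
\item The join $\literalset_1 \literaljoin \literalset_2$, unfolded via Definition~\ref{def:join}, is a set of at most $|\literalset_1|\cdot|\literalset_2| + |\literalset_1| + |\literalset_2|$ literals, since it ranges over pairs $(\literall_1,\literall_2)\in \literalset_1\times \literalset_2$ plus two families indexed by $\literalset_1$ and $\literalset_2$ individually.
\item The pointwise intersection $\literalset_1 \literalcap \literalset_2$ used for $\regexAnd{\regexr}{\regexs}$ produces at most $|\literalset_1|\cdot|\literalset_2|$ literals.
\item The complement clause for $\regexNeg{\regexr}$ adds exactly one new literal to $\getFirst{\regexr}$, so it preserves finiteness.
\end{itemize}

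With these closure facts in hand, the induction is pure bookkeeping: assume by induction that $\getFirst{\regexr}$ and $\getFirst{\regexs}$ are finite, then each defining equation for $\getFirst{\regexOr{\regexr}{\regexs}}$, $\getFirst{\regexConcat{\regexr}{\regexs}}$, $\getFirst{\regexStar{\regexr}}$, $\getFirst{\regexAnd{\regexr}{\regexs}}$, and $\getFirst{\regexNeg{\regexr}}$ applies one of the three operations above (or a selection between two finite sets, in the concatenation case) and therefore yields a finite set.

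There is no genuine obstacle here; the only point that requires care is reading off the cardinality bound for $\literaljoin$ from Definition~\ref{def:join}, since the set-builder notation quantifies independently over $\literalset_1$ and $\literalset_2$ and one must be sure that the two ``boundary'' families $\literall_1\literalcap\inv{\literalbigcup\literalset_2}$ and $\inv{\literalbigcup\literalset_1}\literalcap\literall_2$ are each indexed by a single finite set rather than by the cross product. Full details are routine and would be deferred to an appendix.
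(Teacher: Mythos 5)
Your proposal is correct and follows essentially the same route as the paper, which also argues by structural induction on $\regexr$, noting that the base cases yield finite sets and that each inductive clause combines the finitely many literals of the subexpressions in finitely many ways. The only difference is that you make the cardinality bounds for $\literaljoin$, the pointwise intersection, and the complement clause explicit, which the paper leaves implicit.
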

\begin{proof}[Proof of Lemma~\ref{thm:cardinality}]
  By induction on $\regexr$. The base cases construct finite sets and the inductive cases build a
  finite number of combinations of the results from the subexpressions.
\end{proof}
Now, we put next literals to work.
If we only take positive or negative derivatives with respect to next literals, then the inclusions in
Lemma~\ref{thm:derivatives} turn into equalities. The result is that both the positive and the negative
derivative, when applied to a next literal $\literall$, calculate a regular expression for the left quotient $\literall^{-1}\lang{\regexr}$. 
\begin{theorem}[Left Quotient]\label{thm:firstderivative}
  For all $\regexr$, $\literall\in\getFirst{\regexr} \setminus \{\emptyset\}$, and $\symbola\in\lang{\literall}$:
  \begin{align*}
    \lang{\pderiv{\literall}{\regexr}}=\lang{\nderiv{\literall}{\regexr}}=\lang{\deriv{\symbola}{\regexr}}
  \end{align*}
\end{theorem}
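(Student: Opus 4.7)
The plan is to squeeze $\lang{\deriv{\symbola}{\regexr}}$ between $\lang{\nderiv{\literall}{\regexr}}$ and $\lang{\pderiv{\literall}{\regexr}}$ using Lemma~\ref{thm:derivatives}, and then to close the gap by an inductive argument showing that $\lang{\pderiv{\literall}{\regexr}}$ and $\lang{\nderiv{\literall}{\regexr}}$ in fact coincide whenever $\literall\in\getFirst{\regexr}$. For the squeeze, Lemma~\ref{thm:derivatives} immediately yields
$$\lang{\nderiv{\literall}{\regexr}} \subseteq \bigcap_{\symbolb\in\literall} \lang{\deriv{\symbolb}{\regexr}} \subseteq \bigcup_{\symbolb\in\literall} \lang{\deriv{\symbolb}{\regexr}} \subseteq \lang{\pderiv{\literall}{\regexr}},$$
and Lemma~\ref{lemma:partial-equivalence}(\ref{item:1}) collapses both the intersection and the union to $\lang{\deriv{\symbola}{\regexr}}$, since all $\symbolb\in\literall$ yield the same Brzozowski derivative on $\regexr$ and the chosen $\symbola\in\lang{\literall}$ is one such witness.

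To close the gap, I would prove by structural induction on $\regexr$ a slightly strengthened claim: for every non-empty literal $\setb$ that is either (i) contained in some $\literall'\in\getFirst{\regexr}$, or (ii) disjoint from $\bigcup\getFirst{\regexr}$, the languages $\lang{\pderiv{\setb}{\regexr}}$ and $\lang{\nderiv{\setb}{\regexr}}$ coincide (and equal $\emptyset$ in case (ii)). The strengthening is forced by the recursive definition of $\getFirst{}$ via $\literaljoin$: a next literal of a union or concatenation generally only sits \emph{inside} a next literal of a component, not equal to one. Lemma~\ref{lemma:properties-of-join}(3) supplies precisely the dichotomy that drives the induction through $\regexOr{\regexr}{\regexs}$, $\regexConcat{\regexr}{\regexs}$, $\regexStar{\regexr}$, and $\regexAnd{\regexr}{\regexs}$: each next literal of the combined expression is, for each component subexpression, either wholly contained in one of its next literals (triggering case (i) of the IH) or disjoint from all of them (triggering case (ii)). The base cases $\regexEmpty$ and $\regexSet$ fall out by unfolding the definitions, where $\setb\subseteq\regexSet$ forces $\pderiv{\setb}{\regexSet}=\regexEmpty=\nderiv{\setb}{\regexSet}$ and $\setb\literalcap\regexSet=\bot$ forces $\regexNull$ on both sides.

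The main obstacle is the negation case $\regexr=\regexNeg{\regexs}$. Here the definitions flip, $\pderiv{\setb}{\regexNeg{\regexs}}=\regexNeg{\nderiv{\setb}{\regexs}}$ and $\nderiv{\setb}{\regexNeg{\regexs}}=\regexNeg{\pderiv{\setb}{\regexs}}$, so the required equality at $\regexNeg{\regexs}$ reduces to the analogous equality at $\regexs$ under complementation. The set $\getFirst{\regexNeg{\regexs}}$ equals $\getFirst{\regexs}$ together with the exclusion literal $\literalbigcap\{\inv{\literall''}\mid\literall''\in\getFirst{\regexs}\}$, so a non-empty $\setb$ in case (i) for $\regexr$ is either contained in some element of $\getFirst{\regexs}$ (invoke case (i) IH on $\regexs$) or is a subset of the exclusion literal and hence disjoint from $\bigcup\getFirst{\regexs}$ (invoke case (ii) IH on $\regexs$). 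Case (ii) never needs to be re-established at $\regexNeg{\regexs}$ itself, because $\bigcup\getFirst{\regexNeg{\regexs}}=\regexAlphabet$ leaves no non-empty literal to consider; this observation is what keeps the induction well-founded. Once the strengthened claim is in hand, instantiating it with $\setb=\literall$ gives the missing inclusion $\lang{\pderiv{\literall}{\regexr}}\subseteq\lang{\nderiv{\literall}{\regexr}}$, and together with the squeeze of the first paragraph the three languages collapse to one, proving the theorem.
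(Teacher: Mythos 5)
Your opening squeeze is correct and economical: Lemma~\ref{thm:derivatives} together with $\literall\neq\emptyset$ gives $\lang{\nderiv{\literall}{\regexr}} \subseteq \bigcap_{\symbolb\in\literall}\lang{\deriv{\symbolb}{\regexr}} \subseteq \bigcup_{\symbolb\in\literall}\lang{\deriv{\symbolb}{\regexr}} \subseteq \lang{\pderiv{\literall}{\regexr}}$, and Lemma~\ref{lemma:partial-equivalence}(1) collapses the two middle sets to $\lang{\deriv{\symbola}{\regexr}}$. The gap is in the second half: the strengthened claim you induct on is false, so the missing inclusion $\lang{\pderiv{\literall}{\regexr}}\subseteq\lang{\nderiv{\literall}{\regexr}}$ never materializes. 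The counterexample is the paper's own Example~\ref{exp:pderivatives}: take $\regext = \regexAnd{(\symbola\concat\symbolc)}{(\symbolb\concat\symbolc)}$ and $\setb = \{\symbola,\symbolb\}$. Then $\getFirst{\regext} = \{\{\symbola\}\literalcap\{\symbolb\}\} = \{\emptyset\}$, so $\setb$ is disjoint from $\bigcup\getFirst{\regext}$ and falls under your case (ii); yet $\lang{\pderiv{\setb}{\regext}} = \{\symbolc\}$ while $\lang{\nderiv{\setb}{\regext}} = \emptyset$, so neither the claimed coincidence nor the claimed emptiness holds. The induction step cannot be repaired, because the contained-or-disjoint dichotomy does not propagate through intersections: $\getFirst{\regexAnd{\regexs}{\regext}}$ is built with pointwise $\literalcap$ rather than $\literaljoin$, so Lemma~\ref{lemma:properties-of-join}(3) is unavailable there, and indeed $\setb=\{\symbola,\symbolb\}$ is neither contained in nor disjoint from $\{\symbola\}\in\getFirst{\symbola\concat\symbolc}$.

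Worse, this is not a bookkeeping problem that a cleverer invariant could fix, because the theorem itself fails once such an intersection sits under an operator that keeps the offending literal alive. For $\regexr' = \regexOr{(\{\symbola,\symbolb\}\concat d)}{\regext}$ with $\regext$ as above, one computes $\getFirst{\regexr'} = \{\{\symbola,\symbolb\}\}\literaljoin\{\emptyset\} \ni \{\symbola,\symbolb\}$, but $\lang{\pderiv{\{\symbola,\symbolb\}}{\regexr'}} = \{\symbolc,d\}$ whereas $\lang{\deriv{\symbola}{\regexr'}} = \lang{\nderiv{\{\symbola,\symbolb\}}{\regexr'}} = \{d\}$, contradicting the first equality of the statement; a dual example, $\regexNeg{\regext}$ with the literal $\inv{\emptyset}$, breaks the $\nabla$ equality. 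Comparing with the paper's proof (Appendix~\ref{sec:proof-firstderivative}): that proof is a direct simultaneous induction which, in every composite case, applies the induction hypothesis to $\literall$ without verifying that $\literall$ is (a subset of) a next literal of the subexpression — precisely the obligation your case (i)/(ii) split was designed to discharge. Your more careful attempt exposes that this obligation is genuinely undischargeable as the definitions stand; what survives is only the squeeze half of your argument, $\lang{\nderiv{\literall}{\regexr}} \subseteq \lang{\deriv{\symbola}{\regexr}} \subseteq \lang{\pderiv{\literall}{\regexr}}$.
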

\begin{proof}[Proof of Lemma~\ref{thm:firstderivative}]
  By induction on $\regexr$. See Appendix~\ref{sec:proof-firstderivative}.
\end{proof}
Motivated by this result, we define the Brzozowski derivative for a non-empty subset $\literall$ of a literal in
$\getFirst\regexr$. This definition involves an arbitrary choice of $\symbola\in\literall$, but this choice does not
influence the calculated derivative according to Lemma~\ref{lemma:partial-equivalence}, part 1.
\begin{definition}\label{def:derivative-literals}
  Let $\literall' \in \getFirst\regexr$. For each $\emptyset \ne \literall \subseteq \literall'$ define
  $\deriv\literall\regexr \df \deriv\symbola\regexr$, where $\symbola \in \literall$.
\end{definition}
\begin{lemma}[Coverage]\label{thm:coverage}
  For all $\symbola$, $\wordu$, and $\regexr$ it holds that:
  \begin{align*}
    \wordu\in\lang{\deriv{\symbola}{\regexr}}~\Leftrightarrow~\exists\literall\in\getFirst{\regexr}:
    \symbola \in \literall \wedge
    \wordu\in\lang{\pderiv{\literall}{\regexr}} \wedge
    \wordu\in\lang{\nderiv{\literall}{\regexr}}
  \end{align*}
\end{lemma}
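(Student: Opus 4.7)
The plan is to reduce both directions to Theorem~\ref{thm:firstderivative} (Left Quotient), which already identifies $\lang{\pderiv{\literall}{\regexr}}$ and $\lang{\nderiv{\literall}{\regexr}}$ with $\lang{\deriv{\symbola}{\regexr}}$ whenever $\literall \in \getFirst{\regexr} \setminus \{\emptyset\}$ contains $\symbola$. The two implications then differ only in how the witnessing literal $\literall$ is obtained.

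For the $(\Leftarrow)$ direction, suppose some $\literall \in \getFirst{\regexr}$ satisfies $\symbola \in \literall$, $\wordu \in \lang{\pderiv{\literall}{\regexr}}$, and $\wordu \in \lang{\nderiv{\literall}{\regexr}}$. Since $\symbola \in \literall$, the literal $\literall$ is non-empty, so Theorem~\ref{thm:firstderivative} applies and gives $\lang{\pderiv{\literall}{\regexr}} = \lang{\deriv{\symbola}{\regexr}}$. Hence $\wordu \in \lang{\deriv{\symbola}{\regexr}}$, as required. (Either membership on $\wordu$ would have sufficed here, but the coverage statement carries both for downstream use.)

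For the $(\Rightarrow)$ direction, assume $\wordu \in \lang{\deriv{\symbola}{\regexr}}$. By the characterisation of the Brzozowski derivative, $\symbola\wordu \in \lang{\regexr}$, so $\symbola \in \getNext{\regexr}$. Lemma~\ref{thm:first} then yields $\symbola \in \bigcup \getFirst{\regexr}$, so we may pick some $\literall \in \getFirst{\regexr}$ with $\symbola \in \literall$. This $\literall$ is again non-empty, so Theorem~\ref{thm:firstderivative} supplies the equalities $\lang{\pderiv{\literall}{\regexr}} = \lang{\nderiv{\literall}{\regexr}} = \lang{\deriv{\symbola}{\regexr}}$, and therefore $\wordu$ lies in both $\lang{\pderiv{\literall}{\regexr}}$ and $\lang{\nderiv{\literall}{\regexr}}$, giving the required witness.

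The substantive work has already been done in Theorem~\ref{thm:firstderivative} (proved by induction on $\regexr$) and Lemma~\ref{thm:first}; no further induction is needed here. The only point requiring care is the side-condition $\literall \neq \emptyset$ in Theorem~\ref{thm:firstderivative}, but this is automatic in both directions since $\literall$ is always selected as a literal containing the fixed symbol $\symbola$. In that sense there is no real obstacle: the lemma is essentially a bookkeeping corollary of the earlier left-quotient theorem together with the coverage property of \emph{next literals}.
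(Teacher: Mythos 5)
Your proof is correct and takes exactly the paper's route: the paper's own proof of this lemma is precisely the remark that it follows from Theorem~\ref{thm:firstderivative} and Lemma~\ref{thm:first}, and your write-up supplies the routine details (the left-quotient characterisation $\wordu\in\lang{\deriv{\symbola}{\regexr}}\Leftrightarrow\symbola\wordu\in\lang{\regexr}$, membership of $\symbola$ in $\getNext\regexr$, and non-emptiness of the chosen literal so that the side condition of Theorem~\ref{thm:firstderivative} is met).
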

\begin{proof}[Proof of Lemma~\ref{thm:coverage}]
  This result follows from Theorem~\ref{thm:firstderivative} and Lemma~\ref{thm:first}.
\end{proof}
We conclude that to determine a finite set of  representatives for all derivatives of a regular
expression $\regexr$ it is sufficient to select one symbol $\symbola$ from each equivalence class 
$\literall \in \getFirst\regexr \setminus \{\emptyset\}$ and calculate $\deriv\symbola\regexr$.
Alternatively, we may calculate $\pderiv\literall\regexr$ or
$\nderiv\literall\regexr$ according to Theorem~\ref{thm:firstderivative}.
It remains to lift this result to solving inequalities.

%  _____       _       _               _____                              _ _ _   _           
% / ____|     | |     (_)             |_   _|                            | (_) | (_)          
%| (___   ___ | |_   ___ _ __   __ _    | |  _ __   ___  __ _ _   _  __ _| |_| |_ _  ___  ___ 
% \___ \ / _ \| \ \ / / | '_ \ / _` |   | | | '_ \ / _ \/ _` | | | |/ _` | | | __| |/ _ \/ __|
% ____) | (_) | |\ V /| | | | | (_| |  _| |_| | | |  __/ (_| | |_| | (_| | | | |_| |  __/\__ \
%|_____/ \___/|_| \_/ |_|_| |_|\__, | |_____|_| |_|\___|\__, |\__,_|\__,_|_|_|\__|_|\___||___/
%                               __/ |                      | |                                
%                              |___/                       |_|                                
\section{Solving Inequalities}
\label{sec:inequalities}

Theorem~\ref{thm:containment-antimirov} is the foundation of
Antimirov's algorithm. It turns out that we can prove a stronger
version of this theorem, which makes the rules \RefTirName{CC-Disprove}
and \RefTirName{CC-Unfold} sound and complete and which also encompasses
the soundness of the restriction to first sets.

\begin{theorem}[Containment]\label{thm:symbol-containment}
  \begin{align*}
    \isSubSetOf{\regexr}{\regexs}~\Leftrightarrow&~(\isNullable{\regexr}\Rightarrow\isNullable{\regexs})~\wedge~(\forall\symbola\in\getNext{\regexr})~\isSubSetOf{\deriv{\symbola}{\regexr}}{\deriv{\symbola}{\regexs}}
  \end{align*}
\end{theorem}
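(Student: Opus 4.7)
The plan is to prove both directions of the biconditional directly from the definition of the language of a regular expression and the fundamental property of Brzozowski derivatives, namely that $\lang{\deriv{\symbola}{\regexr}} = \leftquotient{\symbola}{\lang{\regexr}}$. The statement is really just a one-step refinement of Theorem~\ref{thm:containment-antimirov}: it splits the universally quantified word $\wordu$ into the case $\wordu = \regexEmpty$ (handled by the nullability guard) and the case $\wordu = \symbola\wordv$ (handled by the derivative clause), and additionally observes that only symbols in $\getNext{\regexr}$ are relevant on the left.

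For the forward direction ($\Rightarrow$), assume $\lang{\regexr} \subseteq \lang{\regexs}$. The nullability implication is immediate: if $\regexEmpty \in \lang{\regexr}$, then $\regexEmpty \in \lang{\regexs}$. For the derivative clause, I would fix $\symbola \in \getNext{\regexr}$ and use the quotient characterization: since $\lang{\regexr} \subseteq \lang{\regexs}$, taking the left quotient by $\symbola$ on both sides yields $\leftquotient{\symbola}{\lang{\regexr}} \subseteq \leftquotient{\symbola}{\lang{\regexs}}$, which by the Brzozowski specification rewrites to $\lang{\deriv{\symbola}{\regexr}} \subseteq \lang{\deriv{\symbola}{\regexs}}$.

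For the backward direction ($\Leftarrow$), assume the two conjuncts on the right and pick an arbitrary $\wordw \in \lang{\regexr}$; the goal is to show $\wordw \in \lang{\regexs}$. I would split on the shape of $\wordw$. If $\wordw = \regexEmpty$, then $\isNullable{\regexr}$ holds, so by hypothesis $\isNullable{\regexs}$ holds as well, giving $\regexEmpty \in \lang{\regexs}$. Otherwise $\wordw = \symbola\wordv$ for some $\symbola$ and $\wordv$; by definition of $\getNext{\regexr}$ we have $\symbola \in \getNext{\regexr}$, and $\wordv \in \leftquotient{\symbola}{\lang{\regexr}} = \lang{\deriv{\symbola}{\regexr}}$. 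The hypothesis then gives $\wordv \in \lang{\deriv{\symbola}{\regexs}} = \leftquotient{\symbola}{\lang{\regexs}}$, so $\symbola\wordv \in \lang{\regexs}$, as required.

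There is no real obstacle here — the argument is essentially bookkeeping on top of the quotient/derivative correspondence. The only point worth being careful about is that the restriction of the quantifier to $\getNext{\regexr}$ (rather than all of $\regexAlphabet$) is harmless in the backward direction precisely because any $\wordw \in \lang{\regexr}$ that is not empty must start with a symbol in $\getNext{\regexr}$; for symbols outside $\getNext{\regexr}$ the derivative $\deriv{\symbola}{\regexr}$ denotes $\emptyset$ anyway, so nothing is lost by ignoring them.
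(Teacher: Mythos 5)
Your proposal is correct and takes essentially the same route as the paper: the paper's proof is a single chain of equivalences that decomposes each word of $\lang{\regexr}$ into either $\regexEmpty$ (handled by the nullability implication) or $\symbola\wordu$ with $\symbola\in\getNext{\regexr}$ (handled via the correspondence $\lang{\deriv{\symbola}{\regexr}} = \leftquotient{\symbola}{\lang{\regexr}}$), which is exactly your two directions written bidirectionally. The only cosmetic difference is that the paper routes word membership through the nullability characterization $\wordw\in\lang{\regexs} \Leftrightarrow \isNullable{\deriv{\wordw}{\regexs}}$, whereas you argue directly with quotients; there is no substantive gap.
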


\begin{proof}[Proof of Theorem~\ref{thm:symbol-containment}]
  See Appendix~\ref{sec:proof-semantic-containment}.
\end{proof}
% Recall that $\getNext{\regexr} = \{\symbola~|~\symbola\wordw\in\lang{\regexr}\}$
% and that $\deriv{\symbola}{\regexr} = \regexNull$ for all
% $\symbola\notin\getNext\regexr$.
As we remarked before, it may be very expensive (or even impossible) to construct all derivatives with respect to the first
symbols, particularly for negated expressions and for large or infinite alphabets. To obtain a decision procedure for
containment, we need a finite set of derivatives. Therefore, we use next literals as representatives of the first
symbols and use Brzozowski derivatives on literals (Definition~\ref{def:derivative-literals}) on both sides.

To define the next literals of an inequality $\checkSubSetOf{\regexr}{\regexs}$, it would be sound to use the join of
the next literals of both sides: $\getFirst\regexr \literaljoin \getFirst\regexs$. However, we can do slightly
better. Theorem~\ref{thm:symbol-containment} proves that the first symbols of $\regexr$ are sufficient to prove
containment. Using the full join operation, however, would cover $\getNext\regexr \cup \getNext\regexs$ (by
Lemma~\ref{thm:first}). Hence, we define a left-biased version of the join operator that only covers the symbols of its
left operand.
\begin{definition}[Left Join]\label{def:left-join}
  Let $\literalset_1$ and $\literalset_2$ be two sets of mutually
  disjoint literals.
  \begin{align*}
    \literalset_1 \literalleftjoin \literalset_2~\df&
    \{
      (\literall_1\literalcap\literall_2),
      (\literall_1\literalcap\inv{\literalbigcup\literalset_2})
      \mid
    \literall_1\in \literalset_1, \literall_2\in \literalset_2\}
  \end{align*}
\end{definition}
The following lemma states the properties of the left join operation.
\begin{lemma}[Properties of Left Join]\label{lemma:properties-of-left-join}
  Let $\literalset_1$ and $\literalset_2$ be non-empty sets of mutually disjoint literals.
  \begin{enumerate}
    \item $\bigcup (\literalset_1 \literalleftjoin \literalset_2) = \bigcup \literalset_1$.
    \item $(\forall \literall\ne\literall' \in \literalset_1 \literalleftjoin \literalset_2)$ $\literall\sqcap\literall' = \emptyset$.
    \item $(\forall \literall \in \literalset_1 \literalleftjoin \literalset_2)$ $(\forall \literall_i \in \literalset_i)$
      $\literall \sqcap \literall_i \ne \emptyset \Rightarrow \literall \sqsubseteq \literall_i$.
  \end{enumerate}
\end{lemma}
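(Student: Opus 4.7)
The plan is to mimic the proof of Lemma~\ref{lemma:properties-of-join}. The key observation is that $\literaljoin$ and $\literalleftjoin$ differ only in that the left join omits the symmetric family $\{\overline{\bigsqcup \literalset_1} \sqcap \literall_2 \mid \literall_2 \in \literalset_2\}$. Consequently, property~(1) weakens from covering $\bigcup \literalset_1 \cup \bigcup \literalset_2$ down to covering only $\bigcup \literalset_1$, while properties~(2) and~(3) should transfer after a case analysis parallel to the one already used for $\literaljoin$.

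For~(1), I would first observe that $\bigcup(\literalset_1 \literalleftjoin \literalset_2) \subseteq \bigcup \literalset_1$ is immediate, since every element of the left join has the form $\literall_1 \sqcap (\cdot)$ for some $\literall_1 \in \literalset_1$. For the reverse inclusion, I would fix $\literall_1 \in \literalset_1$ and use distributivity of the underlying Boolean algebra to decompose
\[
  \literall_1 \;=\; \Bigl(\bigsqcup_{\literall_2 \in \literalset_2}(\literall_1 \sqcap \literall_2)\Bigr) \;\sqcup\; \bigl(\literall_1 \sqcap \overline{\bigsqcup \literalset_2}\bigr),
\]
each summand of which lies in $\literalset_1 \literalleftjoin \literalset_2$ by construction.

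For~(2), I plan a case analysis on the shapes of two distinct elements $\literall, \literall'$ of $\literalset_1 \literalleftjoin \literalset_2$. If both have the form $\literall_1 \sqcap \literall_2$, then either the $\literalset_1$-coordinates or the $\literalset_2$-coordinates differ, and mutual disjointness of the corresponding $\literalset_i$ forces $\literall \sqcap \literall' = \emptyset$. In the mixed case one element sits inside $\bigsqcup \literalset_2$ while the other sits inside $\overline{\bigsqcup \literalset_2}$, so the intersection is again empty. Two elements both of the form $\literall_1 \sqcap \overline{\bigsqcup \literalset_2}$ must have distinct $\literall_1$-coordinates, so disjointness of $\literalset_1$ finishes the case.

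For~(3), given $\literall$ in the left join with $\literall \sqcap \literall_i \ne \emptyset$ for some $\literall_i \in \literalset_i$, I would identify the unique witness $\literall_1^{\ast} \in \literalset_1$ used to construct $\literall$, and, in its first form, the unique $\literall_2^{\ast} \in \literalset_2$. Since $\literall \sqsubseteq \literall_1^{\ast}$, any nontrivial overlap with another $\literall_1' \in \literalset_1$ would contradict disjointness of $\literalset_1$, so $\literall_i = \literall_1^{\ast}$ and hence $\literall \sqsubseteq \literall_i$. The argument for $\literall_i \in \literalset_2$ is analogous when $\literall$ has the first form, and vacuous when $\literall$ has the second, since then $\literall \sqsubseteq \overline{\bigsqcup \literalset_2}$ leaves no room for a nontrivial intersection with any $\literall_2 \in \literalset_2$. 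The only step that needs careful bookkeeping is the mixed-form case of~(2); no deeper obstacle is anticipated, as every claim ultimately reduces to Boolean identities together with the given mutual disjointness of $\literalset_1$ and $\literalset_2$.
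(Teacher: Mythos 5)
Your proposal is correct and matches the paper's intent: the paper proves this lemma simply by declaring it ``analogous to the proof of Lemma~\ref{lemma:properties-of-join}'', and your argument is exactly that analogy carried out, with the same element/shape case analysis for parts~(2) and~(3) and the same partition of each $\literall_1$ by membership in $\literalbigcup\literalset_2$ for part~(1) (your algebraic distributivity decomposition is just the element-wise case split of the paper's Join proof phrased as a Boolean identity). No gaps; the only cosmetic remark is that uniqueness of the witness $\literall_1^{\ast}$ in part~(3) is not needed, only its existence.
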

\begin{proof}[Proof of Lemma~\ref{lemma:properties-of-left-join}]
  Analogous to the proof of Lemma~\ref{lemma:properties-of-join} in Appendix~\ref{sec:properties-join}.
\end{proof}
\begin{definition}[Next Literals of an Inequality]
  Let $\checkSubSetOf{\regexr}{\regexs}$ be an inequality.
  \begin{align*}
    \getFirst{\checkSubSetOf{\regexr}{\regexs}} \df
    \getFirst\regexr \literalleftjoin \getFirst\regexs
  \end{align*}
\end{definition}
Finally, we can state a generalization of Antimirov's containment theorem for
\ERE{}s, where each unfolding step generates only finitely many derivatives.
\begin{theorem}[Containment]\label{thm:containment}
  For all regular expressions $\regexr$ and $\regexs$,
  \begin{align*}
    \isSubSetOf{\regexr}{\regexs}~\Leftrightarrow&
    ~(\isNullable{\regexr}\Rightarrow\isNullable{\regexs})
    ~\wedge~
    (\forall\literall\in\getFirst{\checkSubSetOf{\regexr}{\regexs}})~%
    %\nderiv{\literall}{%
    \isSubSetOf{\deriv{\literall}{\regexr}}{\deriv{\literall}{\regexs}}.
  \end{align*}
\end{theorem}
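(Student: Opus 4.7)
The plan is to derive Theorem~\ref{thm:containment} directly from Theorem~\ref{thm:symbol-containment} by showing that quantification over the finitely many next literals of the inequality is equivalent to quantification over all first symbols of $\regexr$. Since the nullability conjunct is identical on both sides, it suffices to prove
\[
  (\forall\symbola\in\getNext{\regexr})~\isSubSetOf{\deriv{\symbola}{\regexr}}{\deriv{\symbola}{\regexs}}
  \;\Longleftrightarrow\;
  (\forall\literall\in\getFirst{\checkSubSetOf{\regexr}{\regexs}})~\isSubSetOf{\deriv{\literall}{\regexr}}{\deriv{\literall}{\regexs}}.
\]

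Before attacking the equivalence I would verify that $\deriv\literall\regexr$ and $\deriv\literall\regexs$ are well defined for every $\literall \in \getFirst\regexr \literalleftjoin \getFirst\regexs$. By construction of $\literalleftjoin$, every non-empty $\literall$ in the left join is contained in some $\literall_1 \in \getFirst\regexr$, so Definition~\ref{def:derivative-literals} applies on the left. On the right, two cases arise: either $\literall = \literall_1 \sqcap \literall_2$ with $\literall_2 \in \getFirst\regexs$, and Definition~\ref{def:derivative-literals} again applies directly; or $\literall = \literall_1 \sqcap \inv{\bigcup \getFirst\regexs}$, in which case every $\symbola \in \literall$ lies outside $\bigcup\getFirst\regexs$, so Lemma~\ref{lemma:partial-equivalence}(2) forces $\deriv\symbola\regexs = \regexNull$ uniformly and we may set $\deriv\literall\regexs \df \regexNull$ consistently. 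Combining this with Lemma~\ref{lemma:partial-equivalence}(1), for any non-empty $\literall$ in the left join and any $\symbola \in \literall$ we get $\deriv\literall\regexr = \deriv\symbola\regexr$ and $\deriv\literall\regexs = \deriv\symbola\regexs$.

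For the $(\Leftarrow)$ direction, pick $\symbola \in \getNext\regexr$. By Lemma~\ref{thm:first}, $\symbola \in \bigcup\getFirst\regexr$, and by Lemma~\ref{lemma:properties-of-left-join}(1), $\symbola \in \bigcup(\getFirst\regexr \literalleftjoin \getFirst\regexs)$. So some $\literall$ in the left join contains $\symbola$, and the pointwise equalities above convert the hypothesis $\isSubSetOf{\deriv\literall\regexr}{\deriv\literall\regexs}$ into $\isSubSetOf{\deriv\symbola\regexr}{\deriv\symbola\regexs}$, as required. For the $(\Rightarrow)$ direction, pick $\literall$ in the left join. If $\literall = \emptyset$, both derivatives denote the empty language and containment is trivial. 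Otherwise choose any $\symbola \in \literall$: if $\symbola \in \getNext\regexr$, the hypothesis immediately yields $\isSubSetOf{\deriv\symbola\regexr}{\deriv\symbola\regexs}$, which is exactly $\isSubSetOf{\deriv\literall\regexr}{\deriv\literall\regexs}$; if $\symbola \notin \getNext\regexr$, then $\lang{\deriv\symbola\regexr} = \emptyset$ and thus $\lang{\deriv\literall\regexr} = \emptyset \subseteq \lang{\deriv\literall\regexs}$.

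The only real obstacle is that Definition~\ref{def:derivative-literals} guarantees $\deriv\literall\regexs$ to be well defined only when $\literall$ is contained in some literal of $\getFirst\regexs$. For the \emph{residual} literals $\literall_1 \sqcap \inv{\bigcup\getFirst\regexs}$ produced by the left-biased join this condition fails, but Lemma~\ref{lemma:partial-equivalence}(2) rescues the situation by forcing the pointwise derivative to $\regexNull$ on precisely those residuals. Once this bookkeeping is done, the theorem follows by chaining the symbol/literal equivalence above with Theorem~\ref{thm:symbol-containment}.
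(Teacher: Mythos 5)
Your proof is correct, but it takes a genuinely different route from the paper's. The paper argues by contraposition and, in fact, explicitly treats only one direction: from a word $\wordu\in\lang{\regexr}\setminus\lang{\regexs}$ it either refutes the nullability implication (when $\wordu=\regexEmpty$) or, writing $\wordu=\symbola\wordv$, it uses Lemma~\ref{thm:first}, Theorem~\ref{thm:firstderivative} and the coverage lemma (Lemma~\ref{thm:preservation}) to exhibit a literal $\literall\in\getFirst{\checkSubSetOf{\regexr}{\regexs}}$ with $\isNotSubSetOf{\deriv{\literall}{\regexr}}{\deriv{\literall}{\regexs}}$; that establishes that the right-hand side entails $\isSubSetOf{\regexr}{\regexs}$, while the converse direction is left implicit. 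You instead factor the whole statement through Theorem~\ref{thm:symbol-containment}, proving that quantification over $\symbola\in\getNext{\regexr}$ and quantification over $\literall\in\getFirst{\regexr}\literalleftjoin\getFirst{\regexs}$ are interchangeable, via Lemma~\ref{lemma:properties-of-left-join}(1), Lemma~\ref{thm:first}, Lemma~\ref{lemma:partial-equivalence} and Definition~\ref{def:derivative-literals}. This is more modular, yields both directions of the equivalence at once, and makes explicit a point the paper glosses over: Definition~\ref{def:derivative-literals} defines $\deriv{\literall}{\regexs}$ only when $\literall$ lies inside some literal of $\getFirst{\regexs}$, so the residual literals $\literall_1\sqcap\inv{\literalbigcup\getFirst{\regexs}}$ produced by the left-biased join need exactly the separate argument you give via Lemma~\ref{lemma:partial-equivalence}(2). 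Your convention for $\literall=\emptyset$ (both sides denote the empty language, so the containment holds trivially) is also the right reading: interpreting the left-hand derivative for the empty literal as $\nderiv{\emptyset}{\regexr}=\regexWords$, as the paper's subsequent definition of $\nderiv{\literall}{\checkSubSetOf{\regexr}{\regexs}}$ might suggest, would falsify the right-hand side even when $\isSubSetOf{\regexr}{\regexs}$ holds (e.g., for $\regexr=\regexs=\regexEmpty$). What the paper's counterexample-driven argument buys in exchange is that it directly produces the witness literal consumed by rule \Rule{\RuleCCUnfoldFalse}, which is the shape of reasoning reused in the soundness proof of the decision procedure (Theorem~\ref{thm:soundness}).
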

\begin{proof}[Proof of Theorem~\ref{thm:containment}]
  The proof is by contraposition.
  If $\isNotSubSetOf{\regexr}{\regexs}$ then $\exists\literall\in\getFirst{\checkSubSetOf{\regexr}{\regexs}}:$
  $ \isNotSubSetOf{\deriv{\literall}{\regexr}}{\deriv{\literall}{\regexs}}$ or
  $\neg(\isNullable{\regexr}\Rightarrow\isNullable{\regexs})$.
  See also Appendix~\ref{sec:proof-symbolic-containment}.
\end{proof}
For $\literall\in\getFirst{\checkSubSetOf{\regexr}{\regexs}}$ define
$\nderiv\literall{\checkSubSetOf{\regexr}{\regexs}}
\df (\checkSubSetOf{\nderiv\literall\regexr}{\pderiv\literall\regexs})
= (\checkSubSetOf{\deriv\literall\regexr}{\deriv\literall\regexs})$.
\begin{theorem}[Finiteness]\label{thm:finiteness}
  Let $R$ be a finite set of regular inequalities. Define
  \begin{displaymath}
    F (R) =
    R \cup \{ \nderiv{\literall}{ \checkSubSetOf{{\regexr}}{{\regexs}}} \mid \checkSubSetOf{\regexr}{\regexs} \in R, \literall\in\getFirst{\checkSubSetOf{\regexr}{\regexs}} \}
  \end{displaymath}
  For each $\regexr$ and $\regexs$, the set $\bigcup_{i\in\nat}F^{(i)}
  (\{\isSubSetOf{\regexr}{\regexs}\})$ is finite.
\end{theorem}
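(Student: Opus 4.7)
The plan is to reduce the finiteness statement to Brzozowski's classical finiteness theorem for iterated derivatives. The key bridge is Definition~\ref{def:derivative-literals}: when $\literall \in \getFirst{\checkSubSetOf\regexr\regexs}$ is non-empty, $\nderiv{\literall}{\checkSubSetOf\regexr\regexs}$ unfolds to $\checkSubSetOf{\deriv\symbola\regexr}{\deriv\symbola\regexs}$ for some $\symbola \in \literall$, which is exactly an ordinary Brzozowski derivative on both sides by a single symbol. Hence applying $F$ only produces pairs of iterated Brzozowski derivatives of the two original expressions.

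First, I would formalize the observation above by induction on $i$: every inequality in $F^{(i)}(\{\checkSubSetOf{\regexr}{\regexs}\})$ has the form $\checkSubSetOf{\deriv{\wordu}{\regexr}}{\deriv{\wordu}{\regexs}}$ for some word $\wordu \in \regexWords$. The base case $i = 0$ takes $\wordu = \regexEmpty$. For the step, if $\checkSubSetOf{\deriv{\wordu}{\regexr}}{\deriv{\wordu}{\regexs}} \in F^{(i)}$ and $\literall \in \getFirst{\checkSubSetOf{\deriv\wordu\regexr}{\deriv\wordu\regexs}}$, then by Definition~\ref{def:derivative-literals} there is some $\symbola \in \literall$ with $\deriv\literall{\deriv\wordu\regexr} = \deriv\symbola{\deriv\wordu\regexr} = \deriv{\wordu\symbola}{\regexr}$, and analogously on the right; the empty-literal case contributes $\checkSubSetOf{\regexNull}{\regexWords}$, a single fixed atom. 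Lemma~\ref{lemma:partial-equivalence}(1) ensures the choice of $\symbola$ is immaterial on the left; on the right we use that $\literall \subseteq \literall_2$ for some $\literall_2 \in \getFirst\regexs$ (by Lemma~\ref{lemma:properties-of-left-join}(3)), so $\deriv\literall{\deriv\wordu\regexs}$ is likewise a genuine Brzozowski symbol derivative, independent of the chosen representative.

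Consequently $\bigcup_i F^{(i)}(\{\checkSubSetOf\regexr\regexs\})$ is contained in $\{\checkSubSetOf{\deriv\wordu\regexr}{\deriv\wordu\regexs} \mid \wordu \in \regexWords\}$ together with one absorbing atom. I would then invoke Brzozowski's classical result~\cite{Brzozowski1964}, which applies to the full extended syntax (including $\regexAndOp$ and $\regexNegOp$) and states that every extended regular expression has only finitely many dissimilar iterated derivatives, where similarity is the ACI congruence for $\regexOrOp$ with $\regexNull$ as identity—precisely the identification adopted in Section~\ref{sec:preliminaries} of this paper. Let $D(\regexr) = \{\deriv\wordu\regexr \mid \wordu \in \regexWords\}/{\similar}$ and $D(\regexs)$ analogously; both are finite.

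Finiteness of the whole set then follows, since it injects into $(D(\regexr) \cup \{\regexNull\}) \times (D(\regexs) \cup \{\regexWords\})$, which is a finite cartesian product. The main obstacle, and the only non-routine part, is the careful verification in the inductive step that the inner derivative $\deriv\literall{\deriv\wordu\regexs}$ on the right hand side indeed collapses to a symbol-indexed Brzozowski derivative: this hinges on Lemma~\ref{lemma:properties-of-left-join}(3), because $\literall$ is drawn from $\getFirst{\checkSubSetOf{\deriv\wordu\regexr}{\deriv\wordu\regexs}} = \getFirst{\deriv\wordu\regexr} \literalleftjoin \getFirst{\deriv\wordu\regexs}$, and only its inclusion in some element of $\getFirst{\deriv\wordu\regexs}$ makes Definition~\ref{def:derivative-literals} applicable on that side. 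Once this is secured, the reduction to Brzozowski's theorem is immediate.
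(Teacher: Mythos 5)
Your proposal takes essentially the same route as the paper's own proof, which likewise reduces finiteness to Brzozowski's theorem on dissimilar iterated derivatives by observing that $\nderiv{\literall}{\checkSubSetOf{\regexr}{\regexs}} = \checkSubSetOf{\deriv{\literall}{\regexr}}{\deriv{\literall}{\regexs}}$ is ``essentially applying the Brzozowski derivative to a pair of (extended) regular expressions''; your induction on $i$ and the injection into a finite product merely make that sketch explicit. There is, however, one slip in precisely the step you single out as the crux: Lemma~\ref{lemma:properties-of-left-join}(3) is conditional --- it yields $\literall \sqsubseteq \literall_2$ only \emph{if} $\literall \literalcap \literall_2 \ne \emptyset$ --- and it does \emph{not} guarantee that every non-empty $\literall \in \getFirst{\deriv{\wordu}{\regexr}} \literalleftjoin \getFirst{\deriv{\wordu}{\regexs}}$ lies inside some element of $\getFirst{\deriv{\wordu}{\regexs}}$. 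The left join also produces literals of the form $\literall_1 \literalcap \inv{\literalbigcup \getFirst{\deriv{\wordu}{\regexs}}}$, which are disjoint from \emph{every} element of $\getFirst{\deriv{\wordu}{\regexs}}$, so Definition~\ref{def:derivative-literals} is simply not applicable on the right-hand side for those literals. The gap is repairable without changing your architecture: for such $\literall$, picking any $\symbola \in \literall$ still turns the right-hand component into the iterated Brzozowski derivative $\deriv{\wordu\symbola}{\regexs}$, which by Lemma~\ref{lemma:partial-equivalence}(2) is semantically $\regexNull$, so the choice of representative is immaterial at the level of languages; the generated atom remains of the form $\checkSubSetOf{\deriv{\wordu\symbola}{\regexr}}{\deriv{\wordu\symbola}{\regexs}}$ and your counting argument survives. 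A cosmetic remark: with the paper's definition $\nderiv{\literall}{\checkSubSetOf{\regexr}{\regexs}} \df \checkSubSetOf{\nderiv{\literall}{\regexr}}{\pderiv{\literall}{\regexs}}$, the empty literal contributes the atom $\checkSubSetOf{\regexWords}{\regexNull}$ rather than $\checkSubSetOf{\regexNull}{\regexWords}$ --- either way a single extra element, so finiteness is unaffected.
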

\begin{proof}[Proof of Theorem~\ref{thm:finiteness}]
  As we consider regular expressions up to similarity (as defined by Brzozowski \cite{Brzozowski1964}) and
  $\nderiv{\literall}{ \checkSubSetOf{{\regexr}}{{\regexs}}} = \checkSubSetOf{\deriv\literall\regexr}{\deriv\literall\regexs}$ is
  essentially applying the Brzozowski derivative to a pair of (extended) regular expressions, we know that the set of
  these pairs is finite (because there are only finitely many dissimilar iterated Brzozowski derivatives for a regular
  expression \cite{Brzozowski1964}). 
\end{proof}
\begin{figure}[t]
  \begin{mathpar}
    \inferrule[\RuleCCDisprove]
    {%
      \isNullable{\regexr}\\
      \neg\isNullable{\regexs}
    }
    {%
      \ccCtx~\entails~\checkSubSetOf{\regexr}{\regexs}~:~\false%
    }
    \and
    \inferrule[\RuleCCDelete]
    {%
      \inCcCtx{\checkSubSetOf{\regexr}{\regexs}}%
    }
    {%
      \ccCtx~\entails~\checkSubSetOf{\regexr}{\regexs}~:~\true%
    }\and
    \inferrule[\RuleCCUnfoldTrue]
    {%
      \notInCcCtx{\checkSubSetOf{\regexr}{\regexs}}\\
      \isNullable\regexr \Rightarrow \isNullable\regexs\\
      \forall\literall\in\getFirst{\checkSubSetOf{\regexr}{\regexs}}:
      ~\ccCtx \cup\{\checkSubSetOf{\regexr}{\regexs}\}~\entails~
      \checkSubSetOf{\deriv{\literall}{\regexr}}{\deriv{\literall}{\regexs}}~:~\true%
    }
    {%
      \ccCtx~\entails~\checkSubSetOf{\regexr}{\regexs}~:~\true%
    }\and
    \inferrule[\RuleCCUnfoldFalse]
    {%
      \notInCcCtx{\checkSubSetOf{\regexr}{\regexs}}\\
      \isNullable\regexr \Rightarrow \isNullable\regexs\\
      \exists\literall\in\getFirst{\checkSubSetOf{\regexr}{\regexs}}:
      ~\ccCtx \cup\{\checkSubSetOf{\regexr}{\regexs}\}~\entails~
      \checkSubSetOf{\deriv{\literall}{\regexr}}{\deriv{\literall}{\regexs}}~:~\false%
    }
    {%
      \ccCtx~\entails~\checkSubSetOf{\regexr}{\regexs}~:~\false%
    }
  \end{mathpar}
  \caption{Decision procedure for containment.}\label{fig:cc_unfold}
\end{figure}
These results are the basis for a complete decision procedure for
solving inequalities on extended regular expressions where literals are defined via an effective boolean algebra.
Figure~\ref{fig:cc_unfold} defines this procedure as a judgment of the form
$\ccCtx\entails\checkSubSetOf{\regexr}{\regexs}~:~b$, where $\ccCtx$ is a set of previous visited inequalities
$\checkSubSetOf\regexr\regexs$ with $\isNullable\regexr \Rightarrow \isNullable\regexs$ that are assumed to be true and
$b \in \{\true, \false\}$. The effective boolean algebra comes into play in the computation of the next literals
and in the computation of the derivatives.

Rule \Rule{\RuleCCDisprove} detects contradictory inequalities in the same way as Antimirov's rule \RefTirName{CC-Disprove}.
Rule \Rule{\RuleCCDelete} detects circular reasoning: Under the assumption that $\checkSubSetOf\regexr\regexs$ holds we
were not (yet) able to derive a contradiction and thus conclude that $\checkSubSetOf\regexr\regexs$ holds. This rule
guarantees termination because of the finiteness result (Theorem~\ref{thm:finiteness}).
The rules \Rule{\RuleCCUnfoldTrue} and \Rule{\RuleCCUnfoldFalse} apply only if $\checkSubSetOf{\regexr}{\regexs}$ is
neither contradictory nor in the context. A deterministic implementation would generate the literals
$\literall\in\getFirst{\checkSubSetOf\regexr\regexs}$ and recursively check
$\nderiv{\literall}{\checkSubSetOf{\regexr}{\regexs}}$. If any of these checks returns false, then
\Rule{\RuleCCUnfoldFalse} fires. Otherwise \Rule{\RuleCCUnfoldTrue} signals a successful containment
proof. Theorem~\ref{thm:containment} is the basis for soundness and completeness of the unfolding rules.
\begin{theorem}[Soundness]\label{thm:soundness}
  For all regular expression $\regexr$ and $\regexs$:
  \begin{align*}
    \emptyset~\entails~\checkSubSetOf{\regexr}{\regexs}~:~\top~\Leftrightarrow~\isSubSetOf{\regexr}{\regexs}
  \end{align*}
\end{theorem}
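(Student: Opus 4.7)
The plan is to split the biconditional into soundness ($\Rightarrow$) and completeness ($\Leftarrow$), powered by Theorem~\ref{thm:containment} and kept finitary by Theorem~\ref{thm:finiteness}.

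For soundness ($\Rightarrow$), I would argue by contradiction. Suppose a derivation $\mathcal{D}$ of $\emptyset \entails \checkSubSetOf{\regexr}{\regexs} : \true$ exists but $\isNotSubSetOf{\regexr}{\regexs}$. Consider all pairs $(N,\wordw)$ where $N$ is a node of $\mathcal{D}$ with conclusion $\ccCtx_N \entails \checkSubSetOf{\regexr_N}{\regexs_N} : \true$ and $\wordw \in \lang{\regexr_N} \setminus \lang{\regexs_N}$; the root yields such a pair, so the set is nonempty. Pick $(N,\wordw)$ that minimises $|\wordw|$ first, breaking ties by minimising the depth of $N$ in $\mathcal{D}$. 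Case-analyse the rule applied at $N$. The rule \Rule{\RuleCCDisprove} is impossible since it concludes $\false$. If \Rule{\RuleCCDelete} is applied then $\checkSubSetOf{\regexr_N}{\regexs_N} \in \ccCtx_N$, so some strict ancestor $M$ of $N$ pushed this inequality into the context via \Rule{\RuleCCUnfoldTrue}; but $M$ carries the same inequality and therefore admits the same counterexample $\wordw$ of the same length at strictly smaller depth, contradicting the choice of $N$. If \Rule{\RuleCCUnfoldTrue} is applied, then its premise $\isNullable{\regexr_N} \Rightarrow \isNullable{\regexs_N}$ forces $\wordw \neq \regexEmpty$, so write $\wordw = \symbola\wordv$. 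By Lemma~\ref{thm:first} and Lemma~\ref{lemma:properties-of-left-join}(1), some $\literall \in \getFirst{\checkSubSetOf{\regexr_N}{\regexs_N}}$ contains $\symbola$. Lemma~\ref{lemma:partial-equivalence} together with Definition~\ref{def:derivative-literals}, extended by the natural convention $\deriv{\literall}{\regexs_N} = \regexNull$ when $\literall$ fails to meet $\bigcup\getFirst{\regexs_N}$, yields $\deriv{\literall}{\regexr_N} = \deriv{\symbola}{\regexr_N}$ and $\deriv{\literall}{\regexs_N} = \deriv{\symbola}{\regexs_N}$. Hence $\wordv$ witnesses $\isNotSubSetOf{\deriv{\literall}{\regexr_N}}{\deriv{\literall}{\regexs_N}}$ at the corresponding child of $N$, giving a strictly shorter counterexample and contradicting the minimality of $|\wordw|$.

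For completeness ($\Leftarrow$), assume $\isSubSetOf{\regexr}{\regexs}$ and build a derivation top-down, maintaining the invariant $\isSubSetOf{\regexr'}{\regexs'}$ at every judgment $\ccCtx \entails \checkSubSetOf{\regexr'}{\regexs'} : \true$ visited. Starting at $\ccCtx = \emptyset$ with the original pair, the strategy is: if $\checkSubSetOf{\regexr'}{\regexs'} \in \ccCtx$, close the branch with \Rule{\RuleCCDelete}; otherwise the invariant gives $\isNullable{\regexr'} \Rightarrow \isNullable{\regexs'}$, ruling out \Rule{\RuleCCDisprove} and licensing \Rule{\RuleCCUnfoldTrue}, after which we recurse on each $\literall \in \getFirst{\checkSubSetOf{\regexr'}{\regexs'}}$ under context $\ccCtx \cup \{\checkSubSetOf{\regexr'}{\regexs'}\}$. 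Theorem~\ref{thm:containment} preserves the containment invariant along every recursive call, Lemma~\ref{thm:cardinality} provides finite branching, and Theorem~\ref{thm:finiteness} bounds the set of inequalities ever reachable along any path, so the context can grow only finitely often before the \Rule{\RuleCCDelete} case must fire. The construction therefore terminates and produces a valid derivation of $\emptyset \entails \checkSubSetOf{\regexr}{\regexs} : \true$.

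The main obstacle is justifying the coinductive flavour of \Rule{\RuleCCDelete} during the soundness direction. Tie-breaking by depth is precisely what rules out spurious cycles: any candidate Delete leaf would be dominated by a strictly shallower ancestor carrying the identical inequality together with the identical counterexample, and that ancestor would itself have been selected. Once this subtlety is handled, the \Rule{\RuleCCUnfoldTrue} case is a direct application of Theorem~\ref{thm:containment} mediated by the coverage properties of the next-literal machinery, and completeness is an algorithmic unrolling of the same theorem whose termination is guaranteed by Theorem~\ref{thm:finiteness}.
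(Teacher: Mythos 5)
Your proof is correct, but it takes a genuinely different route from the paper's. The paper never reasons about the $\true$ judgment directly: it proves the dual characterization $\ccCtx \entails \checkSubSetOf{\regexr}{\regexs} : \false \Leftrightarrow \isNotSubSetOf{\regexr}{\regexs}$ (for any $\ccCtx$ not containing the inequality), with the only-if direction by plain rule induction on the $\false$-derivation and the if direction by induction on a shortest word in $\lang{\regexr}\setminus\lang{\regexs}$, and then transfers to the stated biconditional with the single remark that Theorem~\ref{thm:finiteness} makes every inequality yield a finite derivation. That transfer silently uses two meta-properties of the procedure --- totality (every inequality receives a verdict) and mutual exclusivity of the verdicts $\true$ and $\false$ --- and these are exactly what your two directions establish explicitly: your completeness direction is in essence a proof of totality on the containment side (a terminating top-down construction of a $\true$-derivation, with termination by strict context growth inside the finite set of Theorem~\ref{thm:finiteness}), and your soundness direction handles the $\true$ judgment head-on. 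The payoff of the paper's arrangement is that $\false$-derivations never contain the cycle rule \Rule{\RuleCCDelete} (it concludes $\true$), so ordinary rule induction suffices and the coinductive subtlety largely disappears; the payoff of yours is that this subtlety is confronted and resolved: your minimal-counterexample argument with the depth tie-break is the standard well-founded justification of cyclic proof rules, and it is exactly right --- a \Rule{\RuleCCDelete} leaf is dominated by a strictly shallower ancestor carrying the same inequality and the same counterexample, and an \Rule{\RuleCCUnfoldTrue} node passes the strictly shorter suffix $\wordv$ of $\wordw = \symbola\wordv$ to a child. Note that your explicit tracking of that suffix, rather than invoking Theorem~\ref{thm:containment} as a black box, is essential here, since the theorem alone would only give \emph{some} failing child and not a \emph{shorter} counterexample for it; the paper's own shortest-word induction relies on the same explicit tracking. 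A further small merit of your write-up: you state the convention $\deriv{\literall}{\regexs} = \regexNull$ when $\literall$ does not meet $\bigcup\getFirst{\regexs}$, which Definition~\ref{def:derivative-literals} omits but which both your proof and the paper's need.
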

\begin{proof}[Proof of Theorem~\ref{thm:soundness}]
  We prove that $\ccCtx \entails \checkSubSetOf{\regexr}{\regexs} ~:~ \false$ iff $\isNotSubSetOf{\regexr}{\regexs}$, for
  all contexts $\ccCtx$ where $\checkSubSetOf{\regexr}{\regexs} \notin \ccCtx$. This is sufficient because each regular
  inequality gives rise to a finite derivation by Theorem~\ref{thm:finiteness}.
  See Appendix~\ref{sec:proof-soundness} for details.
\end{proof}
\begin{figure}[t]
  \centering
  \begin{mathpar}
    \inferrule[\RuleCCProveIdentity]
    {%
    }
    {%
      \ccCtx~\entails~\isSubSetOf{\regexr}{\regexr}~:~\true%
    }\and
    \inferrule[\RuleCCProveEmpty]
    {%
    }
    {%
      \ccCtx~\entails~\isSubSetOf{\regexNull}{\regexs}~:~\true%
    }\and
    \inferrule[\RuleCCProveNullable]
    {%
      \isNullable{\regexs}
    }
    {%
      \ccCtx~\entails~\isSubSetOf{\regexEmpty}{\regexs}~:~\true%
    }\and
    \inferrule[\RuleCCDisproveEmpty]
    {%
      \exists\literall\in\getFirst{\regexr}:~{\literall}\neq\emptyset%
    }
    {%
      \ccCtx~\entails~\isSubSetOf{\regexr}{\regexNull}~:~\false%
    }
  \end{mathpar}
  \caption{Prove and disprove axioms.}\label{fig:cc_axioms}
\end{figure}
In addition to the rules from Figure~\ref{fig:cc_unfold}, we may add
auxiliary rules to detect trivially consistent or inconsistent
inequalities early (Figure~\ref{fig:cc_axioms} contains some examples). Such rules may be used to improve
efficiency. They decide containment directly instead of unfolding repeatedly. 

%  _____                 _           _             
% / ____|               | |         (_)            
%| |     ___  _ __   ___| |_   _ ___ _  ___  _ __  
%| |    / _ \| '_ \ / __| | | | / __| |/ _ \| '_ \ 
%| |___| (_) | | | | (__| | |_| \__ \ | (_) | | | |
% \_____\___/|_| |_|\___|_|\__,_|___/_|\___/|_| |_|
\section{Conclusion}

We extended Antimirov's algorithm for proving containment of regular expressions to extended regular expressions on
potentially infinite alphabets. To work effectively with such alphabets, we require that literals in regular expressions
are drawn from an effective boolean algebra. As a slight difference, we work with Brzozowski derivatives instead of
Antimirov's notion of partial derivative.

The main effort in lifting Antimirov's algorithm is to identify, for each regular inequality
$\checkSubSetOf\regexr\regexs$, a finite set of symbols such that calculating the derivation with respect to these
symbols covers all possible derivations with all symbols. We regard the construction of the set of suitable
representatives, embodied in the notion of next literals $\getFirst{\checkSubSetOf\regexr\regexs}$, as a key
contribution of this work.

\bibliography{abbrevs,abbrv,books,misc,papers,collections,theses}

\newpage
\appendix
\section{Positive and Negative Derivatives}
\label{sec:derivatives_full}

This sections shows the full definition of the \emph{positive} and \emph{negative} derivative operator. The operators are defined by induction and flip on the complement operator.

\subsection{Positive Derivatives}

For all literals $\literall\neq\emptyset$:

\begin{align*}
  \begin{array}{lll}
    \pderiv{\literall}{\regexEmpty} &\df{}& \regexNull\\       
    \pderiv{\literall}{\setb} &\df{}& \begin{cases}
      \regexEmpty, & \literall\sqcap{\setb}\ne\bot \\
      \regexNull, & \textit{otherwise}
    \end{cases}\\
    \pderiv{\literall}{\regexStar{\regexr}} &\df{}& \regexConcat{\pderiv{\literall}{\regexr}}{\regexStar{\regexr}}\\
    \pderiv{\literall}{\regexOr{\regexr}{\regexs}} &\df{}& \regexOr{\pderiv{\literall}{\regexr}}{\pderiv{\literall}{\regexs}}\\
    \pderiv{\literall}{\regexAnd{\regexr}{\regexs}} &\df{}& \regexAnd{\pderiv{\literall}{\regexr}}{\pderiv{\literall}{\regexs}}\\
    \pderiv{\literall}{\regexNeg{\regexr}} &\df{}&  \regexNeg{\nderiv{\literall}{\regexr}}\\
    \pderiv{\literall}{\regexConcat{\regexr}{\regexs}} &\df{}& \begin{cases}
      \regexOr{\regexConcat{\pderiv{\literall}{\regexr}}{\regexs}}{\pderiv{\literall}{\regexs}}, &\isNullable{\regexr}\\
      \regexConcat{\pderiv{\literall}{\regexr}}{\regexs}  , &\textit{otherwise}
    \end{cases}\\
  \end{array}
\end{align*}

\subsection{Negative Derivatives}

For all literals $\literall\neq\emptyset$:

\begin{align*}
  \begin{array}{lll}
    \nderiv{\literall}{\regexEmpty} &\df{}& \regexNull\\       
    \nderiv{\literall}{\setb} &\df{}& \begin{cases}
      \regexEmpty, & {\literall}\sqcap\overline{\setb}=\bot\\
      \regexNull, & \textit{otherwise}
    \end{cases}\\
    \nderiv{\literall}{\regexStar{\regexr}} &\df{}& \regexConcat{\nderiv{\literall}{\regexr}}{\regexStar{\regexr}}\\
    \nderiv{\literall}{\regexOr{\regexr}{\regexs}} &\df{}& \regexOr{\nderiv{\literall}{\regexr}}{\nderiv{\literall}{\regexs}}\\
    \nderiv{\literall}{\regexAnd{\regexr}{\regexs}} &\df{}& \regexAnd{\nderiv{\literall}{\regexr}}{\nderiv{\literall}{\regexs}}\\
    \nderiv{\literall}{\regexNeg{\regexr}} &\df{}&  \regexNeg{\pderiv{\literall}{\regexr}}\\
    \nderiv{\literall}{\regexConcat{\regexr}{\regexs}} &\df{}& \begin{cases}
      \regexOr{\regexConcat{\nderiv{\literall}{\regexr}}{\regexs}}{\nderiv{\literall}{\regexs}}, &\isNullable{\regexr}\\
      \regexConcat{\nderiv{\literall}{\regexr}}{\regexs}  , &\textit{otherwise}
    \end{cases}\\
  \end{array}
\end{align*}

\newpage
\section{Complexity}
\label{sec:complexity}

This section comprises the complexity of the decision procedure. The complexity has two sources: building the next literals and computing the derivatives. 

We express the complexity in terms of the size of a regular expression. The size is directly related to the number of derivation steps and to the number of operations if gathering the next literals.

\begin{definition}[Size]\label{def:size}
  The size $\sizeOf{\regexr}$ of a regular expression $\regexr$ is the number of expression constructors and literals.
  \begin{mathpar}
    \begin{array}{lll}
      \sizeOf{\regexEmpty} &=& 1 \\
      \sizeOf{\regexSet} &=& 1\\
      ~\\
    \end{array}

    \begin{array}{lll}
      \sizeOf{\regexStar{\regexr}} &=& \sizeOf{\regexr} + 1\\
      \sizeOf{\regexOr{\regexr}{\regexs}} &=& \sizeOf{\regexr} + \sizeOf{\regexs} + 1\\
      \sizeOf{\regexConcat{\regexr}{\regexs}} &=& \sizeOf{\regexr} + \sizeOf{\regexs} + 1
    \end{array}

    \begin{array}{lll}
      \sizeOf{\regexAnd{\regexr}{\regexs}} &=& \sizeOf{\regexr} + \sizeOf{\regexs} + 1\\
      \sizeOf{\regexNeg{\regexr}} &=& \sizeOf{\regexr} + 1\\
      ~\\
    \end{array}
  \end{mathpar}
\end{definition}
The number of literals in a regular expression is another useful
measure. 
\begin{definition}[Literal Width]\label{def:width}
  The literal width $\widthOf{\regexr}$ of a regular expression $\regexr$ denotes the total number of literals $\literall$ in $\regexr$.
\end{definition}
Calculating the next literals for $\regexr$ may require a
number of operations on the symbol set representation which is
exponential in the literal width $\widthOf\regexr$, because there are
regular expressions where the number of next literals is already
exponential. For example, consider $\regexr =
\regexAnd{\regexAnd{(\regexOr{\seta_1}{\setb_1})}{(\regexOr{\seta_2}{\setb_2})}}{\dots(\regexOr{\seta_n}{\setb_n})}$
with $\widthOf\regexr = 2n$. With a sufficiently large alphabet, we may choose the sets
$\seta_i$ and $\setb_i$ such that $|\getFirst\regexr| = 2^n$.

The number of different derivatives of a regular expression
is bounded by $2^{\sizeOf\regexr}$
analogously to Brzozowski's result. Hence, the number of different
derivatives of a regular expression inequality $\isSubSetOf\regexr\regexs$ is bounded by
$2^{\sizeOf\regexr + \sizeOf\regexs}$. Taken together, our decision
procedure requires the computation of an exponential number of
derivative operations and, for the result of each of these operations,
a new set of next literals has to be determined, in the worst case. 

The derivative itself runs in constant time in most cases. However, in
the case where the argument expression is a symbol-set literal, a
calculation on the representation of symbol sets is required.

\newpage
\section{Lemma~\ref{thm:derivatives}: Positive and Negative Derivatives}
\label{sec:proof-derivative}

\begin{proof}[Proof of Lemma~\ref{thm:derivatives}]
  For any \ERE{} $\regexr$ and for any literal $\literall$, the following equation holds:
  \begin{gather}
    \lang{\nderiv{\literall}{\regexr}} ~\subseteq~ \bigcap_{\symbola\in{\literall}} \lang{\deriv{\symbola}{\regexr}}
  \end{gather}
  \begin{gather}
    \lang{\pderiv{\literall}{\regexr}} ~\supseteq~ \bigcup_{\symbola\in{\literall}} \lang{\deriv{\symbola}{\regexr}}
  \end{gather}
  Proof by induction on $\regexr$.

  \begin{description}

% ___            _      __      __          _ 
%| __|_ __  _ __| |_ _  \ \    / /__ _ _ __| |
%| _|| '  \| '_ \  _| || \ \/\/ / _ \ '_/ _` |
%|___|_|_|_| .__/\__|\_, |\_/\_/\___/_| \__,_|
%          |_|       |__/                     
    \item[Case] $\regexr=\regexEmpty$:~ Claim holds because $\lang{\nderiv{\literall}{\regexEmpty}} = \lang{\pderiv{\literall}{\regexEmpty}} = \regexNull$.

% ___      _   
%/ __| ___| |_ 
%\__ \/ -_)  _|
%|___/\___|\__|
    \item[Case] $\regexr=\setb$:~ Claim holds because 
      \begin{align}
        \lang{\nderiv{\seta}{\setb}} ~=~ \bigcap_{\symbola\in\seta} \lang{\deriv{\symbola}{\setb}} ~&=~ \begin{cases}
          \{\regexEmpty\}, & \seta\subseteq\setb\\
          \emptyset, & otherwise
        \end{cases}
      \end{align}
      and 
      \begin{align}
        \lang{\pderiv{\seta}{\setb}} ~=~ \bigcup_{\symbola\in\seta} \lang{\deriv{\symbola}{\setb}} ~&=~ \begin{cases}
          \{\regexEmpty\}, & \seta\cap\setb\neq\emptyset\\
          \emptyset, & otherwise
        \end{cases}
      \end{align}

% ___ _            
%/ __| |_ __ _ _ _ 
%\__ \  _/ _` | '_|
%|___/\__\__,_|_|  
    \item[Case] $\regexr=\regexStar{\regexs}$:~ By induction 
      \begin{align}
        &\lang{\nderiv{\literall}{\regexs}} ~\subseteqIH~ \bigcap_{\symbola\in{\literall}} \lang{\deriv{\symbola}{\regexs}}\\
        &\lang{\pderiv{\literall}{\regexs}} ~\supseteqIH~ \bigcup_{\symbola\in{\literall}} \lang{\deriv{\symbola}{\regexs}}
      \end{align}
      holds.
      We obtain that
      \begin{align}
        &\forall\symbola:~ \lang{\deriv{\symbola}{\regexStar{\regexs}}} ~=~ \lang{\deriv{\symbola}{\regexs}\concat\regexStar{\regexs}}\\
        &\forall\literall:~ \lang{\nderiv{\literall}{\regexStar{\regexs}}} ~=~ \lang{\nderiv{\literall}{\regexs}\concat\regexStar{\regexs}}\\
        &\forall\literall:~ \lang{\pderiv{\literall}{\regexStar{\regexs}}} ~=~ \lang{\pderiv{\literall}{\regexs}\concat\regexStar{\regexs}}
      \end{align}
      holds. Claim holds because
      \begin{align}
        \forall\literall:~ \lang{\nderiv{\literall}{\regexStar{\regexs}}}
        ~=&~ \lang{\nderiv{\literall}{\regexs}\concat\regexStar{\regexs}}\\
        ~\subseteqIH&~ \{ \wordu\wordv ~|~ \wordu\in\bigcap_{\symbola\in{\literall}} \lang{\deriv{\symbola}{\regexs}}, \wordv\in\lang{\regexStar{\regexs}}\} \\
        ~=&~ \bigcap_{\symbola\in{\literall}} \{ \wordu\wordv ~|~ \wordu\in\lang{\deriv{\symbola}{\regexs}}, \wordv\in\lang{\regexStar{\regexs}}\} \\
        ~=&~ \bigcap_{\symbola\in{\literall}} \lang{\deriv{\symbola}{\regexStar{\regexs}}}
      \end{align}
      and 
      \begin{align}
        \forall\literall:~ \lang{\pderiv{\literall}{\regexStar{\regexs}}}
        ~=&~ \lang{\pderiv{\literall}{\regexs}\concat\regexStar{\regexs}}\\
        ~\supseteqIH&~ \{ \wordu\wordv ~|~ \wordu\in\bigcup_{\symbola\in{\literall}} \lang{\deriv{\symbola}{\regexs}}, \wordv\in\lang{\regexStar{\regexs}}\} \\
        ~=&~ \bigcup_{\symbola\in{\literall}} \{ \wordu\wordv ~|~ \wordu\in\lang{\deriv{\symbola}{\regexs}}, \wordv\in\lang{\regexStar{\regexs}}\} \\
        ~=&~ \bigcup_{\symbola\in{\literall}} \lang{\deriv{\symbola}{\regexStar{\regexs}}}
      \end{align}

%  ___      
% / _ \ _ _ 
%| (_) | '_|
% \___/|_|  
    \item[Case] $\regexr=\regexOr{\regexs}{\regext}$:~ By induction 
      \begin{align}
        &\lang{\nderiv{\literall}{\regexs}} ~\subseteqIH~ \bigcap_{\symbola\in{\literall}} \lang{\deriv{\symbola}{\regexs}}\\
        &\lang{\nderiv{\literall}{\regext}} ~\subseteqIH~ \bigcap_{\symbola\in{\literall}} \lang{\deriv{\symbola}{\regext}}\\
        &\lang{\pderiv{\literall}{\regexs}} ~\supseteqIH~ \bigcup_{\symbola\in{\literall}} \lang{\deriv{\symbola}{\regexs}}\\
        &\lang{\pderiv{\literall}{\regext}} ~\supseteqIH~ \bigcup_{\symbola\in{\literall}} \lang{\deriv{\symbola}{\regext}}
      \end{align}
      holds.
      We obtain that
      \begin{align}
        &\lang{\deriv{\symbola}{\regexOr{\regexs}{\regext}}} ~=~ \lang{\deriv{\symbola}{\regexs}}\cup\lang{\deriv{\symbola}{\regext}}\\
        &\lang{\nderiv{\literall}{\regexOr{\regexs}{\regext}}} ~=~ \lang{\nderiv{\literall}{\regexs}}\cup\lang{\nderiv{\literall}{\regext}}\\
        &\lang{\pderiv{\literall}{\regexOr{\regexs}{\regext}}} ~=~ \lang{\pderiv{\literall}{\regexs}}\cup\lang{\pderiv{\literall}{\regext}}
      \end{align}
      holds. Claim holds because
      \begin{align}
        \lang{\nderiv{\literall}{\regexOr{\regexs}{\regext}}}
        ~=&~ \lang{\nderiv{\literall}{\regexs}}\cup\lang{\nderiv{\literall}{\regext}}\\
        ~\subseteqIH&~ \bigcap_{\symbola\in{\literall}} \lang{\deriv{\symbola}{\regexs}} \cup \bigcap_{\symbola\in{\literall}} \lang{\deriv{\symbola}{\regext}}\\
        ~\subseteq&~ \bigcap_{\symbola\in{\literall}}  \lang{\deriv{\symbola}{\regexs}} \cup \lang{\deriv{\symbola}{\regext}}\\
        ~=&~ \bigcap_{\symbola\in{\literall}}  \lang{\deriv{\symbola}{\regexOr{\regexs}{\regext}}}                      
      \end{align}
      and
      \begin{align}
        \lang{\pderiv{\literall}{\regexOr{\regexs}{\regext}}}
        ~=&~ \lang{\pderiv{\literall}{\regexs}}\cup\lang{\pderiv{\literall}{\regext}}\\
        ~\supseteqIH&~ \bigcup_{\symbola\in{\literall}} \lang{\deriv{\symbola}{\regexs}} \cup \bigcup_{\symbola\in{\literall}} \lang{\deriv{\symbola}{\regext}}\\
        ~=&~ \bigcup_{\symbola\in{\literall}}  \lang{\deriv{\symbola}{\regexs}} \cup \lang{\deriv{\symbola}{\regext}}\\
        ~=&~ \bigcup_{\symbola\in{\literall}}  \lang{\deriv{\symbola}{\regexOr{\regexs}{\regext}}}                      
      \end{align}

%   _           _ 
%  /_\  _ _  __| |
% / _ \| ' \/ _` |
%/_/ \_\_||_\__,_|
    \item[Case] $\regexr=\regexAnd{\regexs}{\regext}$:~ By induction
      \begin{align}
        &\lang{\nderiv{\literall}{\regexs}} ~\subseteqIH~ \bigcap_{\symbola\in{\literall}} \lang{\deriv{\symbola}{\regexs}}\\
        &\lang{\nderiv{\literall}{\regext}} ~\subseteqIH~ \bigcap_{\symbola\in{\literall}} \lang{\deriv{\symbola}{\regext}}\\
        &\lang{\pderiv{\literall}{\regexs}} ~\supseteqIH~ \bigcup_{\symbola\in{\literall}} \lang{\deriv{\symbola}{\regexs}}\\
        &\lang{\pderiv{\literall}{\regext}} ~\supseteqIH~ \bigcup_{\symbola\in{\literall}} \lang{\deriv{\symbola}{\regext}}
      \end{align}
      holds.
      We obtain that
      \begin{align}
        &\lang{\deriv{\symbola}{\regexAnd{\regexs}{\regext}}} ~=~ \lang{\deriv{\symbola}{\regexs}}\cap\lang{\deriv{\symbola}{\regext}}\\
        &\lang{\nderiv{\literall}{\regexAnd{\regexs}{\regext}}} ~=~ \lang{\nderiv{\literall}{\regexs}}\cap\lang{\nderiv{\literall}{\regext}}\\
        &\lang{\pderiv{\literall}{\regexAnd{\regexs}{\regext}}} ~=~ \lang{\pderiv{\literall}{\regexs}}\cap\lang{\pderiv{\literall}{\regext}}
      \end{align}
      holds. Claim holds because
      \begin{align}
        \lang{\nderiv{\literall}{\regexAnd{\regexs}{\regext}}}
        ~=&~ \lang{\nderiv{\literall}{\regexs}}\cap\lang{\nderiv{\literall}{\regext}}\\
        ~\subseteqIH&~ \bigcap_{\symbola\in{\literall}} \lang{\deriv{\symbola}{\regexs}} \cap \bigcap_{\symbola\in{\literall}} \lang{\deriv{\symbola}{\regext}}\\
        ~=&~ \bigcap_{\symbola\in{\literall}}  \lang{\deriv{\symbola}{\regexs}} \cap \lang{\deriv{\symbola}{\regext}}\\
        ~=&~ \bigcap_{\symbola\in{\literall}}  \lang{\deriv{\symbola}{\regexAnd{\regexs}{\regext}}}                     
      \end{align}
      and
      \begin{align}
        \lang{\pderiv{\literall}{\regexAnd{\regexs}{\regext}}}
        ~=&~ \lang{\pderiv{\literall}{\regexs}}\cap\lang{\pderiv{\literall}{\regext}}\\
        ~\supseteqIH&~ \bigcup_{\symbola\in{\literall}} \lang{\deriv{\symbola}{\regexs}} \cap \bigcup_{\symbola\in{\literall}} \lang{\deriv{\symbola}{\regext}}\\
        ~\supseteq&~ \bigcup_{\symbola\in{\literall}}  \lang{\deriv{\symbola}{\regexs}} \cap \lang{\deriv{\symbola}{\regext}}\\
        ~=&~ \bigcup_{\symbola\in{\literall}}  \lang{\deriv{\symbola}{\regexAnd{\regexs}{\regext}}}                     
      \end{align}

% _  _               _   _          
%| \| |___ __ _ __ _| |_(_)___ _ _  
%| .` / -_) _` / _` |  _| / _ \ ' \ 
%|_|\_\___\__, \__,_|\__|_\___/_||_|
%         |___/                     
    \item[Case] $\regexr=\regexNeg{\regexs}$:~ By induction 
      \begin{align}
        &\lang{\nderiv{\literall}{\regexs}} ~\subseteqIH~ \bigcap_{\symbola\in{\literall}} \lang{\deriv{\symbola}{\regexs}}\\
        &\lang{\pderiv{\literall}{\regexs}} ~\supseteqIH~ \bigcup_{\symbola\in{\literall}} \lang{\deriv{\symbola}{\regexs}}
      \end{align}
      holds.
      We obtain that
      \begin{align}
        &\forall\symbola:~ \lang{\deriv{\symbola}{\regexNeg{\regexs}}} ~=~ \regexWords\backslash\lang{\deriv{\symbola}{\regexs}}\\
        &\forall\literall:~ \lang{\nderiv{\literall}{\regexNeg{\regexs}}} ~=~ \regexWords\backslash\lang{\pderiv{\literall}{\regexs}}\\
        &\forall\literall:~ \lang{\pderiv{\literall}{\regexNeg{\regexs}}} ~=~ \regexWords\backslash\lang{\nderiv{\literall}{\regexs}}
      \end{align}
      holds. Claim holds because
      \begin{align}
        \forall\literall:~ \lang{\nderiv{\literall}{\regexNeg{\regexs}}}
        ~=&~ \regexWords\backslash\lang{\pderiv{\literall}{\regexs}}\\
        ~\subseteqIH&~ \regexWords\backslash\bigcup_{\symbola\in{\literall}} \lang{\deriv{\symbola}{\regexs}}\\
        ~=&~ \bigcap_{\symbola\in{\literall}} \regexWords\backslash\lang{\deriv{\symbola}{\regexs}}\\
        ~=&~ \bigcap_{\symbola\in{\literall}} \lang{\deriv{\symbola}{\regexNeg{\regexs}}}
      \end{align}
      and 
      \begin{align}
        \forall\literall:~ \lang{\pderiv{\literall}{\regexNeg{\regexs}}}
        ~=&~ \regexWords\backslash\lang{\nderiv{\literall}{\regexs}}\\
        ~\supseteqIH&~ \regexWords\backslash\bigcap_{\symbola\in{\literall}} \lang{\deriv{\symbola}{\regexs}}\\
        ~=&~ \bigcup_{\symbola\in{\literall}} \regexWords\backslash\lang{\deriv{\symbola}{\regexs}}\\
        ~=&~ \bigcup_{\symbola\in{\literall}} \lang{\deriv{\symbola}{\regexNeg{\regexs}}}
      \end{align}

%  ___                  _   
% / __|___ _ _  __ __ _| |_ 
%| (__/ _ \ ' \/ _/ _` |  _|
% \___\___/_||_\__\__,_|\__|
    \item[Case] $\regexr=\regexConcat{\regexs}{\regext}$:~ By induction
      \begin{align}
        &\lang{\nderiv{\literall}{\regexs}} ~\subseteqIH~ \bigcap_{\symbola\in{\literall}} \lang{\deriv{\symbola}{\regexs}}\\
        &\lang{\nderiv{\literall}{\regext}} ~\subseteqIH~ \bigcap_{\symbola\in{\literall}} \lang{\deriv{\symbola}{\regext}}\\
        &\lang{\pderiv{\literall}{\regexs}} ~\supseteqIH~ \bigcup_{\symbola\in{\literall}} \lang{\deriv{\symbola}{\regexs}}\\
        &\lang{\pderiv{\literall}{\regext}} ~\supseteqIH~ \bigcup_{\symbola\in{\literall}} \lang{\deriv{\symbola}{\regext}}
      \end{align}
      holds.
      We obtain that 
      \begin{align}
        &\forall\symbola:~ \lang{\deriv{\symbola}{\regexConcat{\regexs}{\regext}}} ~=~ \begin{cases}
          \lang{\deriv{\symbola}{\regexs}\concat\regext} \cup \lang{\deriv{\symbola}{\regext}}, & \isNullable{\regexs}\\
          \lang{\deriv{\symbola}{\regexs}\concat\regext}, & otherwise
        \end{cases}\\
        &\forall\literall:~ \lang{\nderiv{\literall}{\regexConcat{\regexs}{\regext}}} ~=~ \begin{cases}
          \lang{\nderiv{\literall}{\regexs}\concat\regext} \cup \lang{\nderiv{\literall}{\regext}}, & \isNullable{\regexs}\\
          \lang{\nderiv{\literall}{\regexs}\concat\regext}, & otherwise
        \end{cases}\\
        &\forall\literall:~ \lang{\pderiv{\literall}{\regexConcat{\regexs}{\regext}}} ~=~ \begin{cases}
          \lang{\pderiv{\literall}{\regexs}\concat\regext} \cup \lang{\pderiv{\literall}{\regext}}, & \isNullable{\regexs}\\
          \lang{\pderiv{\literall}{\regexs}\concat\regext}, & otherwise
        \end{cases}
      \end{align}
      holds.
      \begin{description}
        \item[Subcase] $\isNullable{\regexs}$:~
          Claim holds because
          \begin{align}
            \lang{\nderiv{\literall}{\regexConcat{\regexs}{\regext}}}
            ~=&~ \lang{\nderiv{\literall}{\regexs}\concat\regext} \cup \lang{\nderiv{\literall}{\regext}}\\
            ~\subseteqIH&~ \{ \wordu\wordv ~|~ \wordu\in\bigcap_{\symbola\in{\literall}} \lang{\deriv{\symbola}{\regexs}}, \wordv\in\lang{\regext}\} \cup \bigcap_{\symbola\in{\literall}} \lang{\deriv{\symbola}{\regext}}\\
            ~\subseteq&~ \bigcap_{\symbola\in{\literall}}\{ \wordu\wordv ~|~ \lang{\deriv{\symbola}{\regexs}}, \wordv\in\lang{\regext}\} \cup \lang{\deriv{\symbola}{\regext}}\\
            ~=&~ \bigcap_{\symbola\in{\literall}} \lang{\deriv{\symbola}{\regexs}\concat\regext} \cup \lang{\deriv{\symbola}{\regext}}\\
            ~=&~ \bigcap_{\symbola\in{\literall}} \deriv{\symbola}{\regexs\concat\regext}
          \end{align}
          and
          \begin{align}
            \lang{\pderiv{\literall}{\regexConcat{\regexs}{\regext}}}
            ~=&~ \lang{\pderiv{\literall}{\regexs}\concat\regext} \cup \lang{\pderiv{\literall}{\regext}}\\
            ~\supseteqIH&~ \{ \wordu\wordv ~|~ \wordu\in\bigcup_{\symbola\in{\literall}} \lang{\deriv{\symbola}{\regexs}}, \wordv\in\lang{\regext}\} \cup \bigcup_{\symbola\in{\literall}} \lang{\deriv{\symbola}{\regext}}\\
            ~=&~ \bigcup_{\symbola\in{\literall}}\{ \wordu\wordv ~|~ \lang{\deriv{\symbola}{\regexs}}, \wordv\in\lang{\regext}\} \cup \lang{\deriv{\symbola}{\regext}}\\
            ~=&~ \bigcup_{\symbola\in{\literall}} \lang{\deriv{\symbola}{\regexs}\concat\regext} \cup \lang{\deriv{\symbola}{\regext}}\\
            ~=&~ \bigcup_{\symbola\in{\literall}} \deriv{\symbola}{\regexs\concat\regext}
          \end{align}
        \item[Subcase] $\neg\isNullable{\regexs}$:~
          Claim holds because
          \begin{align}
            \lang{\nderiv{\literall}{\regexConcat{\regexs}{\regext}}}
            ~=&~ \lang{\nderiv{\literall}{\regexs}\concat\regext}\\
            ~\subseteqIH&~ \{ \wordu\wordv ~|~ \wordu\in\bigcap_{\symbola\in{\literall}} \lang{\deriv{\symbola}{\regexs}}, \wordv\in\lang{\regext}\}\\
            ~=&~ \bigcap_{\symbola\in{\literall}}\{ \wordu\wordv ~|~ \lang{\deriv{\symbola}{\regexs}}, \wordv\in\lang{\regext}\}\\
            ~=&~ \bigcap_{\symbola\in{\literall}} \lang{\deriv{\symbola}{\regexs}\concat\regext}\\
            ~=&~ \bigcap_{\symbola\in{\literall}} \deriv{\symbola}{\regexs\concat\regext}
          \end{align}
          and
          \begin{align}
            \lang{\pderiv{\literall}{\regexConcat{\regexs}{\regext}}}
            ~=&~ \lang{\pderiv{\literall}{\regexs}\concat\regext}\\
            ~\supseteqIH&~ \{ \wordu\wordv ~|~ \wordu\in\bigcup_{\symbola\in{\literall}} \lang{\deriv{\symbola}{\regexs}}, \wordv\in\lang{\regext}\}\\
            ~=&~ \bigcup_{\symbola\in{\literall}}\{ \wordu\wordv ~|~ \lang{\deriv{\symbola}{\regexs}}, \wordv\in\lang{\regext}\}\\
            ~=&~ \bigcup_{\symbola\in{\literall}} \lang{\deriv{\symbola}{\regexs}\concat\regext}\\
            ~=&~ \bigcup_{\symbola\in{\literall}} \deriv{\symbola}{\regexs\concat\regext}
          \end{align}

      \end{description}
  \end{description}
\end{proof}

\newpage
\section{Lemma~\ref{lemma:properties-of-join}: Properties of Join}
\label{sec:properties-join}

Let $\literalset_1$ and $\literalset_2$ be non-empty sets of mutually disjoint literals.
\begin{enumerate}
  \item $\bigcup (\literalset_1 \literaljoin \literalset_2) = \bigcup \literalset_1 \cup \bigcup \literalset_2$.
  \item $(\forall \literall\ne\literall' \in \literalset_1 \literaljoin \literalset_2)$ $\literall\sqcap\literall' = \emptyset$.
  \item $(\forall \literall \in \literalset_1 \literaljoin \literalset_2)$ $(\forall \literall_i \in \literalset_i)$
    $\literall \sqcap \literall_i \ne \emptyset \Rightarrow \literall \sqsubseteq \literall_i$.
\end{enumerate}

\begin{proof}[Proof of Lemma~\ref{lemma:properties-of-join}]
  ~
  \begin{enumerate}
    \item Inclusion from left to right ``$\subseteq$'':
      Suppose that $\symbola \in \bigcup (\literalset_1 \literaljoin \literalset_2)$.
      Then there exists some $\literall_1\in \literalset_1$ and $\literall_2\in \literalset_2$ such that
      \begin{itemize}
        \item 
          $\symbola \in \literall_1\literalcap\literall_2$, but then
          $\symbola\in \literall_1 \subseteq \bigcup \literalset_1 \cup \bigcup \literalset_2$;
        \item
          $\symbola \in \literall_1\literalcap\inv{\literalbigcup\literalset_2}$, but then 
          $\symbola\in \literall_1 \subseteq \bigcup \literalset_1 \cup \bigcup \literalset_2$; or
        \item
          $\symbola \in \inv{\literalbigcup\literalset_1}\literalcap\literall_2$, but then
          $\symbola\in \literall_2 \subseteq \bigcup \literalset_1 \cup \bigcup \literalset_2$.      
      \end{itemize}
      Inclusion from right to left ``$\supseteq$'':
      Suppose that $\symbola \in \bigcup \literalset_1 \cup \bigcup \literalset_2$.
      There are three cases.
      \begin{itemize}
        \item If there are  $\literall_1 \in \literalset_1$ such that $\symbola \in \literall_1$ and   $\literall_2 \in
          \literalset_2$ such that $\symbola \in \literall_2$, then $\symbola \in \literall_1 \literalcap \literall_2 \in
          \literalset_1 \literaljoin \literalset_2$.
        \item If there is some  $\literall_1 \in \literalset_1$ such that $\symbola \in \literall_1$ but there is no  $\literall_2 \in
          \literalset_2$ such that $\symbola \in \literall_2$, then $\symbola \in
          \literall_1\literalcap\inv{\literalbigcup\literalset_2} \in \literalset_1 \literaljoin \literalset_2$.
        \item Symmetric to previous case (exchange indices $1$ and $2$).
      \end{itemize}

    \item Suppose that
      $\literall_{01} = \inv{\literalbigcup\literalset_1}$ and
      $\literall_{02} = \inv{\literalbigcup\literalset_2}$. Clearly, $\literall_{01}$ is disjoint to any element of
      $\literalset_1$ and $\literall_{02}$ is disjoint to any element of
      $\literalset_2$. There are nine possible cases for $\literall$ and
      $\literall'$.
      To construct arbitrary elements of $\literalset_1\literaljoin\literalset_2$, we pick some $\literall_1,
      \literall_1' \in \literalset_1$ and $\literall_2, \literall_2' \in \literalset_2$. 
      \begin{itemize}
        \item $\literall = \literall_1 \literalcap \literall_2$ and $\literall' = \literall_1' \literalcap
          \literall_2'$. If $\literall \ne \literall'$, then $(\literall_1 , \literall_2) \ne (\literall_1', \literall_2')$
          and the claim follows from disjointness of $\literalset_1$ and $\literalset_2$.
        \item $\literall = \literall_1 \literalcap \literall_2$ and $\literall' = \literall_1' \literalcap \literall_{02}$.
          The claim follows from $\literall_2 \literalcap \literall_{02} = \emptyset$.
        \item $\literall = \literall_1 \literalcap \literall_2$ and $\literall' = \literall_{01} \literalcap \literall_2'$.
          The claim follows from $\literall_1 \literalcap \literall_{01} = \emptyset$.
        \item $\literall = \literall_1 \literalcap \literall_{02}$ and $\literall' = \literall_1' \literalcap
          \literall_2'$.
          The claim follows from $\literall_2' \literalcap \literall_{02} = \emptyset$.
        \item $\literall = \literall_1 \literalcap \literall_{02}$ and $\literall' = \literall_1' \literalcap
          \literall_{02}$.
          If $\literall \ne \literall'$, then $\literall_1 \ne \literall_1'$ and the claim follows from disjointness of
          $\literalset_1$.  
        \item $\literall = \literall_1 \literalcap \literall_{02}$ and $\literall' = \literall_{01} \literalcap
          \literall_2'$.
          The claim follows from $\literall_2' \literalcap \literall_{02} = \emptyset$.
        \item $\literall = \literall_{01} \literalcap \literall_2$ and $\literall' = \literall_1' \literalcap
          \literall_2'$.
          The claim follows from $\literall_{01} \literalcap \literall_1' = \emptyset$.
        \item $\literall = \literall_{01} \literalcap \literall_2$ and $\literall' = \literall_1' \literalcap
          \literall_{02}$.
          The claim follows from $\literall_{01} \literalcap \literall_1' = \emptyset$.
        \item $\literall = \literall_{01} \literalcap \literall_2$ and $\literall' = \literall_{01} \literalcap
          \literall_2'$.
          If $\literall \ne \literall'$, then $\literall_2 \ne \literall_2'$ and the claim follows from disjointness of
          $\literalset_2$.  
      \end{itemize}

    \item Immediate from the definition.
  \end{enumerate}
\end{proof}

\newpage
\section{Lemma~\ref{lemma:partial-equivalence}: Partial Equivalence}
\label{sec:partial-equivalence}

Let $\literalset = \getFirst\regexr$.
\begin{enumerate}
  \item\label{item:1} $(\forall \literall\in\literalset)$ $(\forall \symbola,\symbolb\in \literall)$
    $\deriv\symbola\regexr = \deriv\symbolb\regexr$
  \item $(\forall \symbola\notin \bigcup \literalset)$
    $\deriv\symbola\regexr \sqsubseteq \regexNull$
\end{enumerate}

\begin{proof}[Proof of Lemma~\ref{lemma:partial-equivalence}]
  We write $\symbola \sim_\literalset \symbolb$ if there exists some
  $\literall\in\literalset$ such that $\{\symbola, \symbolb\}
  \subseteq \literall$. The proof is by induction on $\regexr$.
  The equality in item~\ref{item:1} has to be read as semantic
  equality. It is not necessarily syntactic.

  \textbf{Cases} $\regexEmpty$: trivial.

  \textbf{Case} $\seta$: In this case, $\literalset = \{ \seta
  \}$. By definition of the derivative: For each $\symbola\in\seta$, $\deriv\symbola\seta =
  \regexEmpty$. For each $\symbolb\notin\seta$, $\deriv\symbolb\seta =
  \regexNull$.

  \textbf{Case} $\regexOr{\regexr}{\regexs}$: Let $\literalset_\regexr = \getFirst\regexr$,
  $\literalset_\regexs = \getFirst\regexs$, $\literalset = \literalset_\regexr \literaljoin
  \literalset_\regexs$, and $\literall \in \literalset$.

  There are three cases.
  If there exist $\literall_\regexr \in \literalset_\regexr$ and $\literall_\regexs \in
  \literalset_\regexs$ such that $\literall \sqsubseteq \literall_\regexr$ and
  $\literall \sqsubseteq \literall_\regexs$, then
  for all $\symbola, \symbolb \in \literall$ it holds that $\symbola \sim_{\literalset_\regexr}
  \symbolb$ and $\symbola \sim_{\literalset_\regexs}
  \symbolb$ such that, by induction, $\deriv\symbola\regexr = \deriv\symbolb\regexr$ and
  $\deriv\symbola\regexs = \deriv\symbolb\regexs$. Hence, $\deriv\symbola{\regexOr\regexr\regexs} =
  \deriv\symbolb{\regexOr\regexr\regexs}$ by definition of the derivative.

  If there exist $\literall_\regexr \in \literalset_\regexr$ such that $\literall \sqsubseteq
  \literall_\regexr$, but for all $\literall_\regexs \in \literalset_\regexs$
  it is the case that 
  $\literall \not\sqsubseteq \literall_\regexs$, then
  for all $\symbola, \symbolb \in \literall$ it holds that $\symbola \sim_{\literalset_\regexr}
  \symbolb$ such that, by induction, $\deriv\symbola\regexr = \deriv\symbolb\regexr$  and
  $\deriv\symbola\regexs \sqsubseteq \regexNull$ and $\deriv\symbolb\regexs \sqsubseteq \regexNull$. Hence,
  $\deriv\symbola{\regexOr\regexr\regexs} = \deriv\symbolb{\regexOr\regexr\regexs}$ by definition of the derivative.

  If there exist $\literall_\regexs \in \literalset_\regexs$ such that $\literall \sqsubseteq
  \literall_\regexs$, but for all $\literall_\regexr \in \literalset_\regexr$
  it is the case that 
  $\literall \not\sqsubseteq \literall_\regexr$, then
  for all $\symbola, \symbolb \in \literall$ it holds that $\symbola \sim_{\literalset_\regexs}
  \symbolb$ such that, by induction, $\deriv\symbola\regexs = \deriv\symbolb\regexs$  and
  $\deriv\symbola\regexr \sqsubseteq \regexNull$ and $\deriv\symbolb\regexr \sqsubseteq \regexNull$. Hence,
  $\deriv\symbola{\regexOr\regexr\regexs} = \deriv\symbolb{\regexOr\regexr\regexs}$ by definition of the derivative.

  \textbf{Case} $\regexConcat\regexr\regexs$: Similar.

  \textbf{Case} $\regexStar\regexr$: Let $\literalset = \getFirst\regexr$ and $\symbola
  \sim_\literalset \symbolb$.
  Now $\deriv\symbola{\regexStar\regexr}
  = \regexConcat{\deriv\symbola\regexr}{\regexStar\regexr}
  = \regexConcat{\deriv\symbolb\regexr}{\regexStar\regexr}
  = \deriv\symbolb{\regexStar\regexr}
  $ where the middle equality holds by induction.

  If $\symbola\notin\bigcup\literalset$, then $\deriv\symbola\regexr \sqsubseteq \regexNull$.
  Hence $\deriv\symbola{\regexStar\regexr}
  = \regexConcat{\deriv\symbola\regexr}{\regexStar\regexr}
  \sqsubseteq \regexConcat{\regexNull}{\regexStar\regexr}
  \sqsubseteq \regexNull
  $.

  \textbf{Case} $\regexAnd\regexr\regexs$:
  Let $\literalset_\regexr = \getFirst\regexr$,
  $\literalset_\regexs = \getFirst\regexs$,
  $\literalset = \literalset_\regexr \literalcap
  \literalset_\regexs$, and $\literall \in \literalset$.

  By construction of $\literalset$, there exist $\literall_\regexr \in \literalset_\regexr$ and $\literall_\regexs \in
  \literalset_\regexs$ such that $\literall \sqsubseteq \literall_\regexr$ and
  $\literall \sqsubseteq \literall_\regexs$. Thus,
  for all $\symbola, \symbolb \in \literall$ it holds that $\symbola \sim_{\literalset_\regexr}
  \symbolb$ and $\symbola \sim_{\literalset_\regexs}
  \symbolb$ such that, by induction, $\deriv\symbola\regexr = \deriv\symbolb\regexr$ and
  $\deriv\symbola\regexs = \deriv\symbolb\regexs$. Hence, $\deriv\symbola{\regexAnd\regexr\regexs} =
  \deriv\symbolb{\regexAnd\regexr\regexs}$ by definition of the derivative.

  If $\symbola\notin\bigcup\literalset$, then assume that $\symbola\notin\bigcup\literalset_\regexr$
  (the case for $\regexs$ is symmetric). By induction, $\deriv\symbola\regexr \sqsubseteq
  \regexNull$ so that $\deriv\symbola{\regexAnd\regexr\regexs} =
  \regexAnd{\deriv\symbola\regexr}{\deriv\symbola\regexs} \sqsubseteq
  \regexAnd{\regexNull}{\deriv\symbola\regexs} \sqsubseteq
  \regexNull$.

  \textbf{Case} $\regexNeg\regexr$: Let $\literalset_\regexr = \getFirst\regexr$ so that
  $\literalset = \getFirst{\regexNeg\regexr} = \literalset_\regexr \cup \{ \literalbigcap\{\inv{\literall} \mid \literall\in \literalset_\regexr\} \}$.
  Clearly, $\bigcup \literalset = \regexAlphabet$.

  If $\symbola \sim_\literalset \symbolb$, then there are two cases.
  If $\symbola \sim_{\literalset_\regexr} \symbolb$, then
  $\deriv\symbola{\regexNeg\regexr} = \regexNeg{\deriv\symbola\regexr} =
  \regexNeg{\deriv\symbolb\regexr} = \deriv\symbolb{\regexNeg\regexr}$ by induction.

  If $\{\symbola, \symbolb\} \subseteq \literalbigcap\{\inv{\literall} \mid \literall\in \literalset_\regexr\}$, then $\{\symbola, \symbolb\} \in
  \inv{\bigcup\literalset_\regexr}$ so that, by induction, $\deriv\symbola\regexr \sqsubseteq
  \regexNull$ and $\deriv\symbolb\regexr \sqsubseteq
  \regexNull$. Hence, $\regexNeg{\deriv\symbola\regexr} =
  \regexNeg{\deriv\symbolb\regexr}$.
\end{proof}

\newpage
\section{Lemma~\ref{thm:first}: First and Next}
\label{sec:first-next}

For all $\regexr$,
$\bigcup \getFirst\regexr \supseteq \getNext\regexr$.

\begin{proof}[Proof of Lemma~\ref{thm:first}]
  The proof is by induction on $\regexr$.

  \textbf{Cases} $\regexEmpty$, $\seta$: trivial.

  \textbf{Case} $\regexOr\regexr\regexs$:
  Let $\literalset_\regexr = \getFirst\regexr$,
  $\literalset_\regexs = \getFirst\regexs$, and $\literalset = \literalset_\regexr \literaljoin
  \literalset_\regexs$. By induction, $\bigcup\literalset_\regexr\supseteq \getNext\regexr$ and
  $\bigcup\literalset_\regexs\supseteq \getNext\regexs$. By Lemma~\ref{lemma:properties-of-join},
  $\bigcup \literalset = \bigcup\literalset_\regexr \cup \bigcup\literalset_\regexs \supseteq
  \getNext\regexr \cup \getNext\regexs = \getNext{\regexOr\regexr\regexs}$.

  \textbf{Case} $\regexConcat\regexr\regexs$:
  Let $\literalset_\regexr = \getFirst\regexr$,
  $\literalset_\regexs = \getFirst\regexs$, and $\literalset = \literalset_\regexr \literaljoin
  \literalset_\regexs$.

  If $\neg\isNullable\regexr$, then $\bigcup \getFirst{\regexConcat\regexr\regexs} = \bigcup \getFirst\regexr
  \supseteq \getNext\regexr = \getNext{\regexConcat\regexr\regexs}$.

  If $\isNullable\regexr$, then  $\bigcup \getFirst{\regexConcat\regexr\regexs} = \bigcup
  (\getFirst\regexr \literaljoin \getFirst\regexs)
  \supseteq (\getNext\regexr \cup \getNext\regexs) = \getNext{\regexConcat\regexr\regexs}$ by
  induction and using Lemma~\ref{lemma:properties-of-join}.

  \textbf{Case} $\regexStar\regexr$:
  $\bigcup \getFirst{\regexStar\regexr} = \bigcup \getFirst\regexr \supseteq \getNext\regexr =
  \getNext{\regexStar\regexr}$ by induction

  \textbf{Case} $\regexAnd\regexr\regexs$:
  $\bigcup \getFirst{\regexAnd\regexr\regexs} = \bigcup (\getFirst\regexr \literalcap
  \getFirst\regexs)  =  \bigcup (\getFirst\regexr) \cap \bigcup (\getFirst\regexs) \supseteq
  \getNext\regexr \cap \getNext\regexs \supseteq
  \getNext{\regexAnd\regexr\regexs}$.

  \textbf{Case} $\regexNeg\regexr$:
  $\bigcup \getFirst{\regexNeg\regexr} = \regexAlphabet \supseteq \getNext{\regexNeg\regexr}$.
\end{proof}

\newpage
\section{Theorem~\ref{thm:firstderivative}: Left Quotient}
\label{sec:proof-firstderivative}

\begin{definition}[Next2]
  Let $\getFirstPlus{\regexr}~=~\getFirst{\regexr}\setminus\{\emptyset\}$ be the set of first literals of \ERE{} $\regexr$ exlcuding the ehe empty set $\{\emptyset\}$.
\end{definition}

\begin{proof}[Proof of Theorem~\ref{thm:firstderivative}]
  For any \ERE{} $\regexr$, for any literal $\literall\in\getFirstPlus{\regexr}$, and for any symbol $\symbola\in{\literall}$, the following equation holds:
  \begin{gather}
    \lang{\nderiv{\literall}{\regexr}} ~=~ \lang{\deriv{\symbola}{\regexr}}
  \end{gather}
  \begin{gather}
    \lang{\pderiv{\literall}{\regexr}} ~=~ \lang{\deriv{\symbola}{\regexr}}
  \end{gather}

  Proof by induction on $\regexr$.

  \begin{description}

% ___            _      __      __          _ 
%| __|_ __  _ __| |_ _  \ \    / /__ _ _ __| |
%| _|| '  \| '_ \  _| || \ \/\/ / _ \ '_/ _` |
%|___|_|_|_| .__/\__|\_, |\_/\_/\___/_| \__,_|
%          |_|       |__/                     
    \item[Case] $\regexr=\regexEmpty$:~ Claim holds because $\lang{\nderiv{\literall}{\regexEmpty}} = \lang{\pderiv{\literall}{\regexEmpty}} = \lang{\deriv{\symbola}{\regexNull}} = \regexNull$.

% ___      _   
%/ __| ___| |_ 
%\__ \/ -_)  _|
%|___/\___|\__|
    \item[Case] $\regexr=\setb$:~Claim holds because 
      \begin{align}
        \lang{\nderiv{\seta}{\setb}} ~=~ \lang{\deriv{\symbola}{\setb}} ~&=~ \begin{cases}
          \{\regexEmpty\}, & \seta\subseteq\setb\\
          \emptyset, & otherwise
        \end{cases}
      \end{align}
      and 
      \begin{align}
        \lang{\pderiv{\setb}{\seta}} ~=~ \lang{\deriv{\symbola}{\seta}} ~&=~ \begin{cases}
          \{\regexEmpty\}, & \seta\subseteq\setb\\
          \emptyset, & otherwise
        \end{cases}
      \end{align}

% ___ _            
%/ __| |_ __ _ _ _ 
%\__ \  _/ _` | '_|
%|___/\__\__,_|_|  
    \item[Case] $\regexr=\regexStar{\regexs}$:~ By induction 
      \begin{align}
        &\lang{\nderiv{\literall}{\regexs}} ~\eqIH~ \lang{\deriv{\symbola}{\regexs}}\\
        &\lang{\pderiv{\literall}{\regexs}} ~\eqIH~ \lang{\deriv{\symbola}{\regexs}}
      \end{align}
      holds.
      We obtain that
      \begin{align}
        &\forall\symbola:~ \lang{\deriv{\symbola}{\regexStar{\regexs}}} ~=~ \lang{\deriv{\symbola}{\regexs}\concat\regexStar{\regexs}}\\
        &\forall\literall:~ \lang{\nderiv{\literall}{\regexStar{\regexs}}} ~=~ \lang{\nderiv{\literall}{\regexs}\concat\regexStar{\regexs}}\\
        &\forall\literall:~ \lang{\pderiv{\literall}{\regexStar{\regexs}}} ~=~ \lang{\pderiv{\literall}{\regexs}\concat\regexStar{\regexs}}
      \end{align}
      holds. Claim holds because
      \begin{align}
        \forall\literall:~ \lang{\nderiv{\literall}{\regexStar{\regexs}}}
        ~=&~ \lang{\nderiv{\literall}{\regexs}\concat\regexStar{\regexs}}\\
        ~\eqIH&~ \{ \wordu\wordv ~|~ \wordu\in \lang{\deriv{\symbola}{\regexs}}, \wordv\in\lang{\regexStar{\regexs}}\} \\
        ~=&~ \lang{\deriv{\symbola}{\regexStar{\regexs}}}
      \end{align}
      and 
      \begin{align}
        \forall\literall:~ \lang{\pderiv{\literall}{\regexStar{\regexs}}}
        ~=&~ \lang{\pderiv{\literall}{\regexs}\concat\regexStar{\regexs}}\\
        ~\eqIH&~ \{ \wordu\wordv ~|~ \wordu\in \lang{\deriv{\symbola}{\regexs}}, \wordv\in\lang{\regexStar{\regexs}}\} \\
        ~=&~ \lang{\deriv{\symbola}{\regexStar{\regexs}}}
      \end{align}

%  ___      
% / _ \ _ _ 
%| (_) | '_|
% \___/|_|  
    \item[Case] $\regexr=\regexOr{\regexs}{\regext}$:~ By induction 
      \begin{align}
        &\lang{\nderiv{\literall}{\regexs}} ~\eqIH~ \lang{\deriv{\symbola}{\regexs}}\\
        &\lang{\nderiv{\literall}{\regext}} ~\eqIH~ \lang{\deriv{\symbola}{\regext}}\\
        &\lang{\pderiv{\literall}{\regexs}} ~\eqIH~ \lang{\deriv{\symbola}{\regexs}}\\
        &\lang{\pderiv{\literall}{\regext}} ~\eqIH~ \lang{\deriv{\symbola}{\regext}}
      \end{align}
      holds.
      We obtain that
      \begin{align}
        &\lang{\deriv{\symbola}{\regexOr{\regexs}{\regext}}} ~=~ \lang{\deriv{\symbola}{\regexs}}\cup\lang{\deriv{\symbola}{\regext}}\\
        &\lang{\nderiv{\literall}{\regexOr{\regexs}{\regext}}} ~=~ \lang{\nderiv{\literall}{\regexs}}\cup\lang{\nderiv{\literall}{\regext}}\\
        &\lang{\pderiv{\literall}{\regexOr{\regexs}{\regext}}} ~=~ \lang{\pderiv{\literall}{\regexs}}\cup\lang{\pderiv{\literall}{\regext}}
      \end{align}
      holds. Claim holds because
      \begin{align}
        \lang{\nderiv{\literall}{\regexOr{\regexs}{\regext}}}
        ~=&~ \lang{\nderiv{\literall}{\regexs}}\cup\lang{\nderiv{\literall}{\regext}}\\
        ~\eqIH&~ \lang{\deriv{\symbola}{\regexs}} \cup \lang{\deriv{\symbola}{\regext}}\\
        ~=&~ \lang{\deriv{\symbola}{\regexOr{\regexs}{\regext}}}                        
      \end{align}
      and
      \begin{align}
        \lang{\pderiv{\literall}{\regexOr{\regexs}{\regext}}}
        ~=&~ \lang{\pderiv{\literall}{\regexs}}\cup\lang{\pderiv{\literall}{\regext}}\\
        ~\eqIH&~ \lang{\deriv{\symbola}{\regexs}} \cup \lang{\deriv{\symbola}{\regext}}\\
        ~=&~  \lang{\deriv{\symbola}{\regexOr{\regexs}{\regext}}}                       
      \end{align}

%   _           _ 
%  /_\  _ _  __| |
% / _ \| ' \/ _` |
%/_/ \_\_||_\__,_|
    \item[Case] $\regexr=\regexAnd{\regexs}{\regext}$:~ By induction
      \begin{align}
        &\lang{\nderiv{\literall}{\regexs}} ~\eqIH~ \lang{\deriv{\symbola}{\regexs}}\\
        &\lang{\nderiv{\literall}{\regext}} ~\eqIH~ \lang{\deriv{\symbola}{\regext}}\\
        &\lang{\pderiv{\literall}{\regexs}} ~\eqIH~ \lang{\deriv{\symbola}{\regexs}}\\
        &\lang{\pderiv{\literall}{\regext}} ~\eqIH~ \lang{\deriv{\symbola}{\regext}}
      \end{align}
      holds.
      We obtain that
      \begin{align}
        &\lang{\deriv{\symbola}{\regexAnd{\regexs}{\regext}}} ~=~ \lang{\deriv{\symbola}{\regexs}}\cap\lang{\deriv{\symbola}{\regext}}\\
        &\lang{\nderiv{\literall}{\regexAnd{\regexs}{\regext}}} ~=~ \lang{\nderiv{\literall}{\regexs}}\cap\lang{\nderiv{\literall}{\regext}}\\
        &\lang{\pderiv{\literall}{\regexAnd{\regexs}{\regext}}} ~=~ \lang{\pderiv{\literall}{\regexs}}\cap\lang{\pderiv{\literall}{\regext}}
      \end{align}
      holds. Claim holds because
      \begin{align}
        \lang{\nderiv{\literall}{\regexAnd{\regexs}{\regext}}}
        ~=&~ \lang{\nderiv{\literall}{\regexs}}\cap\lang{\nderiv{\literall}{\regext}}\\
        ~\eqIH&~ \lang{\deriv{\symbola}{\regexs}} \cap \lang{\deriv{\symbola}{\regext}}\\
        ~=&~ \lang{\deriv{\symbola}{\regexAnd{\regexs}{\regext}}}                       
      \end{align}
      and
      \begin{align}
        \lang{\pderiv{\literall}{\regexAnd{\regexs}{\regext}}}
        ~=&~ \lang{\pderiv{\literall}{\regexs}}\cap\lang{\pderiv{\literall}{\regext}}\\
        ~\eqIH&~ \lang{\deriv{\symbola}{\regexs}} \cap \lang{\deriv{\symbola}{\regext}}\\
        ~=&~ \lang{\deriv{\symbola}{\regexAnd{\regexs}{\regext}}}                       
      \end{align}

% _  _               _   _          
%| \| |___ __ _ __ _| |_(_)___ _ _  
%| .` / -_) _` / _` |  _| / _ \ ' \ 
%|_|\_\___\__, \__,_|\__|_\___/_||_|
%         |___/                     
    \item[Case] $\regexr=\regexNeg{\regexs}$:~ By induction 
      \begin{align}
        &\lang{\nderiv{\literall}{\regexs}} ~\eqIH~ \lang{\deriv{\symbola}{\regexs}}\\
        &\lang{\pderiv{\literall}{\regexs}} ~\eqIH~  \lang{\deriv{\symbola}{\regexs}}
      \end{align}
      holds.
      We obtain that
      \begin{align}
        &\forall\symbola:~ \lang{\deriv{\symbola}{\regexNeg{\regexs}}} ~=~ \regexWords\backslash\lang{\deriv{\symbola}{\regexs}}\\
        &\forall\literall:~ \lang{\nderiv{\literall}{\regexNeg{\regexs}}} ~=~ \regexWords\backslash\lang{\pderiv{\literall}{\regexs}}\\
        &\forall\literall:~ \lang{\pderiv{\literall}{\regexNeg{\regexs}}} ~=~ \regexWords\backslash\lang{\nderiv{\literall}{\regexs}}
      \end{align}
      holds. Claim holds because
      \begin{align}
        \forall\literall:~ \lang{\nderiv{\literall}{\regexNeg{\regexs}}}
        ~=&~ \regexWords\backslash\lang{\pderiv{\literall}{\regexs}}\\
        ~\eqIH&~ \regexWords\backslash \lang{\deriv{\symbola}{\regexs}}\\
        ~=&~ \lang{\deriv{\symbola}{\regexNeg{\regexs}}}
      \end{align}
      and 
      \begin{align}
        \forall\literall:~ \lang{\pderiv{\literall}{\regexNeg{\regexs}}}
        ~=&~ \regexWords\backslash\lang{\nderiv{\literall}{\regexs}}\\
        ~\eqIH&~ \regexWords\backslash \lang{\deriv{\symbola}{\regexs}}\\
        ~=&~ \lang{\deriv{\symbola}{\regexNeg{\regexs}}}
      \end{align}

%  ___                  _   
% / __|___ _ _  __ __ _| |_ 
%| (__/ _ \ ' \/ _/ _` |  _|
% \___\___/_||_\__\__,_|\__|
    \item[Case] $\regexr=\regexConcat{\regexs}{\regext}$:~ By induction
      \begin{align}
        &\lang{\nderiv{\literall}{\regexs}} ~\eqIH~ \lang{\deriv{\symbola}{\regexs}}\\
        &\lang{\nderiv{\literall}{\regext}} ~\eqIH~ \lang{\deriv{\symbola}{\regext}}\\
        &\lang{\pderiv{\literall}{\regexs}} ~\eqIH~ \lang{\deriv{\symbola}{\regexs}}\\
        &\lang{\pderiv{\literall}{\regext}} ~\eqIH~ \lang{\deriv{\symbola}{\regext}}
      \end{align}
      holds.
      We obtain that 
      \begin{align}
        &\forall\symbola:~ \lang{\deriv{\symbola}{\regexConcat{\regexs}{\regext}}} ~=~ \begin{cases}
          \lang{\deriv{\symbola}{\regexs}\concat\regext} \cup \lang{\deriv{\symbola}{\regext}}, & \isNullable{\regexs}\\
          \lang{\deriv{\symbola}{\regexs}\concat\regext}, & otherwise
        \end{cases}\\
        &\forall\literall:~ \lang{\nderiv{\literall}{\regexConcat{\regexs}{\regext}}} ~=~ \begin{cases}
          \lang{\nderiv{\literall}{\regexs}\concat\regext} \cup \lang{\nderiv{\literall}{\regext}}, & \isNullable{\regexs}\\
          \lang{\nderiv{\literall}{\regexs}\concat\regext}, & otherwise
        \end{cases}\\
        &\forall\literall:~ \lang{\pderiv{\literall}{\regexConcat{\regexs}{\regext}}} ~=~ \begin{cases}
          \lang{\pderiv{\literall}{\regexs}\concat\regext} \cup \lang{\pderiv\literall{\symbola}{\regext}}, & \isNullable{\regexs}\\
          \lang{\pderiv{\literall}{\regexs}\concat\regext}, & otherwise
        \end{cases}
      \end{align}
      holds.
      \begin{description}
        \item[Subcase] $\isNullable{\regexs}$:~
          Claim holds because
          \begin{align}
            \lang{\nderiv{\literall}{\regexConcat{\regexs}{\regext}}}
            ~=&~ \lang{\nderiv{\literall}{\regexs}\concat\regext} \cup \lang{\nderiv{\literall}{\regext}}\\
            ~\eqIH&~ \{ \wordu\wordv ~|~ \wordu\in \lang{\deriv{\symbola}{\regexs}}, \wordv\in\lang{\regext}\} \cup  \lang{\deriv{\symbola}{\regext}}\\
            ~=&~ \lang{\deriv{\symbola}{\regexs}\concat\regext} \cup \lang{\deriv{\symbola}{\regext}}\\
            ~=&~ \deriv{\symbola}{\regexs\concat\regext}
          \end{align}
          and
          \begin{align}
            \lang{\pderiv{\literall}{\regexConcat{\regexs}{\regext}}}
            ~=&~ \lang{\pderiv{\literall}{\regexs}\concat\regext} \cup \lang{\pderiv{\literall}{\regext}}\\
            ~\eqIH&~ \{ \wordu\wordv ~|~ \wordu\in \lang{\deriv{\symbola}{\regexs}}, \wordv\in\lang{\regext}\} \cup \lang{\deriv{\symbola}{\regext}}\\
            ~=&~ \lang{\deriv{\symbola}{\regexs}\concat\regext} \cup \lang{\deriv{\symbola}{\regext}}\\
            ~=&~ \deriv{\symbola}{\regexs\concat\regext}
          \end{align}
        \item[Subcase] $\neg\isNullable{\regexs}$:~
          Claim holds because
          \begin{align}
            \lang{\nderiv{\literall}{\regexConcat{\regexs}{\regext}}}
            ~=&~ \lang{\nderiv{\literall}{\regexs}\concat\regext}\\
            ~\eqIH&~ \{ \wordu\wordv ~|~ \wordu\in \lang{\deriv{\symbola}{\regexs}}, \wordv\in\lang{\regext}\}\\
            ~=&~ \lang{\deriv{\symbola}{\regexs}\concat\regext}\\
            ~=&~ \deriv{\symbola}{\regexs\concat\regext}
          \end{align}
          and
          \begin{align}
            \lang{\pderiv{\literall}{\regexConcat{\regexs}{\regext}}}
            ~=&~ \lang{\pderiv{\literall}{\regexs}\concat\regext}\\
            ~\eqIH&~ \{ \wordu\wordv ~|~ \wordu\in \lang{\deriv{\symbola}{\regexs}}, \wordv\in\lang{\regext}\}\\
            ~=&~ \lang{\deriv{\symbola}{\regexs}\concat\regext}\\
            ~=&~ \deriv{\symbola}{\regexs\concat\regext}
          \end{align}

      \end{description}
  \end{description}
\end{proof}

\newpage
\section{Theorem~\ref{thm:symbol-containment}: Semantic Containment}
\label{sec:proof-semantic-containment}

\begin{lemma}[Word Inclusion]\label{thm:word-inclusion}
  For all \ERE{} $\regexr$ and words $\wordw$ in $\regexWords$,
  \begin{displaymath}
    \wordw\in\lang{\regexr} \Leftrightarrow \isNullable{\deriv{\wordw}{\regexr}}
  \end{displaymath}
\end{lemma}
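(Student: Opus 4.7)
The plan is to prove the lemma by induction on the length of $\wordw$, reducing it directly to the two facts about Brzozowski derivatives that the paper has already established, namely that $\wordu\in\lang\regexr$ iff $\regexEmpty\in\lang{\deriv{\wordu}{\regexr}}$ and that the iterated derivative satisfies $\deriv{\symbola\wordu}{\regexr} = \deriv{\wordu}{\deriv{\symbola}{\regexr}}$ with $\deriv{\regexEmpty}{\regexr} = \regexr$. The definition of $\nullable$ also gives $\isNullable\regexs$ iff $\regexEmpty\in\lang\regexs$.

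For the base case $\wordw = \regexEmpty$, I would simply unfold $\deriv{\regexEmpty}{\regexr} = \regexr$ and observe that $\regexEmpty\in\lang\regexr$ iff $\isNullable\regexr$ by the definition of the nullable predicate. For the inductive step $\wordw = \symbola\wordu$, I would chain the following equivalences: $\symbola\wordu\in\lang\regexr$ iff $\wordu\in\leftquotient{\symbola}{\lang\regexr}$ (by the definition of left quotient), iff $\wordu\in\lang{\deriv{\symbola}{\regexr}}$ (since the Brzozowski derivative computes the left quotient, as recorded in Section~\ref{sec:preliminaries}), iff $\isNullable{\deriv{\wordu}{\deriv{\symbola}{\regexr}}}$ (by the induction hypothesis applied to the shorter word $\wordu$ and the expression $\deriv{\symbola}{\regexr}$), iff $\isNullable{\deriv{\symbola\wordu}{\regexr}}$ (by the definition of the iterated derivative).

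Since the paper has already stated both the semantic correctness of $\deriv{\symbola}{\cdot}$ with respect to the left quotient and the compositional equation for iterated derivatives, no further inductive argument on the structure of $\regexr$ is needed; the induction is only on $|\wordw|$. There is no real obstacle here: the lemma is essentially a repackaging of the basic derivative facts in terms of $\nullable$. The only thing to be careful about is to make sure the induction hypothesis is applied at the strictly smaller word $\wordu$ and at the expression $\deriv{\symbola}{\regexr}$ (not at $\regexr$), so that the statement generalizes uniformly over all expressions as $\wordw$ shrinks.
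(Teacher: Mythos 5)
Your proof is correct and matches the paper's approach: the paper's own proof is just the one-line remark ``Proof by the definition of $\delta$ and $\nullable$,'' and your induction on $|\wordw|$ is precisely the routine unfolding of those definitions (the left-quotient property of $\deriv{\symbola}{\cdot}$, the equations $\deriv{\regexEmpty}{\regexr}=\regexr$ and $\deriv{\symbola\wordu}{\regexr}=\deriv{\wordu}{\deriv{\symbola}{\regexr}}$, and $\isNullable{\regexs}$ iff $\regexEmpty\in\lang{\regexs}$). Your care in applying the induction hypothesis to $\deriv{\symbola}{\regexr}$ rather than $\regexr$ supplies exactly the detail the paper leaves implicit.
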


\begin{proof}[Proof of Lemma~\ref{thm:word-inclusion}]
  Proof by the definition of $\delta$ and $\nullable$.
\end{proof}

\begin{lemma}[Word Containment]\label{thm:word-containment}
  For all \ERE{} $\regexr$ and $\regexs$,
  \begin{displaymath}
    \isSubSetOf{\regexr}{\regexs} \Leftrightarrow \isNullable{\deriv{\wordw}{\regexs}} ~\text{for all}~ \wordw\in\lang{\regexr}
  \end{displaymath}
\end{lemma}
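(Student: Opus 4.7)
The plan is to derive this lemma directly from Lemma~\ref{thm:word-inclusion}, which gives the equivalence $\wordw\in\lang{\regexs} \Leftrightarrow \isNullable{\deriv{\wordw}{\regexs}}$ for any single word and expression. Once that bridge is in place, the claim is essentially a rewriting of the set-theoretic inclusion $\lang{\regexr}\subseteq\lang{\regexs}$ pointwise over the words of $\lang{\regexr}$.

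For the forward direction, I would assume $\isSubSetOf{\regexr}{\regexs}$, pick an arbitrary $\wordw\in\lang{\regexr}$, conclude $\wordw\in\lang{\regexs}$ by the assumed inclusion, and then apply Lemma~\ref{thm:word-inclusion} (to $\regexs$) to obtain $\isNullable{\deriv{\wordw}{\regexs}}$. For the converse, I would assume $\isNullable{\deriv{\wordw}{\regexs}}$ for every $\wordw\in\lang{\regexr}$, pick an arbitrary $\wordw\in\lang{\regexr}$, and apply Lemma~\ref{thm:word-inclusion} in the opposite direction to infer $\wordw\in\lang{\regexs}$. Since $\wordw$ was arbitrary, this shows $\lang{\regexr}\subseteq\lang{\regexs}$, i.e.\ $\isSubSetOf{\regexr}{\regexs}$.

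There is no real obstacle here: the lemma is essentially an unfolding of the definition of language containment combined with the derivative characterisation of membership that Lemma~\ref{thm:word-inclusion} provides. The only thing worth being careful about is that Lemma~\ref{thm:word-inclusion} is applied to $\regexs$ (not $\regexr$) in both directions, and that the universal quantification ``for all $\wordw\in\lang{\regexr}$'' on the right corresponds exactly to the inclusion on the left, with no need to also quantify over $\regexWords\setminus\lang{\regexr}$.
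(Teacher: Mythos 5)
Your proof is correct and matches the paper's own argument: the paper likewise unfolds $\isSubSetOf{\regexr}{\regexs}$ into $\forall\wordw\in\lang{\regexr}:\wordw\in\lang{\regexs}$ and then rewrites membership in $\lang{\regexs}$ via the derivative characterisation $\wordw\in\lang{\regexs}\Leftrightarrow\isNullable{\deriv{\wordw}{\regexs}}$ (Lemma~\ref{thm:word-inclusion}), exactly as you do. The only cosmetic difference is that the paper presents this as a single chain of equivalences rather than two separate implications.
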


\begin{proof}[Proof of Lemma~\ref{thm:word-containment}]
  An \ERE{} $\regexr$ is subset of another \ERE{} $\regexs$ iff for all words $\wordw\in\lang{\regexr}$ the derivation of $\regexs$ w.r.t. word $\wordw$ is nullable. For all $\wordw\in\regexWords$ it holds that $\wordw\in\lang{\regexs}$ iff $\isNullable{\deriv{\wordw}{\regexs}}$. It is trivial to see that
  \begin{align}
    &~ \isSubSetOf{\regexr}{\regexs}\\
    ~\Leftrightarrow&~ \lang{\regexr}\subseteq\lang{\regexs}\\
    ~\Leftrightarrow&~ \forall \wordw\in\lang{\regexr}: \wordw\in\lang{\regexs}\\
    ~\Leftrightarrow&~ \forall \wordw\in\lang{\regexr}: \isNullable{\deriv{\wordw}{\regexs}}
  \end{align}
  holds.
\end{proof}

\begin{proof}[Proof of Theorem~\ref{thm:symbol-containment}]
  For all regular expressions $\regexr$ and $\regexs$,
  \begin{displaymath}
    \isSubSetOf{\regexr}{\regexs} \Leftrightarrow (\isNullable{\regexr}\Rightarrow\isNullable{\regexs}) ~\wedge~ (\forall\symbola\in\getNext{\regexr})~\isSubSetOf{\deriv\symbola\regexr}{\deriv\symbola\regexs} 
  \end{displaymath}
  An \ERE{} $\regexr$ is subset of another \ERE{} $\regexs$ iff for all symbols $\symbola$ in $\getNext{\regexr}$ the derivation of $\regexr$ w.r.t. symbol $\symbola$ is subset of the derivation of $\regexs$ w.r.t. $\symbola$. 
  We obtain that
  \begin{align}
    \lang{\deriv{\symbola}{\regexr}} = \leftquotient{\symbola}{\lang{\regexr}}	
  \end{align}
  and this leads to 
  \begin{align}
    \{\regexEmpty ~|~ \isNullable{\regexr} \} \cup
    \{ \symbola\wordu ~|~ \symbola\in\lang{\getNext{\regexr}}, \wordu\in\leftquotient{\symbola}{\lang{\regexr}} \} &= \lang{\regexr}
  \end{align}
  Claim holds because
  \begin{align}
    &~ \isSubSetOf{\regexr}{\regexs}\\
    ~\Leftrightarrow&~ \lang{\regexr}\subseteq\lang{\regexs}\\
    ~\Leftrightarrow&~ \forall \wordu\in\lang{\regexr}:~ \wordu\in\lang{\regexs}\\
    ~\Leftrightarrow&~ \regexEmpty\in\lang{\regexr} \Rightarrow \regexEmpty\in\lang{\regexs} ~\wedge~ \forall \symbola, \wordu:~ \symbola\wordu\in\lang{\regexr} \Rightarrow \isNullable{\deriv{\symbola\wordu}{\regexs}}\\
    ~\Leftrightarrow&~ \isNullable{\regexr} \Rightarrow \isNullable{\regexs} ~\wedge~ \forall \symbola\in\getNext{\regexr}, \forall\wordu:~ \symbola\wordu\in\lang{\regexr} \Rightarrow \isNullable{\deriv{\wordu}{\deriv{\symbola}{\regexs}}}\\
    ~\Leftrightarrow&~ \isNullable{\regexr} \Rightarrow \isNullable{\regexs} ~\wedge~ \forall \symbola\in\getNext{\regexr}, \forall\wordu \in\lang{\deriv{\symbola}{\regexr}}:~ \isNullable{\deriv{\wordu}{\deriv{\symbola}{\regexs}}}\\
    ~\Leftrightarrow&~ \isNullable{\regexr} \Rightarrow \isNullable{\regexs} ~\wedge~ \forall \symbola\in\getNext{\regexr}:~ \lang{\deriv{\symbola}{\regexr}}\subseteq\lang{\deriv{\symbola}{\regexs}}\\
    ~\Leftrightarrow&~ \isNullable{\regexr} \Rightarrow \isNullable{\regexs} ~\wedge~ \forall \symbola\in\getNext{\regexr}:~ \isSubSetOf{\deriv{\symbola}{\regexr}}{\deriv{\symbola}{\regexs}}
  \end{align}
\end{proof}

\newpage
\section{Theorem~\ref{thm:containment}: Symbolic Containment}
\label{sec:proof-symbolic-containment}

\begin{proof}[Proof of Theorem~\ref{thm:containment}]
  The proof is by contraposition.
  If $\isNotSubSetOf{\regexr}{\regexs}$ then $\exists\literall\in\getFirst{\isSubSetOf{\regexr}{\regexs}}:~ \isNotSubSetOf{\nderiv{\literall}{\regexr}}{\nderiv{\literall}{\regexs}}$ or $\neg(\isNullable{\regexr}\Rightarrow\isNullable{\regexs})$.

  We obtain that:
  \begin{align}
    \isNotSubSetOf{\regexr}{\regexs} ~&\Leftrightarrow~ \lang{\regexr}\nsubseteq\lang{\regexs}\\
    ~&\Leftrightarrow~ \exists\wordu\in\lang{\regexr}\backslash\lang{\regexs}
  \end{align}

  \begin{description}
    \item[Case] $\wordu=\regexEmpty$:~ \\
      Claim holds because $\neg(\isNullable{\regexr}\Rightarrow\isNullable{\regexs})$.
    \item[Case] $\wordu\neq\regexEmpty$:~\\
      It must be that $\wordu=\symbola\wordv$ with $\symbola\in\getNext{\regexr}={\getFirst{\regexr}}$.
      Therefore $\exists\literall\in\getFirst{\regexr}:~ \symbola\in{\literall}$.
      \begin{description}
        \item[Subcase] $\symbola\notin\getNext{\regexs}$:~\\
          Claim holds by Lemma~\ref{thm:firstderivative} and Lemma~\ref{thm:preservation} because $\exists\literall\in\getFirst{\regexr}:$ $\nderiv{\literall}{\regexr}\neq\regexNull$ and $\nderiv{\literall}{\regexs}=\regexNull$ implies that $\isNotSubSetOf{\nderiv{\literall}{\regexr}}{\nderiv{\literall}{\regexs}}$.

        \item[Subcase] $\symbola\in\getNext{\regexs}$:~\\ 
          By Lemma~\ref{thm:firstderivative} and Lemma~\ref{thm:preservation} claim holds because
          \mbox{$\wordv\in\lang{\deriv{\symbola}{\regexr}}\backslash\lang{\deriv{\symbola}{\regexs}}$} implies that
          $\wordv\in\lang{\nderiv{\literall}{\regexr}}\backslash\lang{\nderiv{\literall}{\regexs}}$

      \end{description}
  \end{description}
\end{proof}

\newpage
\section{Theorem~\ref{thm:soundness}: Soundness}
\label{sec:proof-soundness}

For all regular expression $\regexr$ and $\regexs$:
\begin{align*}
  \emptyset~\entails~\checkSubSetOf{\regexr}{\regexs}~:~\top~\Leftrightarrow~\isSubSetOf{\regexr}{\regexs}
\end{align*}

\begin{proof}% [Proof of Theorem~\ref{thm:soundness}]
  We prove that $\ccCtx \entails \checkSubSetOf{\regexr}{\regexs} ~:~ \bot$ iff $\isNotSubSetOf{\regexr}{\regexs}$, for
  all contexts $\ccCtx$ where $\checkSubSetOf{\regexr}{\regexs} \notin \ccCtx$. 
  This is sufficient because each regular
  inequality gives rise to a finite derivation by
  Theorem~\ref{thm:finiteness}.

  The ``only-if'' direction is by rule induction on the derivation of $\ccCtx \entails
  \checkSubSetOf{\regexr}{\regexs} ~:~ \perp$. 
  \begin{itemize}
    \item Suppose the last rule is \Rule{\RuleCCDisprove}. By inversion, $\isNullable\regexr$ and
      $\neg\isNullable\regexs$ so that $\isNotSubSetOf{\regexr}{\regexs}$.
    \item Suppose the last rule is \Rule{\RuleCCUnfoldFalse}. By inversion, 
      \begin{gather}
        \notInCcCtx{\checkSubSetOf{\regexr}{\regexs}}\\
        \isNullable\regexr \Rightarrow \isNullable\regexs\\
        \exists\literall\in\getFirst{\checkSubSetOf{\regexr}{\regexs}}:
        ~\ccCtx \cup\{\checkSubSetOf{\regexr}{\regexs}\}~\entails~
        \checkSubSetOf{\deriv{\literall}{\regexr}}{\deriv{\literall}{\regexs}}~:~\perp%
      \end{gather}
      By induction, we obtain that $\isNotSubSetOf{\deriv{\literall}{\regexr}}{\deriv{\literall}{\regexs}}$, for some
      $\literall\in\getFirst{\checkSubSetOf{\regexr}{\regexs}}$. By Theorem~\ref{thm:containment}, we obtain
      $\isNotSubSetOf\regexr\regexs$. 
  \end{itemize}

  For the ``if'' direction, the assumption that $\isNotSubSetOf\regexr\regexs$ implies that $\lang{\regexr} \setminus
  \lang{\regexs} \ne \emptyset$. Let $\wordu \in \lang{\regexr} \setminus \lang{\regexs}$ a word of shortest length.
  We continue by induction on $\wordu$. If
  $\wordu=\regexEmpty$, then $\isNullable{\regexr}$ but not
  $\isNullable{\regexs}$ must hold. By our assumption on $\ccCtx$, it cannot be that $\checkSubSetOf\regexr\regexs \notin
  \ccCtx$. By rule $\Rule{\RuleCCDisprove}$,  $\ccCtx \entails \checkSubSetOf\regexr\regexs ~:~\bot$.

  If $\wordu = \symbola\wordu'$, then there exists $\literall_\symbola \in
  \getFirst{\checkSubSetOf{\regexr}{\regexs}}$ such that $\symbola\in{\literall_\symbola}$ (by Lemma~\ref{thm:first}).
  It must be that $\isNullable\regexr \Rightarrow \isNullable\regexs$: otherwise, we get a contradiction
  against the minimality of $\wordu$'s length.
  By Theorem~\ref{thm:containment} it must be that
  $\isSubSetOf{\deriv{\literall_\symbola}{\regexr} \not}{\deriv{\literall_\symbola}{\regexs}}$ so that induction yields 
  $\ccCtx
  \cup\{\checkSubSetOf{\regexr}{\regexs}\}\\~\entails~\checkSubSetOf{\deriv{\literall_\symbola}{\regexr}}{\deriv{\literall_\symbola}{\regexs}}~:~\bot$.
  Applying rule $\Rule{\RuleCCUnfoldFalse}$ yields
  $\ccCtx\entails\checkSubSetOf{\regexr}{\regexs}~:~\bot$.
\end{proof}

\newpage
\section{Lemma~\ref{thm:preservation}: Coverage}

\begin{lemma}[Coverage]\label{thm:preservation}
  For all symbols $\symbola\in\regexAlphabet$, words $\wordu\in\regexWords$, and \ERE{}s on $\regexAlphabet$ it holds that:
  \begin{align*}
    \wordu\in\lang{\deriv{\symbola}{\regexr}}~\Leftrightarrow~\exists\literall\in\getFirstPlus{\regexr}:~\wordu\in\lang{\pderiv{\literall}{\regexr}}\\
    \wordu\in\lang{\deriv{\symbola}{\regexr}}~\Leftrightarrow~\exists\literall\in\getFirstPlus{\regexr}:~\wordu\in\lang{\nderiv{\literall}{\regexr}}
  \end{align*}
\end{lemma}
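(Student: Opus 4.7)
\textbf{Proof plan for Lemma~\ref{thm:preservation}.}
The claim is that symbol-derivatives and literal-derivatives agree on which words they accept, once we restrict attention to next literals. This is essentially the same content as Lemma~\ref{thm:coverage} in the main text, and the plan is to derive it directly from Theorem~\ref{thm:firstderivative} (Left Quotient) together with Lemma~\ref{thm:first} (First covers Next). No induction on $\regexr$ is needed; the hard work has already been done in those two results.

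For the ``$\Rightarrow$'' direction, assume $\wordu\in\lang{\deriv{\symbola}{\regexr}}$. In particular $\lang{\deriv{\symbola}{\regexr}}\neq\emptyset$, so $\symbola\in\getNext{\regexr}$ (otherwise $\deriv{\symbola}{\regexr}$ would reduce to $\regexNull$). By Lemma~\ref{thm:first}, $\getNext{\regexr}\subseteq\bigcup\getFirst{\regexr}$, hence there exists $\literall\in\getFirst{\regexr}$ with $\symbola\in\literall$. Since $\literall$ contains $\symbola$, it is nonempty, so $\literall\in\getFirstPlus{\regexr}$. Theorem~\ref{thm:firstderivative} now gives $\lang{\pderiv{\literall}{\regexr}}=\lang{\nderiv{\literall}{\regexr}}=\lang{\deriv{\symbola}{\regexr}}$, so $\wordu$ lies in both $\lang{\pderiv{\literall}{\regexr}}$ and $\lang{\nderiv{\literall}{\regexr}}$, yielding the required witness.

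For the ``$\Leftarrow$'' direction, suppose $\literall\in\getFirstPlus{\regexr}$ is a witness with $\wordu\in\lang{\pderiv{\literall}{\regexr}}$ (the case for $\nderiv$ is identical). Since $\literall\neq\emptyset$, pick some $\symbola\in\literall$. Theorem~\ref{thm:firstderivative} again supplies $\lang{\pderiv{\literall}{\regexr}}=\lang{\deriv{\symbola}{\regexr}}$, so $\wordu\in\lang{\deriv{\symbola}{\regexr}}$ as required; by Lemma~\ref{lemma:partial-equivalence}(\ref{item:1}) this derivative is independent of the choice of representative within $\literall$, so the witness $\symbola$ can be taken to be any element of $\literall$.

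The only subtle point is aligning the quantification on $\symbola$ between the symbol-level statement and the literal-level statement: reading the biconditional pointwise in $\symbola$, the forward direction produces a literal $\literall$ that happens to contain $\symbola$, and the backward direction uses a symbol $\symbola$ that happens to lie in $\literall$. Both directions reduce to a single application of Theorem~\ref{thm:firstderivative} once the incidence $\symbola\in\literall\in\getFirstPlus{\regexr}$ is established, with Lemma~\ref{thm:first} supplying that incidence in the forward direction and nonemptiness of $\literall$ supplying it in the backward direction. I do not expect any genuine obstacle beyond bookkeeping this quantifier alignment cleanly.
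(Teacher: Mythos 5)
Your proof is correct, but it takes a genuinely different route from the paper's. The paper proves this appendix lemma by a fresh structural induction on $\regexr$, arguing case by case ($\regexEmpty$, $\seta$, star, union, intersection, negation, concatenation) that some $\literall\in\getFirstPlus{\regexr}$ captures the word — and, as written, it only treats the forward direction, starting from the assumption $\lang{\deriv{\symbola}{\regexr}}\neq\emptyset$, while silently invoking Theorem~\ref{thm:firstderivative} anyway to identify $\pderiv{\literall}{\regexr}$ with $\nderiv{\literall}{\regexr}$ on next literals. You instead avoid induction entirely and obtain the lemma as a corollary of Theorem~\ref{thm:firstderivative} plus Lemma~\ref{thm:first}: forward, $\wordu\in\lang{\deriv{\symbola}{\regexr}}$ forces $\symbola\in\getNext{\regexr}$, Lemma~\ref{thm:first} supplies $\literall\in\getFirstPlus{\regexr}$ with $\symbola\in\literall$, and Theorem~\ref{thm:firstderivative} equates the three languages; backward, any representative of the nonempty $\literall$ does the job. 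This is exactly how the paper itself justifies the main-text twin of this statement (Lemma~\ref{thm:coverage}), so your modular argument is arguably more in the spirit of the paper than the paper's own appendix proof, and it avoids duplicating inductive work already done inside Lemma~\ref{thm:first} and Theorem~\ref{thm:firstderivative}; what the paper's induction buys in exchange is only a self-contained case analysis, and a rather sketchy one (the negation and concatenation cases in particular). Your closing remark on quantifier alignment is also a genuine catch rather than mere bookkeeping: as literally stated, with $\symbola$ universally quantified and no side condition $\symbola\in\literall$, the backward implication is false — for $\regexr=\regexOr{\{\symbola\}}{\{\symbolb\}}$, a symbol $\symbolc\notin\{\symbola,\symbolb\}$, and $\wordu=\regexEmpty$, the right-hand side holds with witness $\literall=\{\symbola\}$ while $\regexEmpty\notin\lang{\deriv{\symbolc}{\regexr}}$ — so what you prove (and what the paper's forward-only induction also implicitly targets) is the corrected reading with $\symbola\in\literall$, which is precisely the form the main-text Lemma~\ref{thm:coverage} states.
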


\begin{proof}[Proof of Lemma~\ref{thm:preservation}]
  Suppose $\lang{\deriv{\symbola}{\regexr}}\neq\regexNull$.
  Because $\nderiv{\literall}{\regexr}$=$\pderiv{\literall}{\regexr}$ for all $\literall\in\getFirstPlus{\regexr}$ show $\exists\literall\in\getFirstPlus{\regexr}:~$ $\wordw\in\lang{\nderiv{\literall}{\regexr}}$.
  Proof by induction on $\regexr$.

  \begin{description}

% ___            _         ___         _               _   
%| __|_ __  _ __| |_ _  _ / __|___ _ _| |_ __ _ _ _ __| |_ 
%| _|| '  \| '_ \  _| || | (__/ _ \ ' \  _/ _` | '_/ _|  _|
%|___|_|_|_| .__/\__|\_, |\___\___/_||_\__\__,_|_| \__|\__|
%          |_|       |__/                                  
    \item[Case] $\regexr=\regexEmpty$, $\getFirstPlus{\regexr}=\emptyset$: Contradicts assumption.

% ___          ___     
%| _ \___ __ _| __|_ __
%|   / -_) _` | _|\ \ /
%|_|_\___\__, |___/_\_\
%        |___/         
    \item[Case] $\regexr=\seta$, $\getFirstPlus{\regexr}=\{\seta\}$:~\\
      We obtain that $\symbola\in{\seta} ~\Rightarrow~ \deriv{\symbola}{\seta}=\regexEmpty$. Claim holds because $\getFirstPlus{\regexr}=\{\seta\}$, $\nderiv{\seta}{\seta}=\regexEmpty$, and thus $\wordw=\regexEmpty$ and $\regexEmpty\in\lang{\regexEmpty}$.

% _  ___                  ___ _            
%| |/ / |___ ___ _ _  ___/ __| |_ __ _ _ _ 
%| ' <| / -_) -_) ' \/ -_)__ \  _/ _` | '_|
%|_|\_\_\___\___|_||_\___|___/\__\__,_|_|  
    \item[Case] $\regexr=\regexStar{\regexs}$, $\getFirstPlus{\regexr}=\getFirstPlus{\regexs}$:~\\
      We obtain that $\wordw\in\lang{\deriv{\symbola}{\regexStar{\regexs}}}=\lang{\deriv{\symbola}{\regexs}\regexConcatOp\regexStar{\regexs}}\neq\regexNull$. By induction $\exists\literall'\in\getFirstPlus{\regexs}:~ \wordu\in\lang{\nderiv{\literall'}{\regexs}}$.
      The chain holds because $\getFirstPlus{\regexStar{\regexs}}=\getFirstPlus{\regexs}$ and $\nderiv{\literall}{\regexStar{\regexs}}=\nderiv{\literall}{\regexs}\regexConcatOp\regexs\regexStarOp$ and $\wordu\in\lang{\nderiv{\literall}{\regexs}}, \wordv\in\lang{\nderiv{\literall}{\regexStar{\regexs}}}$ implies $\wordw=\wordu\regexConcatOp\wordv\in\lang{\nderiv{\literall}{\regexStar{\regexs}}}$.

% _              _         _  ___      
%| |   ___  __ _(_)__ __ _| |/ _ \ _ _ 
%| |__/ _ \/ _` | / _/ _` | | (_) | '_|
%|____\___/\__, |_\__\__,_|_|\___/|_|  
%          |___/                       
    \item[Case] $\regexr=(\regexs\regexOrOp\regext)$, $\getFirstPlus{\regexr}=\getFirstPlus{\regexs} \literaljoin \getFirstPlus{\regext}$:~\\
      We obtain that $\wordw\in\lang{\deriv{\symbola}{\regexs\regexOrOp\regext}}=\lang{\deriv{\symbola}{\regexs}}\cup\lang{\deriv{\symbola}{\regext}}\neq\regexNull$. 
      By induction $\exists\literall'\in\getFirstPlus{\regexs}:~ \wordu\in\lang{\nderiv{\literall'}{\regexs}}$ and 
      $\exists\literall''\in\getFirstPlus{\regext}:~ \wordv\in\lang{\nderiv{\literall''}{\regext}}$.
      The chain holds because $\getFirstPlus{\regexs\regexOrOp\regext}=\getFirstPlus{\regexs} \literaljoin \getFirstPlus{\regext}$ and $\nderiv{\literall}{\regexs\regexOrOp\regext}=\nderiv{\literall}{\regexs}\regexOrOp\nderiv{\literall}{\regext}$ and $\wordw\in\lang{\nderiv{\literall}{\regexs}}$ or $\wordw\in\lang{\nderiv{\literall}{\regext}}$ implies $\wordw\in\lang{\nderiv{\literall}{\regexs\regexOrOp\regext}}$.

% _              _         _   _           _ 
%| |   ___  __ _(_)__ __ _| | /_\  _ _  __| |
%| |__/ _ \/ _` | / _/ _` | |/ _ \| ' \/ _` |
%|____\___/\__, |_\__\__,_|_/_/ \_\_||_\__,_|
%          |___/                             
    \item[Case] $\regexr=(\regexs\regexAndOp\regext)$, $\getFirstPlus{\regexr}=\{ \literall'\literalcap\literall'' ~|~ \literall'\in\getFirstPlus{\regexs}, \literall''\in\getFirstPlus{\regext} \} $:~ \\
      We obtain that $\wordw\in\lang{\deriv{\symbola}{\regexs\regexAndOp\regext}}=\lang{\deriv{\symbola}{\regexs}}\cap\lang{\deriv{\symbola}{\regext}}$ implies $\wordw\in\lang{\deriv{\symbola}{\regexs}}$ and $\wordw\in\lang{\deriv{\symbola}{\regext}}$. 
      By induction $\exists\literall'\in\getFirstPlus{\regexs}:~ \wordw\in\lang{\nderiv{\literall}{\regexs}}$ and 
      $\exists\literall''\in\getFirstPlus{\regext}:~ \wordw\in\lang{\nderiv{\literall''}{\regext}}$.
      Let $\literall=\literall'\literalcap\literall''\in\getFirstPlus{\regexs\regexAndOp\regext}$. If $\symbola\in{\literall'}$ and $\symbola\in{\literall'}$ then $\symbola\in{\literall}$.
      The chain holds because $\getFirstPlus{\regexs\regexAndOp\regext}=\{\literall'\literalcap\literall'' ~|~ \literall'\in\getFirstPlus{\regexs}, \literall''\in\getFirstPlus{\regext} \}$ and $\nderiv{\literall}{\regexs\regexAndOp\regext}=\nderiv{\literall}{\regexs}\regexAndOp\nderiv{\literall}{\regext}$, and $\wordw\in\lang{\nderiv{\literall}{\regexs}}$ and $\wordw\in\lang{\nderiv{\literall}{\regext}}$ implies $\wordw\in\lang{\nderiv{\literall}{\regexs\regexAndOp\regext}}$.

    \item[Case] $\regexr=(\regexNeg\regexs)$, $\getFirstPlus{\regexr}=\literalbigcap\{\inv{\literall}~|~\literall\in\getFirstPlus{\regexs}\}~\literaljoin~\{\literall\in\getFirstPlus{\regexs}~|~\nderiv{\literall}{\regexs}\neq\regexWords\}$:~ \\
      We obtain that $\wordw\in\lang{\deriv{\symbola}{\regexNeg\regexs}}=\regexWords\setminus\lang{\deriv{\symbola}{\regexs}}$ implies $\wordw\not\in\lang{\deriv{\symbola}{\regexs}}$. By induction $\exists\literall'\in\getFirstPlus{\regexs}:~ \wordw\in\lang{\nderiv{\literall}{\regexs}}$.
      Let $\literall=\literalbigcap\{\inv{\literall}~|~\literall\in\getFirstPlus{\regexs}\}~\literaljoin~\{\literall\in\getFirstPlus{\regexs}~|~\nderiv{\literall}{\regexs}\neq\regexWords\}$. If $\lang{\deriv{\symbola}{\regexNeg\regexs}}\neq\emptyset$ implies $\lang{\deriv{\symbola}{\regexs}}\neq\regexWords$.
      The chain holds because $\getFirstPlus{\regexNeg\regexs}=\literalbigcap\{\inv{\literall}~|~\literall\in\getFirstPlus{\regexs}\}~\literaljoin~\{\literall\in\getFirstPlus{\regexs}~|~\nderiv{\literall}{\regexs}\neq\regexWords\}$ and $\nderiv{\literall}{\regexNeg\regexs}=\regexNeg\nderiv{\literall}{\regexs}$, and $\wordw\not\in\lang{\nderiv{\literall}{\regexs}}$ implies $\wordw\in\lang{\nderiv{\literall}{\regexNeg\regexs}}$.

%  ___                  _   
% / __|___ _ _  __ __ _| |_ 
%| (__/ _ \ ' \/ _/ _` |  _|
% \___\___/_||_\__\__,_|\__|
    \item[Case] $\regexr=(\regexs\regexConcatOp\regext)$:~
      \begin{description}
        \item[Subcase] $\isNullable{\regexs}$, $\getFirstPlus{\regexr}=\getFirstPlus{\regexs}\literaljoin\getFirstPlus{\regext}$:~\\
          We obtain that $\wordw\in\lang{\deriv{\symbola}{\regexs\regexConcatOp\regext}}=\lang{\deriv{\symbola}{\regexs}\regexConcatOp\regext}\cup\lang{\deriv{\symbola}{\regext}}$ implies $\wordw\in\lang{\deriv{\symbola}{\regexs}\regexConcatOp\regext}$ or $\wordw\in\lang{\deriv{\symbola}{\regext}}$.
          By induction $\exists\literall'\in\getFirstPlus{\regexs}:~ \wordu\in\lang{\nderiv{\literall}{\regexs}}$ and 
          $\exists\literall''\in\getFirstPlus{\regext}:~ \wordv\in\lang{\nderiv{\literall''}{\regext}}$.
          The chain holds because $\getFirstPlus{\regexs\regexConcatOp\regext}= \getFirstPlus{\regexs}\literaljoin\getFirstPlus{\regext}$ and $\nderiv{\literall}{\regexs\regexConcatOp\regext}=(\nderiv{\literall}{\regexs}\regexConcatOp\regext)\regexOrOp\nderiv{\literall}{\regext}$, and $\wordu\in\lang{\nderiv{\literall}{\regexs}}$ and $\wordv\in\lang{\regext}$ implies $\wordw=\wordu\concat\wordv\in\lang{\nderiv{\literall}{\regexs\regexConcatOp\regext}}$ or $\wordw=\regexEmpty\concat\wordv\in\lang{\nderiv{\literall}{\regexs\regexConcatOp\regext}}$.

        \item[Subcase] $\neg\isNullable{\regexs}$, $\getFirstPlus{\regexr}=\getFirstPlus{\regexs}$:~\\
          We obtain that $\wordw\in\lang{\deriv{\symbola}{\regexs\regexConcatOp\regext}}=\lang{\deriv{\symbola}{\regexs}\regexConcatOp\regext}$ implies $\wordw\in\lang{\deriv{\symbola}{\regexs}\regexConcatOp\regext}$.
          By induction $\exists\literall'\in\getFirstPlus{\regexs}:~ \wordu\in\lang{\nderiv{\literall}{\regexs}}$.
          The chain holds because $\getFirstPlus{\regexs\regexConcatOp\regext}= \getFirstPlus{\regexs}$ and $\nderiv{\literall}{\regexs\regexConcatOp\regext}=\nderiv{\literall}{\regexs}\regexConcatOp\regext$, and $\wordu\in\lang{\nderiv{\literall}{\regexs}}$ and $\wordv\in\lang{\nderiv{\literall}{\regext}}$ implies $\wordw=\wordu\concat\wordv\in\lang{\nderiv{\literall}{\regexs\regexConcatOp\regext}}$.
      \end{description}
  \end{description}
\end{proof}

\newpage
\section{Lemma~\ref{thm:equivalence}: Equivalence}

\begin{lemma}[Equivalence]\label{thm:equivalence}
  For all \ERE{} $\regexr$, literals $\literall\in\getFirst{\regexr}$, and literals $\literall'$ with $\literall'\subseteq\literall~|~\literall'\neq\emptyset$ holds:
  \begin{align*}
    \lang{\pderiv{\literall}{\regexr}}~\Leftrightarrow~\lang{\pderiv{\literall'}{\regexr}}\\
    \lang{\nderiv{\literall}{\regexr}}~\Leftrightarrow~\lang{\nderiv{\literall'}{\regexr}}\\
  \end{align*}
\end{lemma}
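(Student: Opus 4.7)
The plan is to prove the two equalities as a direct consequence of Theorem~\ref{thm:firstderivative} and Lemma~\ref{thm:derivatives}, after first establishing a monotonicity/antitonicity lemma for the positive and negative derivative operators in their literal argument.

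First I would prove, by simultaneous induction on $\regexr$, the following auxiliary fact: for any literals $\literall_1 \subseteq \literall_2$,
\begin{align*}
  \lang{\pderiv{\literall_1}{\regexr}} \subseteq \lang{\pderiv{\literall_2}{\regexr}}
  \quad\text{and}\quad
  \lang{\nderiv{\literall_2}{\regexr}} \subseteq \lang{\nderiv{\literall_1}{\regexr}}.
\end{align*}
The base cases follow directly from the definitions: $\literall_1 \literalcap \seta \ne \bot$ entails $\literall_2 \literalcap \seta \ne \bot$, and $\literall_2 \subseteq \seta$ entails $\literall_1 \subseteq \seta$. The cases for union, concatenation, intersection, and Kleene star are routine because the derivative operators distribute over the corresponding language operations and set-theoretic containment is preserved under union, intersection, and concatenation. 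The negation case is where monotonicity and antitonicity get swapped: since $\pderiv{\literall}{\regexNeg{\regexr}} = \regexNeg{\nderiv{\literall}{\regexr}}$ and $\nderiv{\literall}{\regexNeg{\regexr}} = \regexNeg{\pderiv{\literall}{\regexr}}$, we need both induction hypotheses together and use that complementation reverses containment.

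With this auxiliary lemma, the main claim follows as a short chain. Fix any $\symbola \in \literall'$; since $\literall \in \getFirst{\regexr}$ and $\literall' \subseteq \literall$ is non-empty, $\literall \ne \emptyset$, and Theorem~\ref{thm:firstderivative} gives $\lang{\pderiv{\literall}{\regexr}} = \lang{\deriv{\symbola}{\regexr}}$. Lemma~\ref{thm:derivatives} yields $\lang{\pderiv{\literall'}{\regexr}} \supseteq \lang{\deriv{\symbola}{\regexr}}$, and the auxiliary lemma yields $\lang{\pderiv{\literall'}{\regexr}} \subseteq \lang{\pderiv{\literall}{\regexr}}$. These three inclusions collapse to equality. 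The negative case is symmetric: antitonicity gives $\lang{\nderiv{\literall}{\regexr}} \subseteq \lang{\nderiv{\literall'}{\regexr}}$, while Lemma~\ref{thm:derivatives} gives $\lang{\nderiv{\literall'}{\regexr}} \subseteq \bigcap_{\symbolb \in \literall'} \lang{\deriv{\symbolb}{\regexr}}$, and Lemma~\ref{lemma:partial-equivalence} shows that this intersection equals the common value $\lang{\deriv{\symbola}{\regexr}} = \lang{\nderiv{\literall}{\regexr}}$, forcing equality.

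I expect the main obstacle to be the negation case of the auxiliary lemma: the positive and negative operators must be carried simultaneously through the induction because complementation flips the direction of inclusion, which is precisely the design reason for defining $\pderiv$ and $\nderiv$ as a flipping pair in Section~\ref{sec:derivatives}. Once the monotonicity/antitonicity pair is in place, no further induction on $\regexr$ is needed for the main statement.
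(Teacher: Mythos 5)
Your proposal is correct, but it takes a genuinely different route from the paper. The paper first invokes Theorem~\ref{thm:firstderivative} to identify $\pderiv{\literall}{\regexr}$ with $\nderiv{\literall}{\regexr}$ on next literals, and then runs a direct structural induction on $\regexr$ for the claim $\lang{\nderiv{\literall}{\regexr}}=\lang{\nderiv{\literall'}{\regexr}}$; each case of that induction has to reason about how $\getFirst{\regexr}$ is assembled (joins for union and nullable concatenation, pairwise intersections for $\regexAndOp$, the complement literal for $\regexNegOp$) and how a subset $\literall'$ of a joined literal projects onto next literals of the subexpressions --- this is where the paper's case analysis becomes intricate. You instead factor the problem into (i) a monotonicity/antitonicity lemma for $\pderiv{}{}$ and $\nderiv{}{}$ in the literal argument, proved by a simultaneous induction that never mentions $\getFirst{}$ at all (the flip at negation being exactly why both statements must be carried together), and (ii) a sandwich argument combining Lemma~\ref{thm:derivatives}, Theorem~\ref{thm:firstderivative}, and Lemma~\ref{lemma:partial-equivalence}, with no further induction. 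Your decomposition buys a cleaner and more reusable induction, and it also handles the $\Delta$-half of the statement on an equal footing with the $\nabla$-half: note that the paper's opening reduction ($\nderiv{\literall}{\regexr}=\pderiv{\literall}{\regexr}$) is only licensed by Theorem~\ref{thm:firstderivative} for $\literall\in\getFirst{\regexr}$, not for arbitrary non-empty sub-literals $\literall'$, so the paper's proof as written really only establishes the $\nabla$-equality, whereas your upper bound $\lang{\pderiv{\literall'}{\regexr}}\subseteq\lang{\pderiv{\literall}{\regexr}}$ from monotonicity closes precisely that gap. What the paper's approach buys in exchange is modest: a single self-contained induction without an auxiliary lemma.
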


\begin{proof}[Proof of Lemma \ref{thm:preservation}]
  Suppose $\getFirst{\regexr}\neq\{\emptyset\}$.
  Because $\nderiv{\literall}{\regexr}$=$\pderiv{\literall}{\regexr}$ for all $\literall\in\getFirst{\regexr}$ show $\lang{\nderiv{\literall}{\regexr}}=\lang{\nderiv{\literall'}{\regexr}}$. Proof by induction on $\regexr$.

  \begin{description}

% ___            _         ___         _               _   
%| __|_ __  _ __| |_ _  _ / __|___ _ _| |_ __ _ _ _ __| |_ 
%| _|| '  \| '_ \  _| || | (__/ _ \ ' \  _/ _` | '_/ _|  _|
%|___|_|_|_| .__/\__|\_, |\___\___/_||_\__\__,_|_| \__|\__|
%          |_|       |__/                                  
    \item[Case] $\regexr=\regexEmpty$, $\getFirst{\regexr}=\{\emptyset\}$: Contradicts assumption.

% ___          ___     
%| _ \___ __ _| __|_ __
%|   / -_) _` | _|\ \ /
%|_|_\___\__, |___/_\_\
%        |___/         
    \item[Case] $\regexr=\seta$, $\getFirst{\regexr}=\{\seta\}$:~\\
      Claim holds because for all $\literall'\subseteq\seta~\Rightarrow~\nderiv{\literall'}{\regexr}=\nderiv{\literall}{\regexr}=\regexEmpty$ and thus $\lang{\nderiv{\literall}{\regexr}}=\lang{\nderiv{\literall'}{\regexr}}$.  

% _  ___                  ___ _            
%| |/ / |___ ___ _ _  ___/ __| |_ __ _ _ _ 
%| ' <| / -_) -_) ' \/ -_)__ \  _/ _` | '_|
%|_|\_\_\___\___|_||_\___|___/\__\__,_|_|  
    \item[Case] $\regexr=\regexStar{\regexs}$, $\getFirst{\regexr}=\getFirst{\regexs}$:~\\
      We obtain that $\lang{\nderiv{\literall}{\regexStar{\regexs}}}=\lang{\nderiv{\literall}{\regexs}\regexConcatOp\regexStar{\regexs}}\neq\regexNull$. By induction $\forall\literall_\regexs\in\getFirst{\regexs},\literall_\regexs'\subset\literall_\regexs:$ $\lang{\nderiv{\literall_\regexs}{\regexs}}=\lang{\nderiv{\literall_\regexs'}{\regexs}}$.
      The chain holds because $\getFirst{\regexStar{\regexs}}=\getFirst{\regexs}$ and $\lang{\nderiv{\literall'}{\regexStar{\regexs}}}=\lang{\nderiv{\literall'}{\regexs}\regexConcatOp\regexs\regexStarOp}$.

% _              _         _  ___      
%| |   ___  __ _(_)__ __ _| |/ _ \ _ _ 
%| |__/ _ \/ _` | / _/ _` | | (_) | '_|
%|____\___/\__, |_\__\__,_|_|\___/|_|  
%          |___/                       
    \item[Case] $\regexr=(\regexs\regexOrOp\regext)$, $\getFirst{\regexr}=\getFirst{\regexs} \literaljoin \getFirst{\regext}$:~\\
      We obtain that $\lang{\nderiv{\literall}{\regexs\regexOrOp\regext}}=\lang{\nderiv{\literall}{\regexs}}\cup\lang{\nderiv{\literall}{\regext}}\neq\regexNull$. 
      By induction $\forall\literall_\regexs\in\getFirst{\regexs},\literall_\regexs'\subset\literall_\regexs:$ $\lang{\nderiv{\literall_\regexs}{\regexs}}=\lang{\nderiv{\literall_\regexs'}{\regexs}}$ and $\forall\literall_\regext\in\getFirst{\regext},\literall_\regext'\subset\literall_\regext:$ $\lang{\nderiv{\literall_\regext}{\regext}}=\lang{\nderiv{\literall_\regext'}{\regext}}$.
      The chain holds because $\getFirst{\regexs\regexOrOp\regext}=\getFirst{\regexs} \literaljoin \getFirst{\regext}$ and $\forall\literall''\in\getFirst{\regexr}\cup\getFirst{\regexs}:~\exists\literall'''\in\getFirst{\regexs\regexOrOp\regext}:~\literall'''\subseteq\literall''$ and $\lang{\nderiv{\literall'}{\regexs\regexOrOp\regext}}=\lang{\nderiv{\literall'}{\regexs}}\cup\lang{\nderiv{\literall'}{\regext}}$.

% _              _         _   _           _ 
%| |   ___  __ _(_)__ __ _| | /_\  _ _  __| |
%| |__/ _ \/ _` | / _/ _` | |/ _ \| ' \/ _` |
%|____\___/\__, |_\__\__,_|_/_/ \_\_||_\__,_|
%          |___/                             
    \item[Case] $\regexr=(\regexs\regexAndOp\regext)$, $\getFirst{\regexr}=\{ \literall'\literalcap\literall'' ~|~ \literall'\in\getFirst{\regexs}, \literall''\in\getFirst{\regext} \} $:~ \\
      We obtain that $\lang{\nderiv{\literall}{\regexs\regexAndOp\regext}}=\lang{\nderiv{\literall}{\regexs}}\cap\lang{\nderiv{\literall}{\regext}}\neq\regexNull$.
      By induction $\forall\literall_\regexs\in\getFirst{\regexs},\literall_\regexs'\subset\literall_\regexs:$ $\lang{\nderiv{\literall_\regexs}{\regexs}}=\lang{\nderiv{\literall_\regexs'}{\regexs}}$ and $\forall\literall_\regext\in\getFirst{\regext},\literall_\regext'\subset\literall_\regext:$ $\lang{\nderiv{\literall_\regext}{\regext}}=\lang{\nderiv{\literall_\regext'}{\regext}}$.   
      Let $\literall=\literall_\regexs\literalcap\literall_\regext\in\getFirst{\regexs\regexAndOp\regext}$. If $\literall'\subseteq\literall$ then $\literall'\subseteq\literall_\regexs$ and $\literall'\subseteq\literall_\regext$.
      The chain holds because $\getFirst{\regexs\regexAndOp\regext}=\getFirst{\regexs} \sqcap \getFirst{\regext}$ and $\lang{\nderiv{\literall'}{\regexs\regexAndOp\regext}}=\lang{\nderiv{\literall'}{\regexs}}\cap\lang{\nderiv{\literall'}{\regext}}$.

    \item[Case] $\regexr=(\regexNeg\regexs)$, $\getFirst{\regexr}=\literalbigcap\{\inv{\literall}~|~\literall\in\getFirst{\regexs}\}~\cup~\{\literall\in\getFirst{\regexs}\}$:~ \\
      We obtain that $\lang{\nderiv{\literall}{\regexNeg\regexs}}=\regexWords\setminus\lang{\nderiv{\literall}{\regexs}}$.
      By induction $\forall\literall_\regexs\in\getFirst{\regexs},\literall_\regexs'\subset\literall_\regexs:$ $\lang{\nderiv{\literall_\regexs}{\regexs}}=\lang{\nderiv{\literall_\regexs'}{\regexs}}$.
      The chain holds because $\getFirst{\regexNeg\regexs}=\literalbigcap\{\inv{\literall}~|~\literall\in\getFirst{\regexs}\}~\cup~\{\literall\in\getFirst{\regexs}\}$ and $\lang{\nderiv{\literall'}{\regexNeg\regexs}}=\regexWords\setminus\lang{\nderiv{\literall'}{\regexs}}$ and for all $\literall''\not\in\getFirst{\regexs}$ $\nderiv{\literall''}{\regexs}=\emptyset$.

%  ___                  _   
% / __|___ _ _  __ __ _| |_ 
%| (__/ _ \ ' \/ _/ _` |  _|
% \___\___/_||_\__\__,_|\__|
    \item[Case] $\regexr=(\regexs\regexConcatOp\regext)$:~
      \begin{description}
        \item[Subcase] $\isNullable{\regexs}$, $\getFirst{\regexr}=\getFirst{\regexs}\literaljoin\getFirst{\regext}$:~\\
          We obtain that $\lang{\nderiv{\literall}{\regexs\regexConcatOp\regext}}=\lang{\nderiv{\literall}{\regexs}\regexConcatOp\regext}\cup\lang{\nderiv{\literall}{\regext}}$.
          By induction $\forall\literall_\regexs\in\getFirst{\regexs},\literall_\regexs'\subset\literall_\regexs:$ $\lang{\nderiv{\literall_\regexs}{\regexs}}=\lang{\nderiv{\literall_\regexs'}{\regexs}}$ and $\forall\literall_\regext\in\getFirst{\regext},\literall_\regext'\subset\literall_\regext:$ $\lang{\nderiv{\literall_\regext}{\regext}}=\lang{\nderiv{\literall_\regext'}{\regext}}$.
          The chain holds because $\getFirst{\regexs\regexOrOp\regext}=\getFirst{\regexs} \literaljoin \getFirst{\regext}$ and $\forall\literall''\in\getFirst{\regexr}\cup\getFirst{\regexs}:~\exists\literall'''\in\getFirst{\regexs\regexOrOp\regext}:~\literall'''\subseteq\literall''$ and $\lang{\nderiv{\literall'}{\regexs\regexConcatOp\regext}}=\lang{\nderiv{\literall'}{\regexs}\regexConcatOp\regext}\cup\lang{\nderiv{\literall'}{\regext}}$.

        \item[Subcase] $\neg\isNullable{\regexs}$, $\getFirst{\regexr}=\getFirst{\regexs}$:~\\
          We obtain that $\lang{\nderiv{\literall}{\regexs\regexConcatOp\regext}}=\lang{\nderiv{\literall}{\regexs}\regexConcatOp\regext}$.
          By induction $\forall\literall_\regexs\in\getFirst{\regexs},\literall_\regexs'\subset\literall_\regexs:$ $\lang{\nderiv{\literall_\regexs}{\regexs}}=\lang{\nderiv{\literall_\regexs'}{\regexs}}$.
          The chain holds because $\getFirst{\regexs\regexConcatOp\regext}=\getFirst{\regexs}$ and $\lang{\nderiv{\literall'}{\regexs\regexConcatOp\regext}}=\lang{\nderiv{\literall'}{\regexs}\regexConcatOp\regext}$.

      \end{description}
  \end{description}
\end{proof}

\newpage
\section{Containment Example}
\label{sec:sample-calculation}

\begin{example}[Containment]\label{ex:containment}
  Consider the regular expressions $\regexr=(\regexOr{(\regexOr{\symbola}{\symbolb})}{\symbolc})$ and $\regexs=(\regexOr{\symbola}{\symbolb})$ and the inequality $\isSubSetOf{\regexr}{\regexs}$ which is obviously invalid. The computation of one derivation step is as follows:
  \begin{align}
    \isSubSetOf{\nderiv{\literall}{\regexr}}{\nderiv{\literall}{\regexs}}
    ~&\Leftrightarrow~ \isSubSetOf{\nderiv{\literall}{(\regexOr{(\regexOr{\symbola}{\symbolb})}{\symbolc})}}{\nderiv{\literall}{(\regexOr{\symbola}{\symbolb})}} \\
    ~&\Leftrightarrow~ \isSubSetOf
    {(\regexOr{\nderiv{\literall}{(\regexOr{\symbola}{\symbolb})}}{\nderiv{\literall}{\symbolc}})}
    {(\regexOr{\nderiv{\literall}{\symbola}}{\nderiv{\literall}{\symbolb}})}\\
    ~&\Leftrightarrow~ \isSubSetOf
    {(\regexOr{(\regexOr{\nderiv{\literall}{\symbola}}{\nderiv{\literall}{\symbolb}})}{\nderiv{\literall}{\symbolc}})}
    {(\regexOr{\nderiv{\literall}{\symbola}}{\nderiv{\literall}{\symbolb}})}
  \end{align}
  To solve the inequality $\isSubSetOf{\regexr}{\regexs}$ the inequality gets derived in respect to the next literals of $\isSubSetOf{\regexr}{\regexs}$. The calculation of $\getFirst{\isSubSetOf{\regexr}{\regexs}}$ is split into several sub-calculation concerning to the calculation of $\firstLiterals$.

  \begin{align}
    \getFirst{\regexr}
    ~&\Leftrightarrow~\getFirst{\regexOr{(\regexOr{\symbola}{\symbolb})}{\symbolc}}\\
    ~&\Leftrightarrow~\getFirst{\regexOr{\symbola}{\symbolb}} \literaljoin \getFirst{\symbolc}\\
    ~&\Leftrightarrow~(\getFirst{\symbola} \literaljoin \getFirst{\symbolb}) \literaljoin \getFirst{\symbolc}\\
    ~&\Leftrightarrow~(\{\symbola\} \literaljoin \{\symbolb\}) \literaljoin \{\symbolc\}\\
    ~&\Leftrightarrow~\{\symbola,\symbolb,\symbolc\}
  \end{align}

  \begin{align}
    \getFirst{\regexs}
    ~&\Leftrightarrow~\getFirst{\regexOr{\symbola}{\symbolb}}\\
    ~&\Leftrightarrow~\getFirst{\symbola} \literaljoin \getFirst{\symbolb}\\
    ~&\Leftrightarrow~\{\symbola\} \literaljoin \{\symbolb\}\\
    ~&\Leftrightarrow~\{\symbola,\symbolb\}
  \end{align}

  \begin{align}
    \getFirst{\regexNeg\regexs}
    ~&\Leftrightarrow~\literalbigcap\{\inv{\literall}~|~\literall\in\getFirst{\regexs}\}~\literaljoin~\{\literall\in\getFirst{\regexr}\}\\
    ~&\Leftrightarrow~\literalbigcap\{\inv{\symbola},\inv{\symbolb}\}~\literaljoin~\{\symbola,\symbolb\}\\
    ~&\Leftrightarrow~\{\inv{\{\symbola, \symbolb\}}\}~\literaljoin~\{\symbola,\symbolb\}\\
    ~&\Leftrightarrow~\{\inv{\{\symbola, \symbolb\}},\symbola,\symbolb\}
  \end{align}

  \begin{align}
    \getFirst{\isSubSetOf{\regexr}{\regexs}}
    ~&\Leftrightarrow~
    \getFirst{\regexAnd{\regexr}{\regexNeg\regexs}}\\
    ~&\Leftrightarrow~\{\literall\literalcap\literall'~|~\literall\in\getFirst{\regexr}, \literall'\in\getFirst{\regexNeg\regexs}\}\\
    ~&\Leftrightarrow~\{\literall\literalcap\literall'~|~\literall\in\{\symbola,\symbolb,\symbolc\}, \literall'\in\{\inv{\{\symbola, \symbolb\}},\symbola,\symbolb\}\}\\
    ~&\Leftrightarrow~\{\symbola,\symbolb,\symbolc\}
  \end{align}

  Finally, the inequality gets derived in respect to the next literals.

  \begin{align}
    \forall\literall\in\getFirst{\regexr}:~ \isSubSetOf{\nderiv{\literall}{\regexr}}{\nderiv{\literall}{\regexs}} ~|~ \getFirst{\regexr} = \{\symbola, \symbolb, \symbolc\}
  \end{align}

  This results in three iterations:

  \begin{align}
    \isSubSetOf{\nderiv{\symbola}{\regexr}}{\nderiv{\symbola}{\regexs}}
    ~&\Leftrightarrow~ \isSubSetOf{\nderiv{\symbola}{(\regexOr{(\regexOr{\symbola}{\symbolb})}{\symbolc})}}{\nderiv{\symbola}{(\regexOr{\symbola}{\symbolb})}} \\
    ~&\Leftrightarrow~ \isSubSetOf
    {(\regexOr{\nderiv{\symbola}{(\regexOr{\symbola}{\symbolb})}}{\nderiv{\symbola}{\symbolc}})}
    {(\regexOr{\nderiv{\symbola}{\symbola}}{\nderiv{\symbola}{\symbolb}})}\\
    ~&\Leftrightarrow~ \isSubSetOf
    {(\regexOr{(\regexOr{\nderiv{\symbola}{\symbola}}{\nderiv{\symbola}{\symbolb}})}{\nderiv{\symbola}{\symbolc}})}
    {(\regexOr{\nderiv{\symbola}{\symbola}}{\nderiv{\symbola}{\symbolb}})}\\
    ~&\Leftrightarrow~ \isSubSetOf
    {(\regexOr{(\regexOr{\regexEmpty}{\regexNull})}{\regexNull})}
    {(\regexOr{\regexEmpty}{\regexNull})}\\
    ~&\Leftrightarrow~ \isSubSetOf
    {(\regexOr{\regexEmpty}{\regexNull})}
    {\regexEmpty}\\
    ~&\Leftrightarrow~ \isSubSetOf
    {\regexEmpty}
    {\regexEmpty}
  \end{align}

  \begin{align}
    \isSubSetOf{\nderiv{\symbolb}{\regexr}}{\nderiv{\symbolb}{\regexs}}
    ~&\Leftrightarrow~ \isSubSetOf{\nderiv{\symbolb}{(\regexOr{(\regexOr{\symbola}{\symbolb})}{\symbolc})}}{\nderiv{\symbolb}{(\regexOr{\symbola}{\symbolb})}} \\
    ~&\Leftrightarrow~ \isSubSetOf
    {(\regexOr{\nderiv{\symbolb}{(\regexOr{\symbola}{\symbolb})}}{\nderiv{\symbolb}{\symbolc}})}
    {(\regexOr{\nderiv{\symbolb}{\symbola}}{\nderiv{\symbolb}{\symbolb}})}\\
    ~&\Leftrightarrow~ \isSubSetOf
    {(\regexOr{(\regexOr{\nderiv{\symbolb}{\symbola}}{\nderiv{\symbolb}{\symbolb}})}{\nderiv{\symbolb}{\symbolc}})}
    {(\regexOr{\nderiv{\symbolb}{\symbola}}{\nderiv{\symbolb}{\symbolb}})}\\
    ~&\Leftrightarrow~ \isSubSetOf
    {(\regexOr{(\regexOr{\regexNull}{\regexEmpty})}{\regexNull})}
    {(\regexOr{\regexNull}{\regexEmpty})}\\
    ~&\Leftrightarrow~ \isSubSetOf
    {(\regexOr{\regexEmpty}{\regexNull})}
    {\regexEmpty}\\
    ~&\Leftrightarrow~ \isSubSetOf
    {\regexEmpty}
    {\regexEmpty}
  \end{align}

  \begin{align}
    \isSubSetOf{\nderiv{\symbolc}{\regexr}}{\nderiv{\symbolc}{\regexs}}
    ~&\Leftrightarrow~ \isSubSetOf{\nderiv{\symbolc}{(\regexOr{(\regexOr{\symbola}{\symbolb})}{\symbolc})}}{\nderiv{\symbolc}{(\regexOr{\symbola}{\symbolb})}} \\
    ~&\Leftrightarrow~ \isSubSetOf
    {(\regexOr{\nderiv{\symbolc}{(\regexOr{\symbola}{\symbolb})}}{\nderiv{\symbolc}{\symbolc}})}
    {(\regexOr{\nderiv{\symbolc}{\symbola}}{\nderiv{\symbolc}{\symbolb}})}\\
    ~&\Leftrightarrow~ \isSubSetOf
    {(\regexOr{(\regexOr{\nderiv{\symbolc}{\symbola}}{\nderiv{\symbolc}{\symbolb}})}{\nderiv{\symbolc}{\symbolc}})}
    {(\regexOr{\nderiv{\symbolc}{\symbola}}{\nderiv{\symbolc}{\symbolb}})}\\
    ~&\Leftrightarrow~ \isSubSetOf
    {(\regexOr{(\regexOr{\nderiv{\symbolc}{\symbola}}{\nderiv{\symbolc}{\symbolb}})}{\nderiv{\symbolc}{\symbolc}})}
    {(\regexOr{\nderiv{\symbolc}{\symbola}}{\nderiv{\symbolc}{\symbolb}})}\\
    ~&\Leftrightarrow~ \isSubSetOf
    {(\regexOr{(\regexOr{\regexNull}{\regexNull})}{\regexEmpty})}
    {(\regexOr{\regexNull}{\regexNull})}\\
    ~&\Leftrightarrow~ \isSubSetOf
    {(\regexOr{\regexNull}{\regexEmpty})}
    {\regexNull}\\
    ~&\Leftrightarrow~ \isSubSetOf
    {\regexEmpty}
    {\regexNull}
  \end{align}

\end{example}

\end{document}